\documentclass[12pt,reqno] {amsart}

 \usepackage{amsmath, wasysym}
 \usepackage{bbm, dsfont}
 \usepackage{amssymb}
\usepackage{graphicx}%
\usepackage{hyperref}
\usepackage{cleveref}
\usepackage{comment}
 \usepackage{mathrsfs}
 \usepackage{float}
 \usepackage{tikz, tikz-cd}
  \usetikzlibrary{arrows.meta, knots, calc}
 \usepackage{bm}
 \usepackage{mathtools} 
\usepackage{extarrows} 
\usepackage{bbm}

 \usepackage{enumerate}
  \usepackage{mathdots}
\usetikzlibrary{decorations.markings}
\usetikzlibrary{decorations.pathreplacing}

\theoremstyle{plain}
\newtheorem{theorem}{Theorem}[section]

\newtheorem{lemma}[theorem]{Lemma}
\newtheorem{proposition}[theorem]{Proposition}
\newtheorem*{ac}{Acknowledgements}

\newtheorem{remark}[theorem]{Remark}

\newtheorem{example}[theorem]{Example}

\newcommand{\bC}{\mathbb{C}}

\newcommand{\bR}{\mathbb{R}}
\newcommand{\cL}{\mathcal{L}}

\newcommand{\cN}{\mathcal{N}}

\newcommand{\cG}{\mathcal{G}}

\newcommand{\bE}{\mathbb{E}}
\newcommand{\cI}{\mathcal{I}}
\newcommand{\sP}{\mathscr{P}}
\newcommand{\Tr}{\mathrm{Tr}}
\newcommand{\Id}{\mathrm{Id}}

\newcommand{\bfM}{\mathbf{M}}

\newcommand{\KMS}{\mathrm{KMS}}

\newcommand{\Ent}{\mathrm{Ent}}
\newcommand{\diag}{\mathrm{diag}}

\allowdisplaybreaks[2]
\renewcommand{\Re}{\operatorname{Re}}

\newcommand{\bfW}{\mathbf{W}}
\newcommand{\bfO}{\mathbf{O}}
\newcommand{\bfR}{\mathbf{R}}
\newcommand{\cl}{\mathrm{class}}

\newcommand{\rev}{\mathrm{rev}}

\title[Quantum Reverse Hypercontractivity]{Quantum Reverse Hypercontractivity for KMS-Symmetric Quantum Markov Semigroups}
\author{JINSONG WU}
\date{}

\begin{document}

\begin{abstract}
In this paper, we construct primitive KMS-symmetric quantum Markov semigroups for which the $p$-logarithmic Sobolev constant vanishes at $p=1/2$ and a correlated graph-based example for which the $p$-logarithmic Sobolev constant tends to $-\infty$ as $p\to 0^+$.
These examples obstruct a positive uniform reverse-hypercontractivity rate on $(0,1)$. 
By contrast, we show that every primitive KMS-symmetric quantum Markov semigroup satisfies quantum reverse hypercontractivity for $1/2< p\leq q < 1$ with explicit sufficient times.
For each fixed $\delta>0$, a sufficiently strong dephasing perturbation satisfies a uniform positive logarithmic Sobolev inequality on $[\delta, 1]$ and hence quantum reverse hypercontractivity for $\delta< p\leq q<1$ with fixed $0<\delta<1/2$. 
Furthermore, we establish reverse hypercontractivity for graph-based KMS-symmetric quantum Markov semigroups for $\delta< p\leq q< 1$ with fixed $0<\delta<1$ for connected graphs when $n \geq 3$.
\end{abstract}

\maketitle

\section{Introduction}

Quantum Markov semigroups describe the evolution of open quantum systems under the Markovian approximation \cite{Dav74} in the Heisenberg picture. 
Hypercontractivity initiated by Nelson \cite{Nel66} is a fundamental property in the theory of classical Markov semigroups. 
In 1975, Gross\cite{Gross75} proved the equivalence of the hypercontractivity and logarithmic Sobolev inequality.
Mossel, Oleszkiewicz, and Sen \cite{MOS13} studied classical reverse hypercontractivity and its applications. 
Noncommutative hypercontractivity was established by Olkiewicz and Zegarlinski \cite{OlkZeg99}, who proved the equivalence between the logarithmic Sobolev inequality and hypercontractivity.
Later, King \cite{King14} proved hypercontractivity for a family of tracially symmetric quantum Markov semigroups, where the tracial symmetry means that the semigroup is self-adjoint with respect to the trace.
The quantum Markov semigroup is GNS-symmetric (or KMS-symmetric) if the semigroup is self-adjoint with respect to the GNS-inner product (or KMS-inner product respectively). (See \cite{CarMaa17,GaoRou22,FagUma10, VerWir23, Wir26} and the references therein.)
In \cite{GaoWang26, WanWu26}, it is shown that any primitive KMS-symmetric quantum Markov semigroup has  hypercontractivity.  

Another direction in the study of hypercontractivity is reverse hypercontractivity, which was initiated by Cubitt, Kastoryano, Montanaro, and Temme \cite{CKMT15}. 
Beigi, Datta, and Rouzé\cite{BDR20} proved quantum reverse hypercontractivity for GNS-symmetric quantum Markov semigroups via the quantum Stroock–Varopoulos inequality. 
Their work establishes the equivalence between quantum reverse hypercontractivity and the logarithmic Sobolev inequality for $0<p<1$. 
However, their method cannot be directly extended to more general settings, such as KMS-symmetric quantum Markov semigroups. 
Thus, quantum reverse hypercontractivity for KMS-symmetric quantum Markov semigroups remains unknown.

In this paper, we show that quantum reverse hypercontractivity does not always hold for KMS-symmetric quantum Markov semigroups. In fact, the $p$-contractivity of KMS-symmetric quantum Markov semigroups fails for $0<p<1/2$. 
We find that the $p$-logarithmic Sobolev inequality fails at $p=1/2$. 
Moreover, we show that the $p$-logarithmic Sobolev inequality constant tends to $-\infty$ as $p\to 0^+$. 
Hence, it is important to find a proper characterization of quantum reverse hypercontractivity for KMS-symmetric quantum Markov semigroups. We show that quantum reverse hypercontractivity holds for $1/2<p\leq q<1$. 
Moreover, any primitive KMS-symmetric quantum Markov semigroup perturbed by a dephasing generator satisfies quantum reverse hypercontractivity for $\delta\leq  p\leq q<1$ when $0<\delta<1$ is fixed. 
We also obtain quantum reverse hypercontractivity for primitive KMS-symmetric quantum Markov semigroups arising from undirected finite graphs.

The paper is organized as follows.
In Section 2, we recall the basic notations for noncommutative $L^p$ spaces on matrix algebras and the quantum Markov semigroups.
In Section 3, we present a number of criteria that a primitive KMS-symmetric quantum Markov semigroup fail to satisfy the usual quantum reverse hypercontractivity.
Meanwhile, we obtain examples for each criterion.
In Section 4, we obtain some sufficient conditions for the quantum reverse hypercontractivity and obtain quantum reverse hypercontractivity for $1/2<p\leq q\leq 1$.
In Section 5, we show that any primitive KMS-symmetric quantum Markov semigroup perturbed by a dephasing generator has quantum reverse hypercontractivity for $\delta<p\leq q<1$ with $0<\delta<1$ fixed.
In Section 6, we obtain a quantum reverse hypercontractivity for graph-based KMS-symmetric quantum Markov semigroups.

	\begin{ac}
		The paper is based on the numerous communications with ChatGPT Pro 6. 
		J. Wu was supported by grants from Beijing Institute of Mathematical Sciences and Applications.
		J.~Wu was supported by NSFC (Grant no. 12371124). 
	\end{ac}

\section{Preliminaries}

In this section, we shall recall basic facts about $L^p$-spaces for matrix algebras and the primitive KMS-symmetric quantum Markov semigroups.

Suppose that $X\in M_n(\bC)$ and $\rho$ is a faithful state on $M_n(\bC)$ with density $D_\rho$.
Suppose that $s_1, \ldots, s_n$ are eigenvalues of $D_\rho$ and $\displaystyle s_-=\min_{1\leq j \leq n} s_j$, $\displaystyle s_+=\max_\{1\leq j \leq n\} s_j$.
The $p$-functional of $X>0$ for $0\neq p \in \bR$ is defined by
\begin{align*}
\|X\|_{p, \rho} =\left| \Tr\left(\left|D_\rho^{\frac{1}{2p}} X D_\rho^{\frac{1}{2p}} \right|^p\right) \right|^{1/p}.
\end{align*}
Note that $\|\cdot\|_{p, \rho}$ is a norm for $1\leq p <\infty$.
The $p$-functional $\|\cdot\|_{p, \rho}$ satisfies that for $X>0$, 
\begin{align}\label{eq:normeq}
    \|X\|_{-p,\rho}^{-p}=\|X^{-1}\|_{p,\rho}^{p}.
\end{align}
The reverse H\"{o}lder inequality (see \cite[Lemma 1]{BDR20}) is for any $p<1$, $p\neq 0$,$X, Y>0$, 
\begin{align}\label{eq:revholder}
    \langle X, Y \rangle _{\rho} \geq \|X\|_{p, \rho} \|Y\|_{\frac{p}{p-1}, \rho},
\end{align}
where $\langle \cdot, \cdot \rangle_\rho$ is the KMS-inner product, i.e. $\langle X, Y\rangle_{\rho}=\Tr(X^*D_\rho^{1/2} Y D_\rho^{1/2})$ for all $X, Y\in M_n(\bC)$.
The noncommutative martingale convexity inequality of Ricard and Xu \cite[Theorem~1]{RicXu16} to the $\rho$-preserving conditional expectation $X\mapsto\rho(X)I$, we obtain, for any $X\in M_n(\bC)$ and $2\leq p<\infty$, 
	\begin{align}\label{eq:mart}
		\|X\|_{p, \rho}^2 \leq |\rho(X)|^2+(p-1) \|X-\rho(X)I\|_{p, \rho}^2,
	\end{align}
and for $1\leq p<2$,
	\begin{align}\label{eq:mart1}
		\|X\|_{p, \rho}^2 \geq |\rho(X)|^2+(p-1) \|X-\rho(X)I\|_{p, \rho}^2.
	\end{align}
% The hypercontractivity states that  there exists $\beta_{rev}>0$ such that 
% \begin{align*}
% \|\Phi_t(X)\|_{q, \rho} \geq \|X\|_{p, \rho}, \quad 0 < q \leq p <1, \quad t\geq \frac{1}{\beta_{rev}}\log \frac{1-q}{1-p}
% \end{align*}
For any $0<X\in M_n(\bC)$ and $p, q\neq 0$, the weighted power map $I_{p,q}$ is
\begin{align*}
    I_{p,q}(X) = D_{\rho}^{-\frac{1}{2p}}\left( D_{\rho}^{\frac{1}{2q}} X D_{\rho}^{\frac{1}{2q}}\right)^{q/p} D_{\rho}^{-\frac{1}{2p}}.
\end{align*}
We have that $\|I_{q, p}(X)\|_{q, \rho}^q =\|X\|_{p, \rho}^p$ for all $p, q\neq 0$ and $I_{q, r} I_{r, p}=I_{q, p}$ for all $p, r, q\neq 0$.

The noncommutative Jensen inequality implies that for any unital completely positive map $\Phi$ and positive definite matrix $D$,
\begin{align}\label{eq:jensen}
   \Phi(D^{-1})\geq \Phi(D)^{-1}. 
\end{align}

Next, we shall recall the entropy $\Ent_{p, \rho}$ and its properties.
For $-\infty< p< \infty$ with $p\neq 0$, let
\begin{equation}\label{eq:entropy}
\begin{aligned}
    \Ent_{p, \rho}(X)=&  \Tr\left(\left|D_\rho^{\frac{1}{2p}}XD_\rho^{\frac{1}{2p}}\right|^p \left(\log \left| D_\rho^{\frac{1}{2p}}XD_\rho^{\frac{1}{2p}}\right|^p -\log D_\rho\right)\right) \\
    & - \|X\|_{p, \rho}^p\log \|X\|_{p, \rho}^p.
\end{aligned}
\end{equation}
We have that $\displaystyle \frac{d}{dp}\log \|X\|_{p,\rho}=\frac{\Ent_{p, \rho}(X)}{p^2\|X\|_{p, \rho}^p}$.
By a direct reformulation, we have that for $X>0$,
 \begin{align}\label{eq:entropy2}
     \Ent_{1, \rho}(I_{1, p}(X))=\Ent_{p, \rho}(X).
 \end{align}
By Equation \eqref{eq:normeq}, we have that 
\begin{align}\label{eq:entropyeq}
    \Ent_{-p, \rho}(X)=\Ent_{p, \rho}(X^{-1}).
\end{align}
For $X\neq 0$, we have that 
\begin{align*}
    \Ent_{p, \rho}(X)= \|X\|_{p, \rho}^p  H \left( \left.\frac{|D_\rho^{\frac{1}{2p}}XD_\rho^{\frac{1}{2p}}|^p}{\|X\|_{p, \rho}^p} \right\| D_\rho \right).
\end{align*}
where $H(X\|D)=\Tr(X(\log X-\log D))$ is the relative entropy.

Suppose that $\{\Phi_t\}_{t\geq 0}$ is a quantum Markov semigroup.
We shall write $\Phi_t=e^{-t\cL}$, where $\cL$ is the generator of the semigroup.
The GKSL representation \cite{GKS76,Lin76} of the generator is as follows:
\begin{align}\label{eq:gksl}
\cL(X)=i[H, X] +\sum_{j=1}^m \frac{1}{2}\{V_j^*V_j, X\}-\sum_{j=1}^m V_j^*XV_j,
\end{align}
where $H\in M_n(\bC)$ is a Hermitian matrix and $V_j\in M_n(\bC)$ for $j=1, \ldots,m $.
Suppose that $X>0$.
We have that
\begin{align}\label{eq:gksl2}
  X\cL(X^{-1})X +\cL(X) =-\sum_{j=1}^m [V_j , X]^* X^{-1} [V_j , X]. 
\end{align}

A quantum Markov semigroup $\{\Phi_t\}_{t\geq 0}$ is called
	primitive if it admits a faithful invariant state and
	$\ker(\cL)=\bC I$. Equivalently, it has a unique invariant state
	$\rho$, which is faithful and satisfies $\rho\Phi_t=\rho$
	for all $t\geq 0$.
	
	A quantum Markov semigroup $\{\Phi_t\}_{t\geq 0}$ is KMS-symmetric with respect to a faithful state $\rho$ if for all $t\geq 0$ and $X,Y\in M_n(\bC)$,
	\begin{align}
		\Tr(X^* D_{\rho}^{1/2}\Phi_t(Y)D_\rho^{1/2})=\Tr(\Phi_t(X)^* D_{\rho}^{1/2}YD_\rho^{1/2}),
	\end{align}
	where $D_\rho$ is the density matrix of $\rho$, i.e. $\rho(\cdot)=\Tr(D_\rho \cdot)$.
	This implies that $\{\Phi_t\}_{t\geq 0}$ is in equilibrium with respect to $\rho$.
	For a quantum Markov semigroup admitting a faithful invariant state,
	we denote by $\bE_\Phi$ its ergodic projection onto $\ker(\cL)$.
The Fisher information $\cI_{\cL}$ of the semigroup is
	\begin{align*}
		\cI_{\cL}(D)=\Tr(\cL^*(D)(\log D-\log \bE_{\Phi}^*(D))), 
	\end{align*}
	where $D$ is a faithful density matrix in $M_n(\bC)$ and $\cL^*$ is the adjoint of $\cL$ with respect to $\Tr$.
	When $\{\Phi_t\}_{t\geq 0}$ is primitive, we have that $\cI_{\cL}(D)=\Tr(\cL^*(D)(\log D-\log D_\rho))$.

For $X>0$, the $p$-Dirichlet form $\mathcal{E}_{p, \cL}$ with respect to $\cL$ is 
\begin{align}\label{eq:pvariance}
    \mathcal{E}_{p, \cL}(X)=\frac{p^2}{4(p-1)}\left\langle I_{\frac{p}{p-1}, p}(X), \cL(X)\right\rangle_{\rho}, \quad p\neq 1.
\end{align}
When $p=1$, 
\begin{align}\label{eq:dirfish}
   \mathcal{E}_{1, \cL}(X)=\frac{1}{4}\cI_{\cL}(D_\rho^{1/2} X D_\rho^{1/2}).
\end{align}
For $p\neq 1$ and $X>0$, we have that 
\begin{align*}
    \left.\frac{d}{dt} \log \|\Phi_t(X)\|_{p, \rho}\right|_{t=0} =\frac{4(1-p)}{p^2\|X\|_{p, \rho}^p}\mathcal{E}_{p, \cL}(X).
\end{align*}

The logarithmic Sobolev inequality constant $\alpha_p(\cL)$ for $0<p<\infty$ is defined to be
\begin{align*}
    \alpha_p(\cL) =\inf_{0<X\notin \bC I} \frac{\mathcal{E}_{p, \cL}(X)}{\Ent_{p,\rho}(X)}.
\end{align*}
The modified logarithmic Sobolev inequality states that 
\begin{align*}
    \cI_{\cL}(D)\geq \alpha_{MLSI} H(D\| D_\rho),
\end{align*}
where $\alpha_{MLSI}$ is the optimal constant.
This shows that 
\begin{align}\label{eq:mlsi1}
\mathcal{E}_{1, \cL}(X) \geq \frac{\alpha_{MLSI}}{4}\Ent_{1, \rho}(X).
\end{align}
By Proposition 10 in \cite{BDR20}, we have that
\begin{align}\label{eq:lsiconjugate}
\alpha_p(\cL)=\alpha_{\frac{p}{p-1}}(\cL), \quad p\neq 0,1.
\end{align}

Now we recall some facts about the hypercontractivity.
Note that $\|\Phi_t(X)\|_{p, \rho} \geq \|X\|_{p, \rho}$ for all $X>0$ (which is reverse $p$-contractive) if and only if $\displaystyle \left. \frac{d}{dt}\log \| \Phi_t(X)\|_{p, \rho}\right|_{t=0}\geq 0$ for every $X>0$.
In \cite{BDR20}, they showed that any primitive quantum Markov semigroup is (reverse) $p$-contractivity for $p\in (-\infty, -1]\cup [1/2, \infty)$.
Let 
\begin{align*}
    \beta_{\rev}=\inf_{0< p\leq 1} \alpha_p(\cL).
\end{align*}
Theorem 11 in \cite{BDR20} states that if $\beta_{\rev}>0$, then 
\begin{align}\label{eq:revhyper}
    \|\Phi_t(X)\|_{p, \rho} \geq \|X\|_{q, \rho}, \quad 0<p\leq q< 1,
\end{align}
whenever $\displaystyle t\geq \frac{1}{4\beta_{\rev}}\log \frac{1-p}{1-q}$.

\section{Failure of Quantum Reverse Hypercontractivity}

In this section, we show that $p$-logarithmic Sobolev inequality could fail for primitive KMS-symmetric quantum Markov semigroups when $\displaystyle 0<p\leq \frac{1}{2}$.
Moreover, the corresponding reverse $p$-contractivity fails for $0< p <1/2$.
This leads the failure of quantum reverse hypercontractivity for primitive KMS-symmetric quantum Markov semigroups.

\begin{proposition}\label{prop:lsiconstant1}
Suppose that $\{\Phi_t\}_{t\geq 0}$ is a primitive KMS-symmetric quantum Markov semigroup.
We have that for $p >1$,
\begin{align}
\alpha_{\frac{p}{p+1}}(\cL)\geq \frac{p-1}{p+1}\alpha_p(\cL).
\end{align}
\end{proposition}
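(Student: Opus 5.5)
The plan is to pass through the negative exponent $-p$. For $p>1$ set $r=\frac{p}{p+1}\in(\tfrac12,1)$. Its conjugate exponent is $\frac{r}{r-1}=-p$. Applying \eqref{eq:lsiconjugate}, read with $\alpha_{-p}(\cL)$ defined by the same infimum formula using $\mathcal{E}_{-p,\cL}$ from \eqref{eq:pvariance} and $\Ent_{-p,\rho}$ from \eqref{eq:entropy}, gives $\alpha_{r}(\cL)=\alpha_{-p}(\cL)$. So it suffices to show
\begin{align*}
\mathcal{E}_{-p,\cL}(X)\geq \frac{p-1}{p+1}\,\alpha_p(\cL)\,\Ent_{-p,\rho}(X)\qquad\text{for all }0<X\notin\bC I .
\end{align*}
By \eqref{eq:entropyeq} we have $\Ent_{-p,\rho}(X)=\Ent_{p,\rho}(X^{-1})$. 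The target therefore reduces to the Dirichlet form comparison $\mathcal{E}_{-p,\cL}(X)\geq \frac{p-1}{p+1}\mathcal{E}_{p,\cL}(X^{-1})$, after which we apply the definition of $\alpha_p(\cL)$ to $X^{-1}$.

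To get this comparison I would use the semigroup and the Jensen inequality \eqref{eq:jensen}. Since $\Phi_t$ is unital completely positive, $(\Phi_t(X))^{-1}\leq \Phi_t(X^{-1})$. The map $A\mapsto \Tr\big((D_\rho^{\frac{1}{2p}}AD_\rho^{\frac{1}{2p}})^p\big)$ is monotone on positive matrices, because congruence preserves order and $\Tr f$ is monotone for increasing $f$. Combining this with \eqref{eq:normeq} gives
\begin{align*}
\|\Phi_t(X)\|_{-p,\rho}^{-p}=\|(\Phi_t(X))^{-1}\|_{p,\rho}^{p}\leq \|\Phi_t(X^{-1})\|_{p,\rho}^{p},\qquad t\geq 0,
\end{align*}
with equality at $t=0$. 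Hence the right derivatives at $t=0$ compare in the same direction.

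Next I would compute both derivatives from the formula for $\frac{d}{dt}\log\|\Phi_t(X)\|_{p,\rho}$ at $t=0$. For an exponent $s\neq 0,1$ this gives $\frac{d}{dt}\|\Phi_t(Y)\|_{s,\rho}^{s}\big|_{t=0}=\frac{4(1-s)}{s}\mathcal{E}_{s,\cL}(Y)$. With $s=-p$ and $Y=X$ the left side has derivative $-\frac{4(1+p)}{p}\mathcal{E}_{-p,\cL}(X)$. With $s=p$ and $Y=X^{-1}$ the right side has derivative $\frac{4(1-p)}{p}\mathcal{E}_{p,\cL}(X^{-1})$. The inequality between derivatives then rearranges to
\begin{align*}
\mathcal{E}_{-p,\cL}(X)\geq \frac{p-1}{p+1}\,\mathcal{E}_{p,\cL}(X^{-1})\geq \frac{p-1}{p+1}\,\alpha_p(\cL)\,\Ent_{p,\rho}(X^{-1}),
\end{align*}
where the second step uses $X^{-1}\notin\bC I$ and $\Ent_{p,\rho}\geq 0$. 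Dividing by $\Ent_{-p,\rho}(X)=\Ent_{p,\rho}(X^{-1})$ and taking the infimum over $X$ yields the claim.

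The main obstacle I expect is bookkeeping rather than analysis. One must check that \eqref{eq:lsiconjugate} (Proposition 10 of \cite{BDR20}) is legitimately available at the negative conjugate exponent $-p$ in the KMS-symmetric setting. This is where KMS-symmetry enters, since $\mathcal{E}_{p,\cL}$ is defined through the KMS inner product and the conjugation identity exchanges the two slots. One must also confirm that the derivative formula holds for negative exponents with the stated constants. If the conjugation step needs justification, I would verify directly that $I_{\frac{r}{r-1},r}$ intertwines $\mathcal{E}_{r,\cL}$ and $\mathcal{E}_{-p,\cL}$ up to the factor relating $\frac{r^2}{4(r-1)}$ and $\frac{p^2}{4(-p-1)}$, using the self-adjointness of $\cL$ in $\langle\cdot,\cdot\rangle_\rho$ and the relation $\Ent_{r,\rho}(X)=\Ent_{1,\rho}(I_{1,r}(X))$ from \eqref{eq:entropy2}.
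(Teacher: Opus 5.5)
Your proposal is correct and follows the paper's proof. Both compare $\|\Phi_t(X)\|_{-p,\rho}$ with $\|\Phi_t(X^{-1})\|_{p,\rho}^{-1}$ and differentiate at $t=0$ to get $\mathcal{E}_{-p,\cL}(X)\geq\frac{p-1}{p+1}\mathcal{E}_{p,\cL}(X^{-1})$, then conclude via \eqref{eq:entropyeq} and \eqref{eq:lsiconjugate}. There are two minor differences: you derive the norm comparison directly from \eqref{eq:jensen} and the monotonicity of $A\mapsto\Tr\bigl((D_\rho^{\frac{1}{2p}}AD_\rho^{\frac{1}{2p}})^p\bigr)$ rather than citing Appendix A of \cite{BDR20}, and your concern about \eqref{eq:lsiconjugate} resolves exactly as you suggest. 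The intertwining by $I_{-p,r}$ needs only KMS self-adjointness of $\cL$, and the prefactors $\frac{r^2}{4(r-1)}$ and $\frac{p^2}{4(-p-1)}$ are in fact equal.
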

\begin{proof}
By Appendix A in \cite{BDR20}, we have that 
\begin{align}\label{eq:pnorm}
  \|\Phi_t(D)\|_{-p, \rho}  \geq \|\Phi_t(D^{-1})\|_{p, \rho}^{-1}.
\end{align}
By taking the differentiation $\log \|\Phi_t(D)\|_{p, \rho}$ with respect to $t$ at $t=0$, we obtain that 
\begin{align*}
   \left. \frac{d}{dt}\right|_{t=0} \log \|\Phi_t(D)\|_{p, \rho}
   =\frac{4(1-p)}{p^2\|D
   \|_{p, \rho}^p}\mathcal{E}_{p, \cL}(D).
\end{align*}
Note that inequality \eqref{eq:pnorm} is an equality when $t=0$.
Hence
\begin{align}
    \mathcal{E}_{-p, \rho}(D)\geq \frac{p-1}{p+1}\mathcal{E}_{p, \cL}(D^{-1}).
\end{align}
By Equation \eqref{eq:entropyeq}, we have that
\begin{align}\label{eq:mlsi32}
    \alpha_{-p}(\cL) \geq \frac{p-1}{p+1} \alpha_p(\cL).
\end{align}
By Equation \eqref{eq:lsiconjugate}, we have that $\alpha_{\frac{p}{p+1}}(\cL) =\alpha_{-p}(\cL)$.
Combining with Equation \eqref{eq:mlsi32}, we see that the proposition is true.
\end{proof}

\begin{proposition}\label{prop:lsiconstant2}
Suppose that $\{\Phi_t\}_{t\geq 0}$ is a primitive KMS-symmetric quantum Markov semigroup.
We have that for $p >1$,
\begin{align*}
    \alpha_{\frac{p}{p+1}}(\cL)\geq \frac{1}{2} \frac{p-1}{p+1} \frac{\lambda_{\KMS}}{\log s_-^{-1}+2},
\end{align*}
where $\displaystyle \lambda_{\KMS}=\inf_{X\neq 0, \rho(X)=0} \frac{\langle X, \cL(X)\rangle _\rho}{\langle X, X \rangle_\rho}$.
\end{proposition}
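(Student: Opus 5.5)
By Proposition \ref{prop:lsiconstant1}, it suffices to show that for every $p>1$ the log-Sobolev constant satisfies
\begin{align*}
\alpha_p(\cL)\geq \frac{1}{2}\,\frac{\lambda_{\KMS}}{\log s_-^{-1}+2}.
\end{align*}
Combined with $\alpha_{\frac{p}{p+1}}(\cL)\geq \frac{p-1}{p+1}\alpha_p(\cL)$, this gives the stated bound. By the duality \eqref{eq:lsiconjugate} we may restrict to $p\geq 2$ if convenient. The plan is the standard spectral gap $\Rightarrow$ log-Sobolev route on a finite-dimensional algebra, in which the entropy is bounded by the variance times a logarithmic factor controlled by $s_-$.

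\textbf{Step 1: entropy versus variance.} For $X>0$ set $Y=I_{2,p}(X)$, so that $\|Y\|_{2,\rho}^2=\|X\|_{p,\rho}^p$. Using \eqref{eq:entropy2} together with the relative-entropy form of $\Ent$, I aim to prove
\begin{align*}
\Ent_{p,\rho}(X)\leq (\log s_-^{-1}+2)\,\bigl(\|Y\|_{2,\rho}^2-|\rho(Y)|^2\bigr).
\end{align*}
The route is to write the entropy as a relative entropy $H(\sigma\|D_\rho)$ and use $H(\sigma\|D_\rho)\leq \log\Tr(\sigma^2 D_\rho^{-1})$, or equivalently $H\leq \chi^2$-type bounds. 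Passing from the $\chi^2$ quantity to the KMS variance $\langle Y-\rho(Y),Y-\rho(Y)\rangle_\rho$ costs the factor $\log s_-^{-1}$, which comes from $-\log D_\rho\leq \log s_-^{-1}$. The additive $2$ absorbs the elementary inequality $a\log a - a+1\leq (a-1)^2$-type slack together with the normalization. A cleaner alternative is a Rothaus-type lemma, $\Ent_{2}(Y)\leq \Ent_2(Y-\rho(Y))+2\,\mathrm{Var}(Y)$, followed by the crude bound $\Ent_2(Z)\leq \log s_-^{-1}\,\|Z\|_{2,\rho}^2$, which uses $H(\sigma\|D_\rho)\leq \log s_-^{-1}$ for states $\sigma$. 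This already produces exactly the constant $\log s_-^{-1}+2$, so I would follow this version.

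\textbf{Step 2: Dirichlet form versus spectral gap.} Show $\mathcal{E}_{p,\cL}(X)\geq \frac12\langle Y-\rho(Y),\cL(Y-\rho(Y))\rangle_\rho\geq \frac12\lambda_{\KMS}\,\mathrm{Var}(Y)$. For $p=2$ this is exact, without the factor $\frac12$. For general $p$ I would invoke a KMS Stroock--Varopoulos-type comparison $\mathcal{E}_{p,\cL}(X)\geq c\,\mathcal{E}_{2,\cL}(I_{2,p}(X))$. Since such a comparison is not available in the KMS setting, the factor $\frac12$ presumably arises from handling the non-commutation of $D_\rho^{1/2}$ with $X$, for instance via the martingale inequalities \eqref{eq:mart}--\eqref{eq:mart1} applied to $p$-norms. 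I expect this step to be the main obstacle. Failing a direct comparison, I would instead differentiate $\|\Phi_t(X)\|_{p,\rho}$ and combine the $L^2$ spectral-gap decay $\|\Phi_t(Y)-\rho(Y)\|_{2,\rho}\leq e^{-\lambda_{\KMS} t}\|Y-\rho(Y)\|_{2,\rho}$ with \eqref{eq:mart}. This yields $p$-norm decay, i.e.\ hypercontractive-type control with rate $\lambda_{\KMS}/2$, and by the Gross-type equivalence it gives $\alpha_p$ bounded below in terms of $\lambda_{\KMS}$ and the entropy constant from Step 1.

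\textbf{Step 3: combination.} Combining Steps 1 and 2 gives $\alpha_p(\cL)\geq \frac{\lambda_{\KMS}}{2(\log s_-^{-1}+2)}$, and Proposition \ref{prop:lsiconstant1} then finishes the proof.
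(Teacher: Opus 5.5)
\textbf{There is a genuine gap in Step~2 for $p\neq 2$.} Your top-level reduction is the paper's own: Proposition~\ref{prop:lsiconstant1} plus the bound $\alpha_p(\cL)\geq \frac{\lambda_{\KMS}}{2(\log s_-^{-1}+2)}$ for all $p>1$. The paper does not prove that bound; it cites the log-Sobolev inequality for primitive KMS-symmetric semigroups from \cite{WanWu26, GaoWang26}. Since you instead try to prove it, the gap matters.

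Steps~1 and~2 at $p=2$ give at best $\alpha_2(\cL)\geq \lambda_{\KMS}/(\log s_-^{-1}+2)$. Through Proposition~\ref{prop:lsiconstant1} this covers only the single exponent $\frac23$. As you acknowledge, a comparison $\mathcal{E}_{p,\cL}(X)\geq c\,\mathcal{E}_{2,\cL}(I_{2,p}(X))$ is not available. Your fallback is a reasonable starting point but stops short. Combining \eqref{eq:mart} with $L^2$ decay controls only $\|\Phi_t\|_{2\to q}$, and differentiating such a bound at $t=0$ gives a log-Sobolev inequality only at the entry exponent $2$. A Gross-type argument for $\alpha_p$ needs bounds $\|\Phi_t\|_{p\to q(t)}\leq 1$ starting at $p$. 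Knowing $\alpha_2$ alone cannot be integrated into such bounds without exactly the $L^p$-regularity that is missing in the KMS setting.

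\textbf{The missing idea is interpolation}, if you want a self-contained proof rather than the citation. From \eqref{eq:mart}, together with $\|\Phi_t(Z)\|_{2,\rho}\leq e^{-\lambda_{\KMS}t}\|Z\|_{2,\rho}$ for $\rho(Z)=0$ and $\|Z\|_{q,\rho}\leq s_-^{-(\frac12-\frac1q)}\|Z\|_{2,\rho}$, you get $\|\Phi_t\|_{2\to q(t)}\leq 1$. Here $q(t)$ is defined by $(q(t)-1)\,s_-^{-(1-2/q(t))}e^{-2\lambda_{\KMS}t}=1$, so $q'(0)=\frac{4\lambda_{\KMS}}{\log s_-^{-1}+2}$. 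This makes Step~1 unnecessary.

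The symmetric embedding $X\mapsto D_\rho^{\frac{1}{2p}}XD_\rho^{\frac{1}{2p}}$ turns $\|\cdot\|_{p,\rho}$ into a complex interpolation scale (Kosaki). Interpolating with $\|\Phi_t\|_{\infty\to\infty}\leq 1$ then gives $\|\Phi_t\|_{p\to pq(t)/2}\leq 1$ for $p\geq 2$. Differentiating at $t=0$ yields $\alpha_p(\cL)\geq \frac{p}{2(p-1)}\frac{\lambda_{\KMS}}{\log s_-^{-1}+2}\geq \frac12\frac{\lambda_{\KMS}}{\log s_-^{-1}+2}$, which is precisely the factor $\frac12$ in the statement. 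Finally, \eqref{eq:lsiconjugate} covers $1<p<2$, and Proposition~\ref{prop:lsiconstant1} finishes the proof as you planned.
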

\begin{proof}
By Proposition \ref{prop:lsiconstant1} and logarithmic Sobolv inequality in \cite{WanWu26, GaoWang26}, we see that the proposition is true.    
\end{proof}

\begin{remark}
Proposition \ref{prop:lsiconstant2} implies that $\alpha_{p}(\cL)>0$ for $\displaystyle \frac{1}{2}< p <1$.   
Combining with Theorem 7.3 in \cite{WanWu26}, we have that the $p$-logarithmic Sobolev inequality holds for primitive KMS-symmetric quantum Markov semigroup whenever $\displaystyle p>\frac{1}{2}$.
\end{remark}

\begin{remark}
    By Theorem 3.6 in \cite{WanWu26}, we have that for $\frac{1}{2}<p<1$,
    \begin{align*}
    \alpha_p(\cL) \geq \frac{(2p-1)h_\rho(\cL)^2s_-^2}{4s_+^2\|\cL\|_{\KMS,\rho}(\log s_-^{-1}+2)},
    \end{align*}
    where $\displaystyle h_\rho(\cL)=\inf_{0, I\neq P \in \sP}\frac{\langle P, \cL(P)\rangle_{\KMS,\rho}}{\min\{\rho(P), 1-\rho(P)\}}$ is the projection Dirichlet conductance defined in \cite{WanWu26}.
\end{remark}

\begin{proposition}\label{prop:halfzero}
    Suppose that $\{\Phi_t\}_{t\geq 0}$ is a primitive KMS-symmetric quantum Markov semigroup with generator \eqref{eq:gksl}.
    If there exists nonscalar positive definite matrix $X_0$ in $M_n(\bC)$ such that $[V_j, X_0]=0$ for all $j$,
    then 
    \begin{align*}
        \alpha_{1/2}(\cL)=0.
    \end{align*}
\end{proposition}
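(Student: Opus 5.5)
We take $Y=I_{-1,1/2}(X_0)$ and combine the conjugation identity \eqref{eq:lsiconjugate} with \eqref{eq:gksl2}. By \eqref{eq:lsiconjugate} with $p=1/2$ we have $\frac{p}{p-1}=-1$, so $\alpha_{1/2}(\cL)=\alpha_{-1}(\cL)$. (In the definition of $\alpha_p$ the infimum runs over $0<p<\infty$, but the conjugation formula extends the quotient to $p=-1$; we will in fact work directly with the $p=-1$ quotient, since it is the same quotient after the change of variables $I$.)

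It therefore suffices to exhibit a nonscalar $X>0$ with $\mathcal{E}_{-1,\cL}(X)=0$ and $\Ent_{-1,\rho}(X)>0$. For $p=-1$ we have $\frac{p}{p-1}=\frac12$, and \eqref{eq:pvariance} gives
\begin{align*}
\mathcal{E}_{-1,\cL}(X)=-\tfrac18\left\langle I_{\frac12,-1}(X),\cL(X)\right\rangle_\rho .
\end{align*}
A direct computation from the definition of $I$ shows
\begin{align*}
I_{\frac12,-1}(X)=D_\rho^{-1}\bigl(D_\rho^{-1/2}XD_\rho^{-1/2}\bigr)^{-2}D_\rho^{-1}=D_\rho^{-1/2}X^{-1}D_\rho X^{-1}D_\rho^{-1/2}.
\end{align*}
Substituting into the KMS inner product $\langle A,B\rangle_\rho=\Tr(A^*D_\rho^{1/2}BD_\rho^{1/2})$, the Dirichlet form becomes
\begin{align*}
\mathcal{E}_{-1,\cL}(X)=-\tfrac18\Tr\bigl(X^{-1}D_\rho X^{-1}\cL(X)\bigr).
\end{align*}

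The naive choice $X=X_0$ does not work directly, because $\cL(X_0)=i[H,X_0]$ may be nonzero: $[V_j,X_0]=0$ only gives $\sum_j(\tfrac12\{V_j^*V_j,X_0\}-V_j^*X_0V_j)=0$. We handle this through KMS-symmetry and \eqref{eq:gksl2}. Since $[V_j,X_0]=0$ for all $j$, \eqref{eq:gksl2} gives $X_0\cL(X_0^{-1})X_0+\cL(X_0)=0$. Pairing with $D_\rho^{1/2}(\cdot)D_\rho^{1/2}$ suggests using $Y=D_\rho^{1/2}X_0^{-1}D_\rho^{1/2}$ rescaled, or rather $X$ chosen so that $I_{\frac12,-1}(X)$ pairs with $\cL(X)$ like $\langle X_0^{-1},\cL(X_0)\rangle$-type expressions. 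Concretely, set
\begin{align*}
X=I_{-1,\frac12}(X_0),
\end{align*}
so that the quotient defining $\alpha_{-1}$ at $X$ equals the quotient defining $\alpha_{1/2}$ at $X_0$, by the equivalence behind \eqref{eq:lsiconjugate} (Proposition 10 of \cite{BDR20}). Then
\begin{align*}
\mathcal{E}_{1/2,\cL}(X_0)=-\tfrac1{4}\langle I_{-1,1/2}(X_0),\cL(X_0)\rangle_\rho,
\end{align*}
where $I_{-1,1/2}(X_0)=D_\rho^{1/2}(D_\rho X_0D_\rho)^{-1/2}D_\rho^{1/2}$.

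The main obstacle is to show this pairing vanishes using only $[V_j,X_0]=0$. The approach is to write $\langle\cdot,\cL(\cdot)\rangle_\rho$ via KMS-symmetry as a sum of "gradient" terms in $[V_j,\cdot]$, using the known KMS-symmetric structure of the $V_j$: they can be chosen as eigenoperators of the modular group up to mixing, so that $\Delta^{1/4}$ intertwines the $V_j$ with $V_j^*$-type partners. Each term then contains a factor $[V_j,X_0]$ and vanishes. The Hamiltonian part is handled separately: KMS-symmetry forces the $H$-contribution to be compatible, which we verify via the symmetric pairing. If the intertwining argument fails, the fallback is to choose $X_0$ itself as a function of $X_0$, e.g.\ to note that the commutant of $\{V_j\}$ is a $*$-algebra (using the $V_j^*$ from KMS-symmetry) and take a spectral projection-based positive $X_0$ also commuting with $D_\rho$ and $H$, for which $\cL(X_0)=0$. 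That would contradict primitivity unless the $[H,\cdot]$ term is nontrivial, so the care is in the Hamiltonian term.

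It remains to check the entropy side: $X_0$ is nonscalar, so $|D_\rho^{1}X_0D_\rho^{1}|^{1/2}/\|\cdot\|$ is not equal to $D_\rho$. Hence $\Ent_{1/2,\rho}(X_0)>0$, the quotient is $0$, and the infimum, which is nonnegative by the $p$-contractivity results of \cite{BDR20}, equals $0$.
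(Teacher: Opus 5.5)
Your reduction to $p=-1$ and your formula $\mathcal{E}_{-1,\cL}(X)=-\tfrac18\Tr(X^{-1}D_\rho X^{-1}\cL(X))$ are correct. The gap is that you then test with the wrong matrix, and the vanishing you call the ``main obstacle'' is false in general. Testing $\alpha_{-1}$ at $I_{-1,1/2}(X_0)$ is the same as testing $\alpha_{1/2}$ at $X_0$ itself. By KMS symmetry and \eqref{eq:gksl2}, for every $Y>0$,
\begin{align*}
\mathcal{E}_{1/2,\cL}(Y)=\frac18\sum_{j=1}^m\Tr\bigl(D_\rho[V_j,W]^*W^{-1}[V_j,W]\bigr),\qquad W=D_\rho^{-1/2}(D_\rho YD_\rho)^{1/2}D_\rho^{-1/2}.
\end{align*}
So $\mathcal{E}_{1/2,\cL}(X_0)=0$ requires $W$, not $X_0$, to commute with every $V_j$. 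That holds when $[X_0,D_\rho]=0$, since then $W=X_0^{1/2}$, but not in general. In the paper's $3\times 3$ example the zero set of $\mathcal{E}_{1/2,\cL}$ is $\{D_\rho^{-1/2}ZD_\rho ZD_\rho^{-1/2}: Z=\diag(x,x,y)>0\}$. The matrix $X_0=\diag(1,1,2)$ is not in this set: matching the $(1,1)$ and $(2,2)$ entries of $ZD_\rho Z$ and $D_\rho^{1/2}X_0D_\rho^{1/2}$ forces both $x^2=6/5$ and $x^2=18/17$. Your modular-intertwining sketch therefore aims at a false identity. Your fallback is also impossible, for the reason you give yourself: a nonscalar $X_0$ with $\cL(X_0)=0$ contradicts primitivity.

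The missing idea is to test $p=-1$ at $X_0^{-1}$ instead. Your formula and cyclicity of the trace give
\begin{align*}
\mathcal{E}_{-1,\cL}(X_0^{-1})=-\tfrac18\Tr\bigl(D_\rho X_0\cL(X_0^{-1})X_0\bigr)=\tfrac18\Tr\bigl(D_\rho\cL(X_0)\bigr)=\tfrac18\Tr\bigl(\cL^*(D_\rho)X_0\bigr)=0.
\end{align*}
This uses exactly the identity $X_0\cL(X_0^{-1})X_0=-\cL(X_0)$ you already derived, together with the invariance $\cL^*(D_\rho)=0$. The Hamiltonian needs no separate treatment, because $\Tr(D_\rho\cL(X_0))=0$ whatever $[H,X_0]$ is. Moreover $\Ent_{-1,\rho}(X_0^{-1})=\Ent_{1,\rho}(X_0)>0$ by \eqref{eq:entropyeq}, since $X_0$ is nonscalar.

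Pulling back through the conjugation map gives the test matrix $I_{1/2,-1}(X_0^{-1})=D_\rho^{-1/2}X_0D_\rho X_0D_\rho^{-1/2}=|D_\rho^{1/2}X_0D_\rho^{-1/2}|^2$. This is exactly the matrix the paper uses directly at $p=1/2$; it is the displayed identity with $W=X_0$. That identity also gives $\mathcal{E}_{1/2,\cL}\geq 0$ outright, so $\alpha_{1/2}(\cL)=0$.
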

\begin{proof}
First of all, we shall show that for any $X>0$,
\begin{align}\label{eq:halfe}
    \mathcal{E}_{1/2, \cL}(|D_\rho^{1/2} X D_\rho^{-1/2}|^2)
    =\frac{1}{8} \sum_{j=1}^m \Tr\left( D_\rho [V_j, X]^* X^{-1}[V_j, X]\right).
\end{align}
Following the definition and the KMS symmetry, we have that 
\begin{align*}
    \mathcal{E}_{1/2, \cL}(|D_\rho^{1/2} X D_\rho^{-1/2}|^2)
    =& -\frac{1}{8}\Tr\left(D_\rho^{-1/2}X^{-1}D_\rho^{-1/2} D_\rho \cL(|D_\rho^{1/2} X D_\rho^{-1/2}|^2) D_\rho\right) \\
    =& -\frac{1}{8}\Tr\left( D_\rho X \cL(X^{-1})X\right).
\end{align*}
Now applying Equation \eqref{eq:gksl2} and $\cL^*(D_\rho)=0$, we see that 
\begin{align*}
  \mathcal{E}_{1/2, \cL}(|D_\rho^{1/2} X D_\rho^{-1/2}|^2)
  =& \frac{1}{8}\Tr\left( D_\rho  \sum_{j=1}^m [V_j, X]^* X^{-1}[V_j, X]\right) +\frac{1}{8}\Tr(D_\rho \cL(X))\\
  =& \frac{1}{8}\Tr\left( D_\rho  \sum_{j=1}^m [V_j, X]^* X^{-1}[V_j, X]\right).
\end{align*}
This completes the proof of the identity \eqref{eq:halfe}.

Finally, we see that if there exists $X_0> 0$ such that $[V_j, X_0]=0$ for all $j=1, \ldots, m$, then $\mathcal{E}_{1/2, \cL}(|D_\rho^{1/2} X_0 D_\rho^{-1/2}|^2)=0$.
Note that $\Ent_{1/2,\rho}(|D_\rho^{1/2} X_0 D_\rho^{-1/2}|^2)>0$.
We see that $\alpha_{1/2}(\cL)=0$.
\end{proof}

\begin{example}
In the following, we shall find a primitive KMS-symmetric quantum Markov semigroup with $\alpha_{1/2}(\cL)=0$.
Let
\begin{align*}
D_\rho^{1/2}=\frac{1}{\sqrt{94}}\begin{pmatrix}
    2 & 0 & 1 \\ 0 & 8 & 2\\ 1 & 2 & 4
\end{pmatrix}, \quad
D_\rho=\frac{1}{94}\begin{pmatrix}
    5 & 2 & 6 \\ 2 & 68 & 24\\ 6 & 24 & 21
\end{pmatrix}, \\
V= \begin{pmatrix}
    0 & 1 & 0 \\ 4 & 0 & 0\\ 0 & 0 & 2
\end{pmatrix},\quad
H=\frac{i}{37}\begin{pmatrix}
    0 & -9 & 40 \\ 9 & 0 & -10 \\ -40 & 10 & 0
\end{pmatrix}.
\end{align*}
The semigroup generator $\cL$ is given by
\begin{align}
    \cL(X)=-i[H, X]+\frac{1}{2}\{V^*V, X\}-V^*XV.
\end{align}
By a direct computation, we have that 
\begin{align*}
    VD_\rho^{1/2}=D_\rho^{1/2}V^*, \quad \left(\frac{1}{2}V^*V+iH\right)D_\rho^{1/2}= D_\rho^{1/2}\left(\frac{1}{2}V^*V-iH\right).
\end{align*}
This shows that $\cL^*(D_\rho^{1/2}XD_\rho^{1/2})=D_\rho^{1/2}\cL(X)D_\rho^{1/2}$ and the semigroup is KMS-symmetric.
% Note that $H\neq 0$. It is not GNS-symmetric.

Suppose that $X_0\in M_n(\bC)$ and $\cL(X_0)=0$.
Then by the gradient form and faithfulness of $\rho$, we have that $[V, X_0]=[V^*, X_0]=0$.
This indicates that $X_0$ is $\diag(x,x,y)$, where $x, y\in \bC$.
For such $X_0$, $\cL(X_0)=-i[H, X_0]$, hence $[H,X_0]=0$.
It shows that $X_0$ is a multiple of the identity.
Hence the semigroup is primitive KMS-symmetric.

By taking 
\begin{align*}
    X_0=\begin{pmatrix}
    1 & 0 & 0 \\ 0 & 1 & 0 \\ 0 & 0 & 2
\end{pmatrix},
\end{align*}
we see that $[V, X_0]=0$.
By Proposition \ref{prop:halfzero}, we see that $\alpha_{1/2}(\cL)=0$.
Moreover for $\varepsilon>0$ small enough, we have that
\begin{align*}
    \left\|\Phi_{t_\varepsilon}(|D_\rho^{1/2} X_0 D_\rho^{-1/2}|^2)\right\|_{1/2, \rho} < \left\||D_\rho^{1/2} X_0 D_\rho^{-1/2}|^2\right\|_{q_\varepsilon, \rho},
\end{align*}
where $\displaystyle t_\varepsilon=-\frac{1}{4\beta}\log(1-2\varepsilon)$ for some $\beta>0$ and $\displaystyle q_\varepsilon =\frac{1}{2}+\varepsilon$, which contradicts with the proposed reverse hypercontractivity estimate for small $\varepsilon$.
\end{example}

In the following, we shall analyze the logarithmic Sobolev inequality constant at $p=0$.
To proceed this analysis, we shall define three Schur multipliers $\bfW_{p, \rho}, \bfM_{\rho}, \bfR_{p, \rho}$ as follows.
Suppose that $\{E_{jk}\}_{j,k=1}^n$ is the system of matrix units. 
Let
\begin{align*}
\bfW_{p, \rho} (E_{jk}) =w_p(s_j, s_k) E_{jk},\quad 
\bfM_\rho (E_{jk}) =\min\{s_j, s_k\} E_{jk}, \quad
\bfR_{p,\rho} (E_{jk}) =r_p(s_j, s_k) E_{jk},
\end{align*}
 where 
\begin{align*}
w_p(s,t) =\frac{s^{1/p}t^{1/p}}{p-1} \frac{s^{1-1/p}-t^{1-1/p}}{s^{1/p}-t^{1/p}}, \quad s\neq t; \quad 
w_p(s,s)=s ,
\end{align*}
and
\begin{align*}
r_p(s, t) = s^{1/p}t^{1/p}\frac{(\log s-\log t)(s-t)}{p^2 (s^{1/p}-t^{1/p})^2}, \quad 
r_p(s,s)= s
\end{align*}
Note that $\displaystyle \lim_{p\to 0^+} w_p(s_j s_k) E_{jk}= \min\{s_j, s_k\} E_{jk}$ and $\displaystyle \lim_{p\to 0^+} r_p(s_j s_k) E_{jk}= 0$ for $s_j\neq s_k$.
We have that $\displaystyle \lim_{p\to 0^+} \bfW_{p, \rho}=\bfM_{\rho}$.
Moreover, we have that the three Schur multipliers are completely positive maps.

\begin{proposition}\label{prop:mlsibd1}
Suppose that $0\neq X=X^*$ with $\rho(X)=0$. 
Then for any $0<p<1$, we have that 
\begin{align}\label{eq:mlsibd1}
\alpha_p(\cL) \leq \frac{\Re \Tr(X \bfW_{p, \rho} \cL(X))  }{2\Tr(X \bfR_{p, \rho}(X))}.
\end{align}
\end{proposition}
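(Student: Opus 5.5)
The plan is to linearize both functionals in the definition of $\alpha_p(\cL)$ around the identity. Since $\alpha_p(\cL)$ is an infimum, it suffices to exhibit one family of test matrices along which the ratio tends to the right-hand side of \eqref{eq:mlsibd1}. For $X=X^*$ with $\rho(X)=0$, set $X_\varepsilon=I+\varepsilon X$. This matrix is positive definite and nonscalar for small $\varepsilon\neq 0$, so
\begin{align*}
\alpha_p(\cL)\leq \frac{\mathcal{E}_{p,\cL}(X_\varepsilon)}{\Ent_{p,\rho}(X_\varepsilon)}.
\end{align*}
I will show that $\mathcal{E}_{p,\cL}(X_\varepsilon)=\frac{p^2\varepsilon^2}{4}\Re\Tr(X\bfW_{p,\rho}\cL(X))+O(\varepsilon^3)$ and $\Ent_{p,\rho}(X_\varepsilon)=\frac{p^2\varepsilon^2}{2}\Tr(X\bfR_{p,\rho}(X))+O(\varepsilon^3)$, and then let $\varepsilon\to0$. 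All computations are done in the eigenbasis of $D_\rho$, so that $D_\rho=\diag(s_1,\ldots,s_n)$ and each first-order (Daleckii--Krein) derivative of a matrix function acts as a Schur multiplier.

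\emph{Dirichlet form.} Write $p'=\frac{p}{p-1}$, so that $p/p'=p-1$. Then
\begin{align*}
I_{p',p}(X_\varepsilon)=D_\rho^{-\frac{1}{2p'}}\bigl(D_\rho^{1/p}+\varepsilon D_\rho^{\frac{1}{2p}}XD_\rho^{\frac{1}{2p}}\bigr)^{p-1}D_\rho^{-\frac{1}{2p'}}.
\end{align*}
The zeroth-order term is $I$, and $\langle I,\cL(X)\rangle_\rho=\Tr(D_\rho\cL(X))=0$ because $\rho$ is invariant. Hence only the first-order term of $I_{p',p}(X_\varepsilon)$ pairs with $\varepsilon\cL(X)$ at order $\varepsilon^2$. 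By Daleckii--Krein, its $(j,k)$ entry is
\begin{align*}
s_j^{-\frac{p-1}{2p}}s_k^{-\frac{p-1}{2p}}\,\frac{s_j^{1-1/p}-s_k^{1-1/p}}{s_j^{1/p}-s_k^{1/p}}\,s_j^{\frac{1}{2p}}s_k^{\frac{1}{2p}}\,X_{jk}.
\end{align*}
Pairing with $D_\rho^{1/2}\,\cdot\,D_\rho^{1/2}$ in the KMS inner product multiplies this by $s_j^{1/2}s_k^{1/2}$. The resulting coefficient is exactly $(p-1)w_p(s_j,s_k)$, and the diagonal case gives $(p-1)s_j$. Therefore
\begin{align*}
\mathcal{E}_{p,\cL}(X_\varepsilon)=\frac{p^2\varepsilon^2}{4}\Tr(X\bfW_{p,\rho}\cL(X))+O(\varepsilon^3).
\end{align*}
Since $\mathcal{E}_{p,\cL}(X_\varepsilon)$ is real for Hermitian $X$, which follows from KMS-symmetry, we may replace the trace by its real part.

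\emph{Entropy.} This is the step I expect to need the most care. Use the representation $\Ent_{p,\rho}(X_\varepsilon)=N_\varepsilon H(\sigma_\varepsilon\|D_\rho)$, where $N_\varepsilon=\|X_\varepsilon\|_{p,\rho}^p$ and $\sigma_\varepsilon=Y_\varepsilon^p/N_\varepsilon$ with $Y_\varepsilon=D_\rho^{1/p}+\varepsilon D_\rho^{\frac{1}{2p}}XD_\rho^{\frac{1}{2p}}$.
\begin{itemize}
\item To first order, $Y_\varepsilon^p=D_\rho+\varepsilon\Delta+O(\varepsilon^2)$, where
\begin{align*}
\Delta_{jk}=\frac{s_j-s_k}{s_j^{1/p}-s_k^{1/p}}\,s_j^{\frac{1}{2p}}s_k^{\frac{1}{2p}}X_{jk}.
\end{align*}
\item The diagonal entries of $\Delta$ are $p\,s_jX_{jj}$, so $\Tr\Delta=p\rho(X)=0$. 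Hence $N_\varepsilon=1+O(\varepsilon^2)$ and $\sigma_\varepsilon=D_\rho+\varepsilon\Delta+O(\varepsilon^2)$.
\item Relative entropy vanishes to first order at $\sigma=D_\rho$. Its Hessian there is $\frac12\Tr\bigl(\Delta\, \mathrm{d}\log_{D_\rho}(\Delta)\bigr)$, where $\mathrm{d}\log_{D_\rho}$ is the Schur multiplier $\frac{\log s_j-\log s_k}{s_j-s_k}$.
\item The $O(\varepsilon^2)$ corrections to $\sigma_\varepsilon$ contribute only at order $\varepsilon^3$, because they pair with the vanishing first derivative of $H(\cdot\|D_\rho)$, whose gradient $\log D_\rho-\log D_\rho+I$ is orthogonal to trace-zero perturbations.
\end{itemize}
Summing the Hessian gives
\begin{align*}
\sum_{j,k}|X_{jk}|^2\,\frac{s_j^{1/p}s_k^{1/p}(s_j-s_k)(\log s_j-\log s_k)}{(s_j^{1/p}-s_k^{1/p})^2}=p^2\Tr(X\bfR_{p,\rho}(X)),
\end{align*}
where the diagonal terms give $p^2s_j|X_{jj}|^2$. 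So $\Ent_{p,\rho}(X_\varepsilon)=\frac{p^2\varepsilon^2}{2}\Tr(X\bfR_{p,\rho}(X))+O(\varepsilon^3)$.

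\emph{Conclusion.} The function $r_p$ is strictly positive, so $\Tr(X\bfR_{p,\rho}(X))>0$ for $X\neq0$. Dividing the two expansions, the factors $p^2\varepsilon^2$ cancel and the ratio tends to
\begin{align*}
\frac{\Re\Tr(X\bfW_{p,\rho}\cL(X))}{2\Tr(X\bfR_{p,\rho}(X))}
\end{align*}
as $\varepsilon\to0$. This yields \eqref{eq:mlsibd1}.
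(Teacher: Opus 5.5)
Your proposal is correct and follows the same route as the paper. You test the infimum on $I+\varepsilon X$ and expand $\mathcal{E}_{p,\cL}$ and $\Ent_{p,\rho}$ to second order in $\varepsilon$, which gives exactly the paper's expansions $\frac{p^2\varepsilon^2}{4}\Re\Tr(X\bfW_{p,\rho}\cL(X))$ and $\frac{p^2\varepsilon^2}{2}\Tr(X\bfR_{p,\rho}(X))$; you derive them in more detail, through the Daleckii--Krein formula and the Hessian of relative entropy, than the paper's ``direct computation'' via $\frac{d}{dt}\log\|\Phi_t(I+\varepsilon X)\|_{p,\rho}$. One small correction: the realness of $\mathcal{E}_{p,\cL}(X_\varepsilon)$ does not need KMS-symmetry, since $\langle A,C\rangle_\rho$ is real whenever $A$ and $C$ are Hermitian, and $\Tr(X\bfW_{p,\rho}\cL(X))$ is already real as the trace of a product of two Hermitian matrices.
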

\begin{proof}
By a direct computation, we have that
\begin{align*}
\left. \frac{d}{dt}\log \| \Phi_t(I+\varepsilon X)\|_{p, \rho}\right|_{t=0} =(1-p) \varepsilon^2 \Tr(X\bfW_{p, \rho}(\cL(X))) +\bfO(\varepsilon^3).
\end{align*}
Hence
\begin{align}\label{eq:energe1}
\mathcal{E}_{p, \cL} (I+\varepsilon X) =\frac{p^2\varepsilon^2}{4}  \Re \Tr(X \bfW_{p, \rho} \cL(X)) +\bfO(\varepsilon^3)
\end{align}
Similarly, we have that 
\begin{align}\label{eq:entropyepsilon}
\Ent_{p, \rho}(I+\varepsilon X) =\frac{p^2\varepsilon^2}{2} \Tr(X \bfR_{p, \rho}(X)) +\bfO(\varepsilon^3).
\end{align}
Combining the two equalities, we see that the proposition is true.
\end{proof}

\begin{theorem}\label{thm:0infty}
Suppose that $\{\Phi_t\}_{t\geq 0}$ is a primitive KMS-symmetric quantum Markov semigroup.
Let $\cN$ be the commutant $\{D_\rho\}'$ and $\bE_{\cN}$ the trace-preserving conditional expectation onto $\cN$.
Suppose that there exists $X_0=X_0^*$ with $\bE_{\cN}(X_0)=0$ such that
\begin{align*}
\Re \Tr(X_0 \bfM_\rho \cL(X_0))<0.
\end{align*}
Then \begin{align*}
\inf_{0<p < 1} \alpha_p(\cL) =-\infty.
\end{align*}
\end{theorem}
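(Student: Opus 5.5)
We apply Proposition \ref{prop:mlsibd1} to the fixed test element $X=X_0$ and let $p\to 0^+$. First we check that $X_0$ is admissible. The condition $\bE_{\cN}(X_0)=0$ forces $\rho(X_0)=0$: since $D_\rho\in\cN$ and $\bE_{\cN}$ is trace preserving, $\rho(X_0)=\Tr(D_\rho X_0)=\Tr(D_\rho\bE_{\cN}(X_0))=0$. Also $X_0\neq 0$, because otherwise the strict inequality $\Re\Tr(X_0\bfM_\rho\cL(X_0))<0$ could not hold. Hence for every $0<p<1$,
\begin{align*}
\alpha_p(\cL)\leq \frac{\Re \Tr(X_0 \bfW_{p, \rho} \cL(X_0))}{2\Tr(X_0 \bfR_{p, \rho}(X_0))}.
\end{align*}

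Next we control the numerator. Since $\bfW_{p,\rho}\to\bfM_\rho$ entrywise as $p\to0^+$ and everything is finite dimensional, $\Re\Tr(X_0\bfW_{p,\rho}\cL(X_0))\to \Re\Tr(X_0\bfM_\rho\cL(X_0))=:-c<0$. So for all sufficiently small $p$ the numerator is at most $-c/2$.

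We then show the denominator tends to $0$ from above, which is the main point. We work in the eigenbasis of $D_\rho$, so that the Schur multipliers are diagonal. In that basis
\begin{align*}
\Tr(X_0\bfR_{p,\rho}(X_0))=\sum_{j,k}r_p(s_j,s_k)\,|(X_0)_{jk}|^2 .
\end{align*}
The key structural fact is that $\cN=\{D_\rho\}'$ consists of the block-diagonal matrices supported on pairs $(j,k)$ with $s_j=s_k$. The condition $\bE_{\cN}(X_0)=0$ says exactly that $(X_0)_{jk}=0$ whenever $s_j=s_k$. So only pairs with $s_j\neq s_k$ contribute, and for these $r_p(s_j,s_k)\to0$ as $p\to0^+$, as noted before Proposition \ref{prop:mlsibd1}. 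Explicitly, for $s>t$ we have $r_p(s,t)\approx t^{1/p}(\log s-\log t)(s-t)/(p^2 s^{1/p})$, which decays exponentially in $1/p$. For the same pairs $r_p(s_j,s_k)>0$, because $(\log s-\log t)(s-t)>0$ for $s\neq t$. Since $X_0\neq0$, some off-block entry is nonzero, and therefore the denominator is strictly positive for every $p$.

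Finally we combine the two estimates. For small $p$ the ratio is a quantity at most $-c/2$ divided by a positive quantity tending to $0$, so it tends to $-\infty$. This gives $\inf_{0<p<1}\alpha_p(\cL)=-\infty$.

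The step to watch is the $O(\varepsilon^3)$ remainders in Proposition \ref{prop:mlsibd1}, since they depend on $p$. That does not matter here: the proposition is stated as an exact inequality for each fixed $p$, with $\varepsilon\to0$ taken first. So no uniformity in $p$ is needed, and the limit $p\to0^+$ is taken only afterwards, on the explicit ratio.
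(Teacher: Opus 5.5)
Your proposal is correct and follows the paper's own proof: apply Proposition~\ref{prop:mlsibd1} to $X_0$, observe that the numerator tends to $\Re\Tr(X_0\bfM_\rho\cL(X_0))<0$, and observe that the denominator $\sum_{s_j\neq s_k} r_p(s_j,s_k)|(X_0)_{jk}|^2$ stays positive and tends to $0$ as $p\to0^+$. Beyond that, you make explicit what the paper leaves implicit: that $\bE_{\cN}(X_0)=0$ forces $\rho(X_0)=0$ and kills the entries with $s_j=s_k$, and that the $p$-dependence of the $O(\varepsilon^3)$ remainders is harmless because $\varepsilon\to0$ is taken first.
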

\begin{proof}
For the denominator of Equation \eqref{eq:mlsibd1}, we have that $\lim_{p\to 0^+}\bfR_{p, \rho}=D_\rho \bE_{cN}(X_0)$ and 
\begin{align*}
\Tr(X_0 \bfR_{p, \rho}(X_0))>0, \quad  \lim_{p\to 0^+}\Tr(X_0 \bfR_{p, \rho}(X_0))=0.
\end{align*}
For the numerator, we have that 
\begin{align*}
\lim_{p\to 0^+} \Tr(X_0 \bfW_{p, \rho} \cL(X_0))= \Tr(X_0 \bfM_{\rho} \cL(X_0)).
\end{align*}
By the assumption, we have that 
\begin{align*}
\lim_{p\to 0^+} \alpha_p(\cL)=-\infty.
\end{align*}
This completes the proof of the theorem.
\end{proof}

\begin{remark}
Note that for GNS-symmetric quantum Markov semigroups, the map $p\mapsto \alpha_p(\cL)$ is non-increasing on $(0,1)$ as proved in \cite{BDR20}.
Hence Theorem \ref{thm:0infty} shows a big difference via the logarithmic Sobolev inequality between the KMS-symmetric quantum Markov semigroups and the GNS-symmetric quantum Markov semigroups.
\end{remark}

\begin{example}\label{example:infty}
We consider a graph-based KMS-symmetric quantum Markov semigroup on $M_4(\bC)$ with nonzero Hamiltonian.
Let
\begin{align*}
D_\rho=\frac{1}{30}\begin{pmatrix}
4 & 0 & 0 & 0 \\ 0 & 1 & 0& 0 \\
0 & 0 & 16& 0 \\ 0 & 0 & 0 & 9
\end{pmatrix}, \quad 
V_1=\begin{pmatrix}
0 & 2 & 0 & 0 \\ 1 & 0 & 0& 0 \\
0 & 0 & 0& 0 \\ 0 & 0 & 0 & 0
\end{pmatrix}
\quad
V_2= \begin{pmatrix}
0 & 0 & 0 & 0 \\ 0 & 0 & 1& 0 \\
0 & 4& 0& 0 \\ 0 & 0 & 0 & 0
\end{pmatrix},
\end{align*}
and 
\begin{align*}
V_3=\begin{pmatrix}
0 & 0 & 0 & 0 \\ 0 & 0 & 0& 0 \\
0 & 0 & 0 & 4 \\ 0 & 0 & 3 & 0
\end{pmatrix}, \quad 
V=\begin{pmatrix}
0 & 6 & 0 & 0 \\ 3 & 0 & 1 & 0 \\
0 & 4 & 0& 4 \\ 0 & 0 & 3 & 0
\end{pmatrix}
\quad
H=-i \begin{pmatrix}
0 & 0 & 1/2 & 0 \\ 0 & 0 & 0 & 4 \\
-1/2 & 0& 0& 0 \\ 0 & -4 & 0 & 0
\end{pmatrix}.
\end{align*}
The semigroup generator $\cL$ is 
\begin{align*}
 \cL (X)=  \frac{1}{2}\{V^*V, X\}-V^*XV +i[H, X] +\frac{1}{1000}\left(\frac{1}{2}\left\{\sum_{j=1}^3V_j^*V_j, X\right\}-\sum_{j=1}^3V_j^*XV_j\right).
\end{align*}
This is a graph-based primitive KMS-symmetric quantum Markov semigroup with nonzero Hamiltonian.
Let 
\begin{align*}
    X=\begin{pmatrix}
        0 & 1 & 0 & -2 \\ 1 & 0 & -2 & 0\\
        0 & -2 & 0 & 2 \\ -2 & 0 & 2 & 0
    \end{pmatrix}
\end{align*}
We have that 
\begin{align*}
   \Tr(X\bfM_\rho(\cL(X))) =-\frac{2551}{30000}<0.
\end{align*}
By Theorem \ref{thm:0infty}, we have that $\displaystyle \lim_{p \to 0^+}\alpha_p(\cL)=-\infty$.
\end{example}

\section{Quantum Reverse Hypercontractivity}

In this section, we shall obtain some positive results regarding the quantum reverse hypercontractivity.
\begin{proposition}\label{prop:rev1}
Suppose that $\{\Phi_t\}_{t\geq 0}$ is a primitive KMS-symmetric quantum Markov semigroup.
If there exists $c>0$ such that  for all $0<p<1$ and $X>0$ in $M_n(\bC)$, 
\begin{align}\label{eq:suffice}
    \mathcal{E}_{p, \cL}(X) \geq c\mathcal{E}_{1, \cL}(I_{1, p}(X)),
\end{align}
then the reverse hypercontractivity \eqref{eq:revhyper} holds whenever $\displaystyle t\geq \frac{1}{c\alpha_{MLSI}}\log \frac{1-p}{1-q}$.
\end{proposition}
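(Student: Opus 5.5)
The plan is to reduce the statement to Theorem~11 of \cite{BDR20}, recalled as \eqref{eq:revhyper}. Concretely, I will show that the hypothesis \eqref{eq:suffice}, combined with the modified logarithmic Sobolev inequality \eqref{eq:mlsi1}, gives a uniform lower bound on the $p$-logarithmic Sobolev constants for $0<p<1$. The appropriate Gross-type integration in $p$ then produces the stated time.

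First I fix $0<p<1$ and $X>0$ nonscalar, and set $Y=I_{1,p}(X)>0$. Then \eqref{eq:suffice} and \eqref{eq:mlsi1} applied to $Y$ give
\begin{align*}
\mathcal{E}_{p,\cL}(X)\geq c\,\mathcal{E}_{1,\cL}(Y)\geq \frac{c\,\alpha_{MLSI}}{4}\Ent_{1,\rho}(Y).
\end{align*}
By the reformulation \eqref{eq:entropy2}, $\Ent_{1,\rho}(Y)=\Ent_{p,\rho}(X)$. Hence $\mathcal{E}_{p,\cL}(X)\geq \frac{c\alpha_{MLSI}}{4}\Ent_{p,\rho}(X)$ for all $0<p<1$, and so $\alpha_p(\cL)\geq c\alpha_{MLSI}/4$ there.

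For $p=1$ the same bound follows from \eqref{eq:mlsi1} with $\alpha_1$, assuming $c\leq 1$. If $c>1$, I instead note that the $p=1$ endpoint is not needed in the range $p\leq q<1$. Thus $\beta_{\rev}\geq c\alpha_{MLSI}/4$ on the relevant range.

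Next I run the standard argument behind Theorem~11 of \cite{BDR20}, rather than quoting it, since it only uses the $p$-LSI along the path. Fix $X>0$ and choose $p(t)$ with $p(0)=q$, $p$ decreasing, and $\frac{1-p(t)}{1-q}=e^{4\beta t}$, where $\beta=c\alpha_{MLSI}/4$. I then differentiate $\log\|\Phi_t(X)\|_{p(t),\rho}$ using the two derivative formulas from the preliminaries:
\begin{align*}
\frac{d}{dt}\log\|\Phi_t(X)\|_{p(t),\rho}=\frac{p'(t)\Ent_{p,\rho}(\Phi_tX)}{p^2\|\Phi_tX\|_{p,\rho}^p}+\frac{4(1-p)}{p^2\|\Phi_tX\|_{p,\rho}^p}\mathcal{E}_{p,\cL}(\Phi_tX).
\end{align*}
With $p'=-4\beta(1-p)$ and $\mathcal{E}_p\geq\beta\Ent_p$, this derivative is nonnegative. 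So $\|\Phi_t(X)\|_{p(t),\rho}\geq\|X\|_{q,\rho}$.

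To finish, I use monotonicity in $p$. Since $\Ent\geq 0$, $p\mapsto\|\cdot\|_{p,\rho}$ is nondecreasing. Once $p(t)\leq p$, that is once $t\geq\frac{1}{4\beta}\log\frac{1-p}{1-q}=\frac{1}{c\alpha_{MLSI}}\log\frac{1-p}{1-q}$, I get the reverse hypercontractivity, using reverse $p$-contractivity for the remaining time. That contractivity in turn follows from $\mathcal{E}_p\geq 0$, which holds by the bound above.

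The main obstacle is bookkeeping rather than an idea: checking the sign and normalization of the derivative formula for $p<1$, where $1-p>0$ and $p'<0$, so that the factor $4$ matches. I also need $\Ent_{p,\rho}$ evaluated at the moving exponent, which is why the time-dependent exponent is used rather than applying \eqref{eq:revhyper} blindly.
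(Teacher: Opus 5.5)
Your proposal is correct and matches the paper's proof. Both use the same chain $\mathcal{E}_{p,\cL}(X)\geq c\,\mathcal{E}_{1,\cL}(I_{1,p}(X))\geq \frac{c\alpha_{MLSI}}{4}\Ent_{1,\rho}(I_{1,p}(X))=\frac{c\alpha_{MLSI}}{4}\Ent_{p,\rho}(X)$; the only difference is that you carry out the Gross-type integration along $p(t)$ yourself, handling the $p=1$ endpoint and the times beyond $p(t)=p$ via reverse $p$-contractivity, where the paper cites Theorem~11 of \cite{BDR20}.
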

\begin{proof}
By the assumption, we have that 
\begin{align*}
 \mathcal{E}_{p, \cL}(X) 
 \geq &  c\mathcal{E}_{1, \cL}(I_{1, p}(X))\\
% = &  \frac{c}{4}\cI_{\cL}(X) \quad \text{Equation \eqref{eq:dirfish}}\\
 \geq &   \frac{c\alpha_{MLSI}}{4} \Ent_{1, \rho}(I_{1, p}(X)) \quad \text{Equation \eqref{eq:mlsi1}}\\
 =& \frac{c\,\alpha_{MLSI}}{4}\Ent_{p, \rho}(X), \quad \text{Equation \eqref{eq:entropy2}}.
\end{align*}
By Theorem 11 in \cite{BDR20}, we see that the proposition is true.
\end{proof}

\begin{remark}
    In \cite{BDR20}, they show that $c=1$ in Proposition \ref{prop:rev1} for GNS-symmetric quantum Markov semigroups.
\end{remark}

In the following, we shall find a primitive KMS-symmetric quantum Markov semigroup satisfying the assumption in Proposition \ref{prop:rev1}.

Suppose that $\rho$ is a faithful state on $M_n(\bC)$ and $D_\rho$ is its density which is not a scalar.
We assume that $D_\rho=\diag(s_1, \ldots, s_n)$.
Let $\displaystyle s_-=\min_{1\leq j \leq n} s_j$ and $\displaystyle s_+=\max_{1\leq j \leq n} s_j$.
Now we define a linear map $\cL$ on $M_n(\bC)$ as follows:
\begin{align}\label{eq:transpose}
    \cL(X)=\gamma(X-\rho(X))+\omega(X-X^{\mathsf{T}}), \quad X\in M_n(\bC),
\end{align}
where $\gamma>0$ and $0<\omega\leq \gamma s_-$.

\begin{lemma}
The semigroup $\{\Phi_t\}_{t\geq 0}$ with generator \eqref{eq:transpose} is a primitive KMS-symmetric quantum Markov semigroup.    
\end{lemma}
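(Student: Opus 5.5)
We need to establish three things: the semigroup is completely positive and unital (a quantum Markov semigroup), it is KMS-symmetric with respect to $\rho$, and its kernel is $\bC I$. Here $\rho(X)$ in \eqref{eq:transpose} means $\rho(X)I$, and $D_\rho=\diag(s_1,\ldots,s_n)$.

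\emph{Markov property.} Write $\cL=\kappa\,\Id-\Psi$ with $\kappa=\gamma+\omega$ and
\begin{align*}
\Psi(X)=\gamma\rho(X)I+\omega X^{\mathsf{T}}.
\end{align*}
Then $\Phi_t=e^{-\kappa t}e^{t\Psi}$. It suffices that $\Psi$ is completely positive, since then $e^{t\Psi}$ is completely positive. Unitality is immediate: $\cL(I)=\gamma(I-I)+\omega(I-I)=0$. The transpose alone is not completely positive, so the constraint $\omega\le\gamma s_-$ has to be used. Compute the Choi matrix
\begin{align*}
C_\Psi=\sum_{j,k}E_{jk}\otimes\Psi(E_{jk})
=\gamma\,\Big(\sum_{j}s_jE_{jj}\Big)\otimes I+\omega F,
\end{align*}
using $\rho(E_{jk})=\delta_{jk}s_j$ and writing $F=\sum_{j,k}E_{jk}\otimes E_{kj}$ for the flip operator. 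We have $D_\rho\otimes I\ge s_-I\otimes I$ and $F\ge -I\otimes I$, since $F$ is a self-adjoint unitary. Hence
\begin{align*}
C_\Psi\ge(\gamma s_--\omega)I\ge 0,
\end{align*}
and $\Psi$ is completely positive by Choi's theorem.

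\emph{KMS symmetry.} It suffices to check $\langle X,\cL(Y)\rangle_\rho=\langle\cL(X),Y\rangle_\rho$ at the generator level, which then exponentiates. The identity part is trivial. For the $\rho$-part,
\begin{align*}
\langle X,\rho(Y)I\rangle_\rho=\overline{\rho(X)}\rho(Y)=\langle\rho(X)I,Y\rangle_\rho.
\end{align*}
For the transpose part, use that $D_\rho$ is diagonal, so $D_\rho^{\mathsf{T}}=D_\rho$ and $(X^*)^{\mathsf{T}}=(X^{\mathsf{T}})^*$. Since $\Tr(A^{\mathsf{T}})=\Tr(A)$,
\begin{align*}
\Tr\big(X^*D_\rho^{1/2}Y^{\mathsf{T}}D_\rho^{1/2}\big)=\Tr\big(D_\rho^{1/2}YD_\rho^{1/2}(X^{\mathsf{T}})^*\big)=\Tr\big((X^{\mathsf{T}})^*D_\rho^{1/2}YD_\rho^{1/2}\big).
\end{align*}
This is exactly $\langle X^{\mathsf{T}},Y\rangle_\rho$. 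In particular $\rho$ is invariant, because $\rho(\cL(X))=\langle I,\cL(X)\rangle_\rho=\langle\cL(I),X\rangle_\rho=0$.

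\emph{Primitivity.} The operator $\cL$ is self-adjoint for $\langle\cdot,\cdot\rangle_\rho$, so it suffices to compute its kernel. Suppose $\cL(X)=0$ and set $Y=X-X^{\mathsf{T}}$. Then $\gamma(X-\rho(X)I)=-\omega Y$. Taking the transpose of this identity gives $\gamma(X^{\mathsf{T}}-\rho(X)I)=\omega Y$, since $Y^{\mathsf{T}}=-Y$. Subtracting the two identities yields $\gamma Y=-2\omega Y$, so $Y=0$ because $\gamma+2\omega>0$. It follows that $X=\rho(X)I$, so $\ker\cL=\bC I$, and $\rho$ is a faithful invariant state.

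The only nontrivial step is complete positivity. It rests on the flip bound $F\ge -I\otimes I$, which is precisely where the hypothesis $\omega\le\gamma s_-$ enters.
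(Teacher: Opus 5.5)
Your proof is correct and follows the paper's route: you use the same splitting $\cL=(\gamma+\omega)\Id-\Psi$ and the same Choi matrix $\gamma D_\rho\otimes I+\omega F$, together with KMS symmetry and trivial fixed points of $\Psi$. The only difference is that the paper invokes Lemma 6.1 of \cite{WanWu26} to assemble these facts and leaves the fixed-point claim unproved, whereas you verify complete positivity of $e^{-t\cL}$, the kernel computation, and invariance of $\rho$ directly.
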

\begin{proof}
Let $\Psi(X)=\gamma \rho(X)+\omega X^{\mathsf{T}}$.
Then the Choi matrix is $\gamma D_\rho\otimes I +\omega F$, where $F$ is the flip operator.
When $\omega \leq \gamma s_-$, we see that the Choi matrix is positive semidefinite and then $\Psi$ is completely positive.

Note that $\Psi$ is KMS-symmetric with respect to $D_\rho$ and the fixed point space of $(\gamma+\omega)^{-1}\Psi$ is $\bC$.
By taking Hermitian matrix $X$ such that $[X, D_\rho]\neq 0$, we see that $\Psi$ is not GNS-symmetric.

By Lemma 6.1 in \cite{WanWu26}, we have that the semigroup is a primitive KMS-symmetric quantum Markov semigroup by noting that $\displaystyle \cL(X)= (\gamma+\omega)\left(X-\frac{\Psi(X)}{\gamma+\omega}\right)$.
\end{proof}

\begin{proposition}
Suppose that $\{\Phi_t\}_{t\geq 0}$ is the primitive KMS-symmetric quantum Markov semigroup with generator \eqref{eq:transpose}.
Then the reverse hypercontractivity \eqref{eq:revhyper} holds for $0< p\leq q< 1$ whenever
\begin{align*}
    t\geq \frac{1}{\gamma} \log \frac{1-p}{1-q}.
\end{align*}
\end{proposition}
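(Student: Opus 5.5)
The plan is to bypass Proposition \ref{prop:rev1} and show directly that $\beta_{\rev}\geq \gamma/4$, that is, $\alpha_p(\cL)\geq \gamma/4$ for all $0<p\leq 1$. The bound \eqref{eq:revhyper} from Theorem 11 of \cite{BDR20} then holds once $t\geq \frac{1}{4\beta_{\rev}}\log\frac{1-p}{1-q}$, and this is implied by $t\geq \frac1\gamma\log\frac{1-p}{1-q}$. Write $\cL=\gamma\cL_1+\omega\cL_2$ with $\cL_1(X)=X-\rho(X)I$ (the depolarizing generator) and $\cL_2(X)=X-X^{\mathsf T}$. Since $\mathcal{E}_{p,\cL}$ is linear in $\cL$, it suffices to prove two facts: $\mathcal{E}_{p,\cL_2}(X)\geq 0$ for $X>0$, and $\mathcal{E}_{p,\cL_1}(X)\geq \frac14\Ent_{p,\rho}(X)$.

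\textbf{The transpose part.} For $p\neq 1$, write $Y=D_\rho^{\frac{1}{2p}}XD_\rho^{\frac{1}{2p}}>0$. Since $D_\rho$ is diagonal, $D_\rho^{\mathsf T}=D_\rho$, and hence $D_\rho^{\frac{1}{2p}}X^{\mathsf T}D_\rho^{\frac{1}{2p}}=Y^{\mathsf T}$. A direct computation should give
\begin{align*}
\frac{4(p-1)}{p^2}\mathcal{E}_{p,\cL_2}(X)=\Tr(Y^p)-\Tr(Y^{p-1}Y^{\mathsf T}).
\end{align*}
By H\"older's inequality (for $p>1$) or the reverse H\"older inequality (for $0<p<1$), together with $\Tr((Y^{\mathsf T})^p)=\Tr(Y^p)$, one gets $\Tr(Y^{p-1}Y^{\mathsf T})\leq \Tr(Y^p)$ when $p>1$ and $\geq\Tr(Y^p)$ when $p<1$. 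In both cases the sign of the right-hand side matches the sign of $p-1$, so $\mathcal{E}_{p,\cL_2}(X)\geq 0$. For $p=1$ the same conclusion follows by taking the limit, or directly: the Fisher information of $\cL_2$ is nonnegative because $e^{-t\cL_2}$ is a KMS-symmetric semigroup fixing $\rho$, so relative entropy is monotone along it. Notably, this step uses only that $D_\rho^{\mathsf T}=D_\rho$; it does not use the condition $\omega\leq\gamma s_-$.

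\textbf{The depolarizing part.} For $\cL_1$ the semigroup is $\Phi_t(X)=e^{-t}X+(1-e^{-t})\rho(X)I$, which is GNS-symmetric. The results of \cite{BDR20} for GNS-symmetric semigroups, namely the Stroock--Varopoulos inequality with $c=1$ as noted in the remark after Proposition \ref{prop:rev1}, then give $\mathcal{E}_{p,\cL_1}(X)\geq \mathcal{E}_{1,\cL_1}(I_{1,p}(X))$. For the $p=1$ depolarizing generator one has $\cI_{\cL_1}(D)=H(D\|D_\rho)+H(D_\rho\|D)\geq H(D\|D_\rho)$, so the modified log-Sobolev constant is at least $1$. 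Combining this with \eqref{eq:mlsi1} and \eqref{eq:entropy2} yields $\mathcal{E}_{p,\cL_1}(X)\geq \frac14\Ent_{p,\rho}(X)$ for all $0<p\leq 1$. Putting the two parts together gives $\alpha_p(\cL)\geq\gamma/4$, as required.

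The main obstacle is the sign computation for $\mathcal{E}_{p,\cL_2}$: it requires evaluating the pairing $\langle I_{\frac{p}{p-1},p}(X),X^{\mathsf T}\rangle_\rho$ carefully with the $D_\rho$ weights. If the weights do not cancel as cleanly as claimed, the fallback is to apply the reverse H\"older inequality \eqref{eq:revholder} directly to this pairing, which gives $\langle I_{\frac{p}{p-1},p}(X),X^{\mathsf T}\rangle_\rho\geq \|I_{\frac{p}{p-1},p}(X)\|_{\frac{p}{p-1},\rho}\|X^{\mathsf T}\|_{p,\rho}$. Together with $\|X^{\mathsf T}\|_{p,\rho}=\|X\|_{p,\rho}$, which again holds because $D_\rho$ is diagonal, this yields the same sign.
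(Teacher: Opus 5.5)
Your proof is correct, but it takes a different route from the paper's.

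\textbf{The paper's route: factor the semigroup.} The paper works with the semigroup itself. Since $\rho(X^{\mathsf T})=\rho(X)$, the depolarizing and transpose generators commute. Hence $\Phi_t=\Phi_t^{(1)}\Phi_t^{(2)}$, where $\Phi_t^{(2)}=\frac{1+e^{-2\omega t}}{2}\Id+\frac{1-e^{-2\omega t}}{2}\Theta$ is unital and positive but not completely positive. The paper then asserts two things. First, $\Phi_t^{(2)}$ is reverse $p$-contractive; this follows from concavity of $Y\mapsto \Tr(Y^p)^{1/p}$ for $0<p<1$ and the identity $D_\rho^{\frac{1}{2p}}X^{\mathsf T}D_\rho^{\frac{1}{2p}}=(D_\rho^{\frac{1}{2p}}XD_\rho^{\frac{1}{2p}})^{\mathsf T}$. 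Second, $\Phi_t^{(1)}$ is reverse hypercontractive with time $\frac{1}{\gamma}\log\frac{1-p}{1-q}$. It then chains the two norm inequalities.

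\textbf{Your route: split the Dirichlet form.} You work at the level of the generator, using that $\mathcal{E}_{p,\cL}$ is linear in $\cL$. Your identity $\frac{4(p-1)}{p^2}\mathcal{E}_{p,\cL_2}(X)=\Tr(Y^p)-\Tr(Y^{p-1}Y^{\mathsf T})$ holds: the $D_\rho$ weights cancel exactly as you claim, because $D_\rho$ is diagonal. Reverse H\"older then fixes its sign. For the depolarizing part you combine the GNS Stroock--Varopoulos inequality with $\cI_{\cL_1}(D)=H(D\|D_\rho)+H(D_\rho\|D)$.

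\textbf{What each buys.} Your argument gives the stronger conclusion $\alpha_p(\cL)\geq\gamma/4$ uniformly for $0<p\le 1$, i.e.\ $\beta_{\rev}\geq\gamma/4$. This is exactly the input Theorem 11 of \cite{BDR20} needs, and it does not use that the two pieces commute. The paper's factorization avoids differentiating, but it needs the commutation and a norm-level reverse-contractivity statement for a non-CP map. Both routes rest on the same \cite{BDR20} input for the GNS-symmetric depolarizing part. The paper leaves that step as a ``direct computation''; you make it explicit.

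\textbf{One small correction.} Your $p=1$ justification uses monotonicity of relative entropy along $e^{-t\cL_2}$. Since $e^{-t\cL_2}$ is only positive, that would need data processing for positive trace-preserving maps. It is simpler to note directly that $\cI_{\cL_2}(D)=H(D^{\mathsf T}\|D)\geq 0$. In any case $p=1$ is never reached in Theorem 11, since $p\le q<1$.
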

\begin{proof}
Let 
\begin{align*}
    \bE_\rho(X)=\rho(X) I, \quad \Theta(X)=X^{\mathsf{T}},
\end{align*}
and 
\begin{align*}
    \Phi_t^{(1)}=& e^{\gamma t (\bE_{\rho}-\Id)}=e^{-\gamma t} \Id +(1-e^{-\gamma t} )\bE_\rho, \\
    \Phi_t^{(2)}=& e^{\omega t (\Theta-\Id)} =\frac{1+e^{-2\omega t}}{2} \Id +\frac{1-e^{-2\omega t}}{2}\Theta.
\end{align*}
Then $\Phi_t=\Phi_t^{(1)}\Phi_t^{(2)}$.
We see that $\Phi_t^{(2)}$ is a unital positive map and it is not completely positive in general.
By a direct computation, we have that $\|\Phi_t^{(2)}(X)\|_{p, \rho} \geq \|X\|_{p, \rho}$ and 
\begin{align*}
    \| \Phi_t^{(1)}(X)\|_{p, \rho}\geq \|X\|_{q, \rho}, \quad 0< p\leq q \leq 1,
\end{align*}
whenever $\displaystyle t \geq \frac{1}{\gamma} \log \frac{1-p}{1-q}$.
This implies that 
\begin{align*}
    \| \Phi_t(X)\|_{p, \rho}\geq \|X\|_{q, \rho}, \quad 0< p\leq q \leq 1,
\end{align*}
whenever $\displaystyle t \geq \frac{1}{\gamma} \log \frac{1-p}{1-q}$.
This completes the proof of the proposition.
\end{proof}

\begin{remark}
There exists a primitive KMS-symmetric quantum Markov semigroup such that the assumption \eqref{eq:suffice} could fail for all $c>0$.     
Now we focus on the graph-based KMS-symmetric quantum Markov semigroup for $n=2$.
Let
\begin{align*}
D_\rho=\begin{pmatrix}
    1/5 & 0 \\ 0 & 4/5
\end{pmatrix}, \quad 
V=\frac{1}{\sqrt{5}}\begin{pmatrix}
    0 & 1 \\ 2 & 0
\end{pmatrix}
\end{align*}
and the generator $\cL$ to be 
\begin{align*}
 \cL(X)= \frac{1}{2}  \{V^*V, X\}-V^*XV.
\end{align*}
Let 
\begin{align*}
    X_{\varepsilon} =\frac{1}{3}\begin{pmatrix}
        2 & \sqrt{2} \\
        \sqrt{2} & 1
    \end{pmatrix}+\varepsilon I.
\end{align*}
Then 
\begin{align*}
    \mathcal{E}_{1/2, \cL}(X_\varepsilon) =\frac{1-\sqrt{\varepsilon/(1+\varepsilon)}}{50\sqrt{6+17\varepsilon +8\sqrt{\varepsilon(1+\varepsilon)}}} \to \frac{1}{50\sqrt{6}} \quad ( \varepsilon \to 0),
\end{align*}
and 
\begin{align*}
    \mathcal{E}_{1, \cL}(I_{1, 1/2}(X_\varepsilon))
    =\frac{1}{4} \Tr(\cL^*(Y_\varepsilon)(\log Y_\varepsilon-\log D_\rho)
    =\frac{\sqrt{6}}{450} \log \varepsilon^{-1} +O(1),
\end{align*}
where $Y_\varepsilon=  (D_\rho X_\varepsilon D_\rho)^{1/2}  $.
Now we see that 
\begin{align*}
    \frac{\mathcal{E}_{1/2, \cL}(X_\varepsilon) }{\mathcal{E}_{1, \cL}(I_{1, 1/2}(X_\varepsilon))} \sim \frac{3}{2\log \varepsilon^{-1}} \to 0 \quad (\varepsilon \to 0).
\end{align*}
This indicates that for such KMS-symmetric quantum Markov semigroup, we can not find $c>0$ with the assumption \eqref{eq:suffice}.
\end{remark}

In the following, we shall obtain the quantum reverse hypercontractivity for primitive KMS-symmetric quantum Markov semigroups by restricting the ranges of $p,q$.

\begin{theorem}\label{thm:revhyper1}
Suppose that $\varepsilon>0$ is fixed and $\{\Phi_t\}_{t\geq 0}$ is a primitive KMS-symmetric quantum Markov semigroup.   
Then
\begin{align}\label{eq:revhyper1}
    \|\Phi_t(X)\|_{p, \rho} \geq \|X\|_{q, \rho}, \quad \frac{1}{2}+\varepsilon \leq p\leq q< 1,
\end{align}
whenever $\displaystyle t\geq \frac{2+\log s_-^{-1}}{4\varepsilon \lambda_{\KMS}}\log \frac{1-p}{1-q}$.
\end{theorem}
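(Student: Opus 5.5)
The plan is to obtain a uniform lower bound on $\alpha_r(\cL)$ for $r\in[\frac12+\varepsilon,1)$ and then repeat the differential argument behind Theorem 11 of \cite{BDR20} on that restricted range. Fix $r\in[\frac12+\varepsilon,1)$. I would write $r=\frac{p'}{p'+1}$ with $p'=\frac{r}{1-r}>1$, so that $\frac{p'-1}{p'+1}=2r-1\geq 2\varepsilon$. Proposition \ref{prop:lsiconstant2} then gives
\begin{align*}
\alpha_r(\cL)\geq \frac{1}{2}(2r-1)\frac{\lambda_{\KMS}}{\log s_-^{-1}+2}\geq \frac{\varepsilon\lambda_{\KMS}}{\log s_-^{-1}+2}=:\beta_\varepsilon .
\end{align*}
This bound is uniform on $[\frac12+\varepsilon,1)$. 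For $r=1$ I would appeal to the modified logarithmic Sobolev inequality, or simply avoid the endpoint, since $q<1$.

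Next, fix $X>0$ and run the Gross-type argument along a path $r(t)$. I would take $r(t)$ solving $\frac{1-r(t)}{1-q}=e^{4\beta_\varepsilon t}$, so that $r(0)=q$, $r$ decreases, and $r(t_0)=p$ exactly when $t_0=\frac{1}{4\beta_\varepsilon}\log\frac{1-p}{1-q}$. Along the path, $r(t)\in[p,q]\subset[\frac12+\varepsilon,1)$. Set $F(t)=\log\|\Phi_t(X)\|_{r(t),\rho}$ and differentiate, using the two derivative formulas from the preliminaries:
\begin{align*}
F'(t)=r'(t)\frac{\Ent_{r,\rho}(\Phi_t X)}{r^2\|\Phi_tX\|_{r,\rho}^r}+\frac{4(1-r)}{r^2\|\Phi_tX\|_{r,\rho}^r}\mathcal{E}_{r,\cL}(\Phi_tX).
\end{align*}
Since $r'=-4\beta_\varepsilon(1-r)$ and $\mathcal{E}_{r,\cL}\geq \beta_\varepsilon\Ent_{r,\rho}$, we get $F'\geq0$. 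Here I use that $\Ent_{r,\rho}\geq0$, as it is a multiple of a relative entropy, and that $\Phi_t(X)>0$. Hence $\|\Phi_{t_0}(X)\|_{p,\rho}\geq\|X\|_{q,\rho}$, and the resulting time is $\frac{2+\log s_-^{-1}}{4\varepsilon\lambda_{\KMS}}\log\frac{1-p}{1-q}$, matching the statement.

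For $t\geq t_0$ I would write $\Phi_t=\Phi_{t-t_0}\Phi_{t_0}$ and use reverse $p$-contractivity for $p\geq\frac12$, quoted from \cite{BDR20}. This gives $\|\Phi_t X\|_{p,\rho}\geq\|\Phi_{t_0}X\|_{p,\rho}\geq\|X\|_{q,\rho}$. Alternatively, one can note that the log-Sobolev bound at the fixed exponent $p$ already gives this monotonicity.

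The main point to check is the differential step. Theorem 11 of \cite{BDR20} is stated under $\beta_{\rev}>0$ on all of $(0,1]$, so I would verify that its proof only uses the bound $\alpha_r\geq\beta$ for those $r$ actually visited by the path, which is $[p,q]$. I would also check that $t\mapsto\|\Phi_t(X)\|_{r(t),\rho}$ is differentiable with the displayed derivative, for positive definite $X$ with $r$ varying smoothly in $(0,1)$.
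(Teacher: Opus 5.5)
Your proposal is correct and takes the same route as the paper. The paper also gets the uniform bound $\alpha_r(\cL)\geq \varepsilon\lambda_{\KMS}/(\log s_-^{-1}+2)$ on $[\frac12+\varepsilon,1)$ from Proposition~\ref{prop:lsiconstant2}, and then appeals to the proof of Theorem~11 in \cite{BDR20}; your reparametrization $r=p'/(p'+1)$, the differential computation along $r(t)$, and the extension to $t\geq t_0$ spell out steps the paper leaves implicit.
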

\begin{proof}
By Proposition \ref{prop:lsiconstant2}, we have that 
\begin{align*}
    \inf_{\frac{1}{2}+\varepsilon\leq p \leq 1} \alpha_p(\cL) \geq \varepsilon \frac{\lambda_{\KMS}}{\log s_-^{-1}+2}.
\end{align*}
By the proof of Theorem 11 in \cite{BDR20}, we have that the theorem is true.
\end{proof}

\begin{lemma}\label{lem:integration}
Suppose that $\delta\geq 0$ is fixed and $\{\Phi_t\}_{t\geq 0}$ is a primitive KMS-symmetric quantum Markov semigroup.
If $\alpha_p(\cL)>0$ for all $\delta< p <1$, then the reverse hypercontractivity holds for $\delta < p\leq  q< 1$ whenever
\begin{align*}
    t \geq \frac{1}{4}\int_p^q \frac{ds}{(1-s)\alpha_s(\cL)}.
\end{align*}
\end{lemma}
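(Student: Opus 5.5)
We will adapt the proof of Theorem 11 in \cite{BDR20}: instead of using a constant rate, we let the exponent move with a time-dependent schedule chosen so that the log-Sobolev inequality at each intermediate exponent exactly cancels the entropy term. Fix $X>0$ and $\delta<p\leq q<1$. Define $p(t)$ as the solution of
\begin{align*}
p'(t)=-4(1-p(t))\,\alpha_{p(t)}(\cL), \qquad p(0)=q.
\end{align*}
Since $\alpha_s(\cL)>0$ on $(\delta,1)$, the exponent $p(t)$ decreases strictly. Separating variables, it reaches the value $p$ exactly at
\begin{align*}
T=\frac14\int_p^q\frac{ds}{(1-s)\alpha_s(\cL)}.
\end{align*}
We then set $F(t)=\log\|\Phi_t(X)\|_{p(t),\rho}$ and aim to show $F'(t)\geq 0$ on $[0,T]$. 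This gives $\|\Phi_T(X)\|_{p,\rho}\geq \|X\|_{q,\rho}$.

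The derivative of $F$ splits by the chain rule into an exponent part and a semigroup part. Using $\frac{d}{dp}\log\|Y\|_{p,\rho}=\frac{\Ent_{p,\rho}(Y)}{p^2\|Y\|_{p,\rho}^p}$ and the derivative formula for $\log\|\Phi_t(Y)\|_{p,\rho}$ from Section 2, applied with $Y=\Phi_t(X)$ and using the semigroup property, we get
\begin{align*}
F'(t)=\frac{1}{p(t)^2\|Y\|_{p(t),\rho}^{p(t)}}\Big(p'(t)\Ent_{p(t),\rho}(Y)+4(1-p(t))\mathcal{E}_{p(t),\cL}(Y)\Big).
\end{align*}
Here $Y=\Phi_t(X)>0$ by primitivity. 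The entropy $\Ent_{p(t),\rho}(Y)$ is nonnegative because it is a multiple of a relative entropy.

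The $p$-log-Sobolev inequality gives $\mathcal{E}_{p(t),\cL}(Y)\geq \alpha_{p(t)}(\cL)\Ent_{p(t),\rho}(Y)$. Combined with the choice of $p'$, the bracket becomes nonnegative, so $F$ is nondecreasing on $[0,T]$.

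It remains to pass from $t=T$ to all $t\geq T$. For this we use reverse $p$-contractivity at the fixed exponent $p$, which is the case $q=p$ of the same computation: we get $\frac{d}{dt}\log\|\Phi_t(X)\|_{p,\rho}\geq 0$ whenever $\alpha_p(\cL)\geq 0$. Hence $\|\Phi_t(X)\|_{p,\rho}\geq\|\Phi_T(X)\|_{p,\rho}$ for $t\geq T$. Equivalently, we can rerun the schedule starting from $q$ and then hold the exponent constant.

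The main obstacle is regularity of the ODE. The map $s\mapsto\alpha_s(\cL)$ is an infimum and is not known to be continuous, so the ODE may not be solvable classically. We would first avoid it by discretizing. Choose a partition $p=r_0<r_1<\dots<r_N=q$ and constant rates $a_k=\inf_{[r_{k-1},r_k]}\alpha_s(\cL)$. On each subinterval apply the constant-rate argument of \cite{BDR20}, which costs time $\frac{1}{4a_k}\log\frac{1-r_{k-1}}{1-r_k}$, and chain the estimates using the semigroup property. The total is a Riemann-type upper sum dominating $\frac14\int_p^q\frac{ds}{(1-s)\alpha_s}$.

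To close the gap between the sum and the integral, we would show that $\alpha_s$ is upper/lower semicontinuous enough. Alternatively, we could note that a lower bound by any positive continuous minorant suffices, and interpret the integral with the lower envelope. Positivity of the infimum on each compact subinterval also needs care: it may fail if $\alpha_s\to 0$ somewhere inside $(\delta,1)$. In that case the integral diverges and the statement is vacuous, which is the reading we would adopt.
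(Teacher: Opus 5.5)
Your core computation is the argument the paper intends: its proof is a one-line pointer to the proof of Theorem 11 in \cite{BDR20}, run with a variable rate. Your bookkeeping in $F'(t)$ is correct. The gap is the regularity step, which you leave open, and the fixes you sketch do not deliver the stated threshold.

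Discretizing with $a_k=\inf_{[r_{k-1},r_k]}\alpha_s(\cL)$ gives upper Darboux sums of $s\mapsto 1/\alpha_s(\cL)$ against the measure $\frac{ds}{4(1-s)}$. These converge to the integral in the statement only if $1/\alpha_s$ is Riemann integrable. The semicontinuity you actually have points the wrong way. $\alpha_s(\cL)$ is an infimum over $X$ of functions continuous in $s$, hence upper semicontinuous, so $1/\alpha_s$ is merely lower semicontinuous. The infimum of its upper sums is then the integral of its upper semicontinuous envelope, which can be strictly larger.

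Moreover, a cell can have $a_k=0$ while $\int_p^q\frac{ds}{(1-s)\alpha_s(\cL)}<\infty$, e.g.\ if $\alpha_s(\cL)$ behaves like $|s-s_0|^{1/2}$ near an interior point $s_0$. So ``the integral diverges and the statement is vacuous'' is not a valid reading. (Minor: $\Phi_t(X)>0$ comes from unital positivity, $\Phi_t(X)\geq\lambda_{\min}(X)I$, not from primitivity.)

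The clean fix is to parametrize the path by the exponent instead of solving an ODE in $t$. Assume $\int_p^q\frac{ds}{(1-s)\alpha_s(\cL)}<\infty$; otherwise there is nothing to prove, and the integrand is Borel because $\alpha_s$ is upper semicontinuous. Set $\tau(s)=\frac14\int_s^q\frac{du}{(1-u)\alpha_u(\cL)}$ for $s\in[p,q]$. Then $\tau$ is absolutely continuous with $\tau'(s)=-\frac{1}{4(1-s)\alpha_s(\cL)}$ for a.e.\ $s$. The map $G(t,r)=\log\|\Phi_t(X)\|_{r,\rho}$ is $C^1$ on $[0,\infty)\times[p,q]$, so $\phi(s)=G(\tau(s),s)$ is absolutely continuous. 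Writing $Y=\Phi_{\tau(s)}(X)$, the chain rule and the log-Sobolev inequality give
\[
\phi'(s)=\frac{1}{s^2\|Y\|_{s,\rho}^{s}}\Big(\Ent_{s,\rho}(Y)-\frac{\mathcal{E}_{s,\cL}(Y)}{\alpha_s(\cL)}\Big)\leq 0 \quad\text{for a.e. } s\in[p,q].
\]
Hence $\phi(p)\geq\phi(q)$, i.e.\ $\|\Phi_T(X)\|_{p,\rho}\geq\|X\|_{q,\rho}$ with $T=\tau(p)$ exactly. Your reverse $p$-contractivity step then extends this to all $t\geq T$. This route needs no continuity of $s\mapsto\alpha_s(\cL)$, a point the paper's one-line proof also does not address.
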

\begin{proof}
 It follows from the proof of Theorem 11 in \cite{BDR20}.   
\end{proof}

Furthermore, by removing the $\varepsilon$, we have the following reverse hypercontractivity.

\begin{theorem}\label{thm:revhyper2}
Suppose that $\{\Phi_t\}_{t\geq 0}$ is a primitive KMS-symmetric quantum Markov semigroup.   
Then
\begin{align}\label{eq:revhyper2}
    \|\Phi_t(X)\|_{p, \rho} \geq \|X\|_{q, \rho}, \quad \frac{1}{2} < p\leq q< 1,
\end{align}
whenever $\displaystyle t\geq \frac{2+\log s_-^{-1}}{2\lambda_{\KMS}}\log \frac{(2q-1)(1-p)}{(2p-1)(1-q)}$.  
\end{theorem}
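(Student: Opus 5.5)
The plan is to apply Lemma \ref{lem:integration} with $\delta=\frac12$, using a pointwise lower bound on $\alpha_s(\cL)$ for $s\in(\frac12,1)$ taken from Proposition \ref{prop:lsiconstant2}. For $s\in(\frac12,1)$, write $s=\frac{r}{r+1}$ with $r=\frac{s}{1-s}>1$. Then
\[
\frac{r-1}{r+1} = 2s-1 .
\]
Proposition \ref{prop:lsiconstant2} therefore gives
\begin{align*}
\alpha_s(\cL)\geq \frac{(2s-1)\lambda_{\KMS}}{2(\log s_-^{-1}+2)}, \qquad \frac12<s<1,
\end{align*}
and in particular $\alpha_s(\cL)>0$ on $(\frac12,1)$. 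This is exactly the hypothesis of Lemma \ref{lem:integration} with $\delta=\frac12$.

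Lemma \ref{lem:integration} then yields \eqref{eq:revhyper2} whenever
\begin{align*}
t\geq \frac14\int_p^q \frac{ds}{(1-s)\alpha_s(\cL)}.
\end{align*}
Inserting the lower bound on $\alpha_s(\cL)$, it suffices to have
\begin{align*}
t\geq \frac{\log s_-^{-1}+2}{2\lambda_{\KMS}}\int_p^q \frac{ds}{(2s-1)(1-s)}.
\end{align*}
Partial fractions give
\[
\frac{1}{(2s-1)(1-s)}=\frac{2}{2s-1}+\frac{1}{1-s},
\]
so the integral equals
\[
\log\frac{2q-1}{2p-1}+\log\frac{1-p}{1-q}=\log\frac{(2q-1)(1-p)}{(2p-1)(1-q)} .
\]
This reproduces the stated time bound, with the constant $\frac{2+\log s_-^{-1}}{2\lambda_{\KMS}}$ matching exactly once the factor $\frac14$ from the lemma is combined with the bound on $\alpha_s$. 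One could write $\frac{1}{2\lambda}$ instead of $\frac{2}{4\lambda}$, but the two agree.

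The only delicate point is the justification of Lemma \ref{lem:integration} itself, i.e.\ the variable-exponent argument behind Theorem 11 of \cite{BDR20}. Choose the exponent path $p(t)$ with $p(0)=q$ and $p'(t)=-4(1-p(t))\alpha_{p(t)}(\cL)$, which moves $q$ down to $p$ in exactly the integral time above. Differentiating $\log\|\Phi_t(X)\|_{p(t),\rho}$ and using the derivative formulas from Section 2 gives a derivative proportional to
\[
(1-p)\bigl(4\mathcal{E}_{p,\cL}-\tfrac{-p'}{1-p}\Ent_{p,\rho}\bigr)/\dots ,
\]
which is nonnegative by the definition of $\alpha_p$. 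For larger $t$ one composes with reverse $p$-contractivity at the final exponent, which holds for $p\geq\frac12$ by \cite{BDR20}.

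This argument uses only the pointwise positivity of $\alpha_s$ on the path and never GNS symmetry, since the norm and entropy derivative identities are general. The path never reaches $s=\frac12$, because $p>\frac12$, so the singularity of the integrand at $\frac12$ causes no problem.
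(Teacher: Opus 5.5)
Your proposal is correct and is essentially the paper's proof. You rewrite Proposition \ref{prop:lsiconstant2} via $s=r/(r+1)$ as $\alpha_s(\cL)\geq \frac{(2s-1)\lambda_{\KMS}}{2(2+\log s_-^{-1})}$ on $(\frac12,1)$, insert this into Lemma \ref{lem:integration}, and evaluate $\int_p^q\frac{ds}{(2s-1)(1-s)}$ by partial fractions, exactly as the paper does; your additional Gross-type variable-exponent sketch of Lemma \ref{lem:integration} goes beyond what the paper writes and is sound, most cleanly if you drive the exponent ODE with the explicit smooth lower bound on $\alpha_s$ rather than with $\alpha_s(\cL)$ itself.
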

\begin{proof}
By Proposition \ref{prop:lsiconstant2}, we have that
\begin{align*}
    \alpha_p(\cL) \geq \frac{(2p-1)\lambda_{\KMS}}{2(2+\log s_-^{-1})}, \quad \frac{1}{2} < r < 1.
\end{align*}
By Lemma \ref{lem:integration}, we have that whenever
\begin{align*}
t \geq &  \frac{2+\log s_-^{-1}}{ 2\lambda_{\KMS}}  \int_p^q \frac{ds}{(2s-1)(1-s)} \\
= & \frac{2+\log s_-^{-1}}{ 2\lambda_{\KMS}} \log \frac{(2q-1)(1-p)}{(2p-1)(1-q)},
\end{align*}
the reverse hypercontractivity \eqref{eq:revhyper2} holds.
\end{proof}

\section{Perturbations by Dephasing Generators}
In this section, we shall show that any primitive KMS-symmetric quantum Markov semigroup can be perturbed by a dephasing generators to have a quantum reverse hypercontractivity.

For the GKSL representation of the generator described in \eqref{eq:gksl}, we define
\begin{align*}
    B_{p, j}= D_\rho^{\frac{1}{2p}} V_j^* D_\rho^{-\frac{1}{2p}}, \quad G_p=D_\rho^{\frac{1}{2p}}\left(\frac{1}{2}\sum_{j=1}^m V_j^*V_j+iH\right) D_\rho^{-\frac{1}{2p}}.
\end{align*}
For $X>0$, we let $X_p= D_\rho^{\frac{1}{2p}} X D_\rho^{\frac{1}{2p}}$.
Suppose that $\displaystyle D_\rho=\sum_{j=1}^n s_j P_j$, where $P_j$ are rank-one projections.
Let $\cN$ be the algebra generated by the rank-one projections $P_j$ and $\bE_{\cN}$ the trace-preserving conditional expectation onto $\cN$.
Now we define a dephasing generator $\cL_{\cN}$ as follows:
\begin{align}
\cL_{\cN}(X) = X-\sum_{j=1}^n P_j XP_j=X-\bE_{\cN}(X), \quad X\in M_n(\bC).
\end{align}
We have that $\displaystyle \cL_{\cN}(X)= \frac{1}{2} \sum_{j=1}^n [P_j, [P_j, X]]$.

We consider the generator $\bE_{\cN}\cL\bE_{\cN}$ on the algebra $\cN$, which is a classical Markov generator, denoted by $\cL_{\cl}$, i.e.
\begin{align*}
    \left(\cL_{\cl}(\vec{t})\right)_{j}=\sum_{k\neq j} w_{jk} (t_j-t_k),
\end{align*}
where $\displaystyle w_{jk}=\sum_{\ell=1}^m |(V_{\ell})_{kj}|^2$.
The KMS symmetry of $\cL$ implies that $\cL_{\cl}$ is reversible with respect to $\vec{s}=(s_1, \ldots, s_n)^{\mathsf{T}}$.
Let $\lambda_{\cl}$ be the spectral gap of $\cL_{\cl}$.
We have that $0<\lambda_{\KMS}\leq \lambda_{\cl}$.
Moreover, the classical Markov chain is irreducible.
Let
\begin{align*}
    \alpha_{\cl}=\alpha_1(\cL_{\cl}).
\end{align*}
By Corollary 16 in \cite{BDR20}, we see that $\alpha_p(\cL_{\cl}) \geq \alpha_1(\cL_{\cl})$ for $0<p<1$.
Note that for any $\vec{t}=(t_1, \ldots, t_n)^{\mathsf{T}}$ with $\displaystyle \sum_{j=1}^n s_j t_j=1$ and $t_j\geq 0$, 
We have that 
\begin{align*}
\cI_{\cL_{\cl}}(\vec{t})\geq s_-\lambda_{\cl} \text{Var}_{\vec{s}} (\vec{t}) \geq  s_-\lambda_{\cl}\Ent_{\vec{s}} (\vec{t}).
\end{align*}
This show that $\displaystyle \alpha_{\cl} \geq \frac{s_- \lambda_{\cl}}{4}$.

\begin{lemma}\label{lem:infty}
 For all $X>0$ with $\rho(X)=1$,  we have that $\mathcal{E}_{p, \cL}(X)\geq -C_{\delta, \rho,\cL}$ for some constant $C_{\delta, \rho,\cL}$ depending only on $\delta, \rho$ and $\cL$.
Moreover,
suppose that $\rho(X_r)=1$, $X_r>0$, $p_r$ in $[\delta, 1]$ and $X_r$ convergent to $X_0$ where $X_0$ is a singular diagonal matrix. 
 Then we have that 
   \begin{align*}
       \mathcal{E}_{p_r, \rho} (X_r)\to \infty.
   \end{align*}
\end{lemma}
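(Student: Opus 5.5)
The plan is to write $\mathcal{E}_{p,\cL}$ explicitly in terms of $X_p=D_\rho^{\frac{1}{2p}}XD_\rho^{\frac{1}{2p}}$ and the operators $B_{p,j}$, $G_p$ introduced above. Then I would isolate a bounded ``Hamiltonian/anticommutator'' part and a ``jump'' part. The jump part should be bounded below in general and should blow up near singular diagonal matrices.

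\emph{Rewriting the form.} First, since $I_{\frac{p}{p-1},p}(X)=D_\rho^{-\frac{p-1}{2p}}X_p^{p-1}D_\rho^{-\frac{p-1}{2p}}$, the KMS inner product in \eqref{eq:pvariance} becomes
\begin{align*}
\left\langle I_{\frac{p}{p-1},p}(X),\cL(X)\right\rangle_\rho=\Tr\left(X_p^{p-1}D_\rho^{\frac{1}{2p}}\cL(X)D_\rho^{\frac{1}{2p}}\right).
\end{align*}
Inserting \eqref{eq:gksl}, the term $D_\rho^{\frac{1}{2p}}V_j^*XV_jD_\rho^{\frac{1}{2p}}$ becomes $B_{p,j}X_p\widetilde B_{p,j}^*$, where $\widetilde B_{p,j}=D_\rho^{\frac{1}{2p}}V_j^*D_\rho^{-\frac{1}{2p}}$ with the exponent sign flipped on the right. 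Likewise the anticommutator and Hamiltonian part becomes $G_pX_p+X_p\widetilde G_p$. Hence
\begin{align*}
\mathcal{E}_{p,\cL}(X)=\frac{p^2}{4(1-p)}\Big[\sum_j\Tr\big(X_p^{p-1}B_{p,j}X_p\widetilde B_{p,j}^*\big)-\Tr\big(X_p^{p}(G_p+\widetilde G_p)\big)\Big].
\end{align*}
The second trace is bounded by $\|G_p+\widetilde G_p\|\,\|X\|_{p,\rho}^p$. Since $\rho(X)=1$ and $p\le 1$, we have $\|X\|_{p,\rho}\le \|X\|_{1,\rho}=1$. The operator norms are uniform in $p\in[\delta,1]$, with bounds depending on $s_\pm$, $\delta$, $V_j$ and $H$.

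\emph{Lower bound.} For the jump terms I would split $\widetilde B_{p,j}=B_{p,j}+(\widetilde B_{p,j}-B_{p,j})$. The diagonal piece $\Tr(X_p^{p-1}B_{p,j}X_pB_{p,j}^*)$ is nonnegative. For the cross piece I would apply Cauchy--Schwarz with respect to the positive weights $X_p^{p-1}$ and $X_p$. This bounds it by $\epsilon$ times the positive piece plus $C_\epsilon\Tr(X_p^p)$, so the cross terms can be absorbed.

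The factor $1/(1-p)$ is the delicate point. The bracket vanishes at $p=1$ because $\cL^*(D_\rho)=0$ and $\cL(I)=0$. So I would write the bracket as $\int_p^1\partial_r(\cdots)\,dr$ and bound the $r$-derivative below uniformly on the compact set $\{\rho(X)=1,\,X\ge0\}$ with $r\in[\delta,1]$. At $p=1$, \eqref{eq:dirfish} should match the limit. Compactness plus continuity away from singular $X$, together with the blow-up in the second step near singular $X$, then yields a uniform constant $C_{\delta,\rho,\cL}$.

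\emph{Blow-up.} Let $X_r\to X_0$ with $X_0$ singular and diagonal in the eigenbasis of $D_\rho$. Then $(X_r)_{p_r}$ converges to a singular matrix $D_\rho^{\frac{1}{2p}}X_0D_\rho^{\frac{1}{2p}}$, which is also diagonal, with kernel spanned by some eigenvectors $P_k$. Therefore $X_p^{p-1}$ blows up on $\ker X_0$, with $p-1\le 0$ and $p$ bounded away from $0$.

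By primitivity, the classical chain $\cL_{\cl}$ is irreducible. So some $w_{jk}=\sum_\ell|(V_\ell)_{kj}|^2>0$ with $j$ in the support of $X_0$ and $k$ in its kernel. This forces $P_kB_{p,\ell}X_pB_{p,\ell}^*P_k$ to stay bounded below by a positive constant. The positive jump term then diverges while the remaining terms stay bounded, giving $\mathcal{E}_{p_r,\cL}(X_r)\to\infty$.

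The main obstacle I expect is uniformity as $p_r\to1$, where the $1/(1-p)$ prefactor meets a blowing-up numerator and the non-GNS cross terms with $\widetilde B\neq B$. There I would try to compare directly with the Fisher information \eqref{eq:dirfish}, which is known to diverge at the boundary of the state space.
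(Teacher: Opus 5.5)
There is a genuine gap, and it sits at the obstacle you flag yourself: the factor $\frac{1}{1-p}$. As long as you keep $\sum_j\Tr(X_p^{p-1}B_{p,j}X_pB_{p,j}^*)$ and $\Tr(X_p^p(G_p+G_p^*))$ as two separate pieces, neither is controlled as $p\to1$. Your bound $\|G_p+G_p^*\|\,\|X\|_{p,\rho}^p$ gets multiplied by $\frac{p^2}{4(1-p)}$, and in the blow-up step the claim that ``the remaining terms stay bounded'' is false when $p_r\to1$. Worse, if $p_r\to 1$ (e.g.\ $p_r\equiv 1$), $X_{p_r}^{p_r-1}$ need not blow up at all, so your divergence mechanism disappears.

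Your proposed repairs do not close this. The $r$-derivative of the bracket contains terms like $X_r^{r-1}\log X_r$, so it is not bounded on $\{\rho(X)=1,\,X\ge 0\}$, which contains singular matrices. The fallback ``continuity away from singular $X$ plus blow-up near singular $X$'' would need blow-up near \emph{every} singular matrix. You, and the lemma, only get it near singular \emph{diagonal} ones, and it genuinely fails elsewhere. In the paper's $n=2$ examples, $\mathcal{E}_{p,\cL}(P+\varepsilon I)$ stays bounded as $\varepsilon\to0$ for a non-diagonal rank-one projection $P$, so compactness gives no lower bound near such points. Incidentally, there is no non-GNS cross term. Since $D_\rho^{-\frac{1}{2p}}V_jD_\rho^{\frac{1}{2p}}=B_{p,j}^*$, the jump term is exactly $B_{p,j}X_pB_{p,j}^*$ and the Hamiltonian part is $G_pX_p+X_pG_p^*$, so your Cauchy--Schwarz absorption is vacuous.

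The missing idea is to add and subtract $\Tr(X_p^p\sum_jB_{p,j}^*B_{p,j})$. In an eigenbasis of $X_p$ with eigenvalues $\mu_k$ this gives
\begin{align*}
\frac{4}{p^2}\mathcal{E}_{p,\cL}(X)&=\sum_{j,k,\ell}|(B_{p,j})_{k\ell}|^2\mu_\ell^p\,\frac{(\mu_\ell/\mu_k)^{1-p}-1}{1-p}+\Tr\left(X_p^pK_p\right),\\
K_p&=\frac{1}{1-p}\Big(\sum_j B_{p,j}^*B_{p,j}-G_p-G_p^*\Big).
\end{align*}
Here $K_p$ extends continuously to $p=1$, because $\sum_jB_{1,j}^*B_{1,j}=G_1+G_1^*$ is exactly $\cL^*(D_\rho)=0$. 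Since $\Tr(X_p^p)=\|X\|_{p,\rho}^p\le 1$, this gives $|\Tr(X_p^pK_p)|\le\sup_{[\delta,1]}\|K_p\|$.

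In the first sum, use $\frac{x^{1-p}-1}{1-p}\ge\log x$, read as $\log x$ at $p=1$. Terms with $\mu_\ell\ge\mu_k$ are nonnegative. For $\mu_\ell<\mu_k\le 1$ one has $\mu_\ell^p\log(\mu_\ell/\mu_k)\ge\mu_\ell^p\log\mu_\ell\ge-\frac{1}{e\delta}$, and $\sum_{k,\ell}|(B_{p,j})_{k\ell}|^2=\|B_{p,j}\|_{HS}^2$ is bounded on $[\delta,1]$. This gives the lower bound for every $X>0$ and every $p\in[\delta,1]$ at once, with no compactness argument.

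The same inequality gives the blow-up uniformly in $p_0\in[\delta,1]$, including $p_0=1$. The coupled term with $\mu_\ell\ge c>0$ and $\mu_k\to0$ is at least $|(B_{p_r,j})_{k\ell}|^2\mu_\ell^{p_r}\log(\mu_\ell/\mu_k)\to\infty$, while everything else is bounded below. Your irreducibility argument producing a nonvanishing coupling between the support and the kernel of the diagonal limit is fine and matches the paper's.
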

\begin{proof}
   Suppose that $X_p$ is in the diagonal form with eigenvalues $\mu_k$.
We have that 
% \begin{align*}
%     \frac{4}{p^2} \mathcal{E}_{p, \cL}(X)
% =\sum_{e, k, \ell} |(B_{p, e})_{k\ell}|^2 \mu_\ell^p \frac{(\mu_\ell/\mu_k)^{1-p}-1}{1-p}
% +\frac{\Tr(\sum_e B_{p, e}^* B_{p,e}-\sum_{e}W_e^*W_e)}{1-p}.
% \end{align*}
\begin{align}\label{eq:energyexpansion}
    \frac{4}{p^2} \mathcal{E}_{p, \cL}(X)
=\sum_{j, k, \ell} |(B_{p, j})_{k\ell}|^2 \mu_\ell^p \frac{(\mu_\ell/\mu_k)^{1-p}-1}{1-p}
+\frac{\Tr\left(X_p^p \left(\sum_{j=1}^m B_{p, j}^* B_{p,j}-G_p- G_p^*\right)\right)}{1-p}.
\end{align}
Let
\begin{align*}
    K_p=\frac{  \sum_{j=1}^m B_{p, j}^* B_{p,j}-G_p- G_p^*  }{1-p}.
\end{align*}
By the fact that $\cL^*(D_\rho)=0$ , we see that $\displaystyle \sum_{j=1}^m B_{1, j}^* B_{1, j}=G_1+G_1^*$ and $K_p$ extends continuous to $p=1$.

For the first term on the right hand side, by the fact that $\displaystyle \frac{(\mu_\ell/\mu_k)^{1-p}-1}{1-p}\geq \log \frac{\mu_\ell}{\mu_k}$, we see that 
\begin{align*}
  \mu_\ell^p\log \frac{\mu_\ell}{\mu_k} \geq \left\{\begin{array}{ll} 
   \mu_\ell^p\log s_-\mu_\ell\geq \displaystyle -\frac{s_-^{-p}}{ep} & \mu_\ell \leq \mu_k\\
   0 & \mu_\ell \geq \mu_k
  \end{array} \right..
\end{align*}
Hence there exists a constant $C_{\delta, \rho, \cL}$ such that $\mathcal{E}_{\rho, \cL}(X)\geq -C_{\delta, \rho, \cL}$.

By the compactness of $[\delta, 1]$, we assume that $p_r$ convergent to $p_0$.
Let $\widetilde{X}_r=D_\rho^{\frac{1}{2p_r}}X_r D_\rho^{\frac{1}{2p_r}}$ and $P$ is the range projection $X_0$.
We have that $P\neq 0, I$.
Irreducibility of the classical part implies that there exists index $j$ such that $(I-P)B_{p_0, j}P\neq 0$.
After choosing convergent spectral bases along a further subsequence, at least one
coefficient connecting a positive limiting eigenvalue $\mu_\ell \geq c>0$ for some fixed $c$ to a vanishing eigenvalue $\mu_k$ stays nonzero.
The corresponding term in Equation \eqref{eq:energyexpansion} diverges, since
\begin{align*}
\frac{(\mu_{\ell}/\mu_{k})^{1-p_r}-1}{1-p_r} \geq \log \frac{\mu_{\ell}}{\mu_{k}}\to \infty.
\end{align*}
This completes the proof of the lemma.
\end{proof}

% \begin{lemma}
%  there exists a constant $c_{\delta, \rho}$ such that 
%  \begin{align*}
%      \mathcal{E}_{p, \cL_\cN}(X) \geq c_{\delta, \rho} \|X-\bE_{\cN}(X)\|_{HS}^2
%  \end{align*}
% \end{lemma}

\begin{lemma}\label{lem:lowerbd}
Let $0<\eta< \alpha_{\cl}$ and 
\begin{align*}
    U_{\epsilon}=\{X>0: \rho(X)=1, \|X-I\|_{HS}<\epsilon\}.
\end{align*}
For $\epsilon$ small enough, we have that there exists $\kappa_0>0$ such that for all $X\in U_\epsilon$,
 \begin{align*}
     \mathcal{E}_{p, \cL}(X) -\eta \Ent_{p, \rho}(X) +\kappa_0 \mathcal{E}_{p, \cL_{\cN}}(X) \geq \frac{(\alpha_{\cl}-\eta)\delta^2 s_-}{8}\|I-X\|_{HS}^2.
 \end{align*}
\end{lemma}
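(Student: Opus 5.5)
We expand all three functionals to second order around $I$, using $X=I+Y$ with $Y=Y^*$, $\rho(Y)=0$, $\|Y\|_{HS}<\epsilon$, and uniformly in $p\in[\delta,1]$. The proof of Proposition \ref{prop:mlsibd1} gives
\begin{align*}
\mathcal{E}_{p,\cL}(I+Y)=\frac{p^2}{4}\Re\Tr(Y\bfW_{p,\rho}\cL(Y))+O(\|Y\|^3),\qquad
\Ent_{p,\rho}(I+Y)=\frac{p^2}{2}\Tr(Y\bfR_{p,\rho}(Y))+O(\|Y\|^3),
\end{align*}
and the analogous expansion for $\cL_{\cN}$. The first step is to check that the $O(\|Y\|^3)$ remainders are uniform for $p\in[\delta,1]$. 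This should hold because all the maps are analytic in $(p,X)$ near $X=I$, and $p$ ranges over a compact set bounded away from $0$. The case $p=1$ is covered by continuity of $\mathcal{E}_{p,\cL}$ in $p$, through the continuity of $K_p$ noted in the proof of Lemma \ref{lem:infty}.

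The second step is to split $Y=Y_d+Y_o$ with $Y_d=\bE_{\cN}(Y)$ and $Y_o=Y-\bE_{\cN}(Y)$, and study the resulting quadratic form
\begin{align*}
Q_p(Y)=\frac{p^2}{4}\Big[\Re\Tr(Y\bfW_{p,\rho}\cL(Y))-2\eta\Tr(Y\bfR_{p,\rho}(Y))+\kappa_0\Tr(Y\bfW_{p,\rho}(Y_o))\Big].
\end{align*}
The dephasing term uses $\cL_{\cN}(Y)=Y_o$. Since $\bfW_{p,\rho}$ and $\bfR_{p,\rho}$ are Schur multipliers in the eigenbasis of $D_\rho$, they preserve the split into diagonal and off-diagonal parts. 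On the diagonal block they act by multiplication by $D_\rho$, and the diagonal block of $\cL$ is $\cL_{\cl}$. The diagonal--diagonal part of $Q_p$ is therefore
\begin{align*}
\frac{p^2}{4}\big[\langle Y_d,\cL_{\cl}Y_d\rangle_{\vec s}-2\eta\,\mathrm{Var}_{\vec s}(Y_d)\big].
\end{align*}
The linearization of $\mathcal{E}_{1,\cL_{\cl}}\geq\alpha_{\cl}\Ent_{1}$ at $I$ gives $\langle Y_d,\cL_{\cl}Y_d\rangle_{\vec s}\geq 2\alpha_{\cl}\mathrm{Var}_{\vec s}(Y_d)$. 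Using also $\mathrm{Var}_{\vec s}(Y_d)\geq s_-\|Y_d\|_{HS}^2/2$ (a crude bound, since $\rho(Y_d)=0$), this part is at least $\frac{p^2(\alpha_{\cl}-\eta)s_-}{4}\|Y_d\|_{HS}^2\ge \frac{(\alpha_{\cl}-\eta)\delta^2s_-}{4}\|Y_d\|^2_{HS}$. This leaves room to spare for the factor $1/8$.

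The third step handles the cross terms and the off-diagonal block. The Schur weights $w_p(s_j,s_k)$ and $r_p(s_j,s_k)$ are bounded above by a constant $M$ and below by $s_-$, uniformly in $p\in[\delta,1]$, because each lies between $\min$ and $\max$ of $s_j,s_k$. Hence every term involving $Y_o$, apart from the $\kappa_0$ term, is bounded by $C\|Y_o\|(\|Y_d\|+\|Y_o\|)$, with $C$ depending on $\|\cL\|$, $M$ and $\eta$. The $\kappa_0$ term is at least $\kappa_0\frac{p^2}{4}s_-\|Y_o\|^2$. Absorbing the cross terms by Cauchy--Schwarz,
\begin{align*}
C\|Y_o\|\|Y_d\|\le \tfrac{(\alpha_{\cl}-\eta)\delta^2s_-}{16}\|Y_d\|^2+C'\|Y_o\|^2,
\end{align*}
and then choosing $\kappa_0$ large gives $Q_p(Y)\ge c\|Y\|_{HS}^2$ with $c$ strictly larger than $\frac{(\alpha_{\cl}-\eta)\delta^2 s_-}{8}$. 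Finally, we choose $\epsilon$ small enough that the uniform cubic remainder, which is at most $C''\epsilon\|Y\|^2$, does not eat the surplus margin.

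I expect the main obstacle to be the uniformity of the third-order remainders in $p\in[\delta,1]$, especially near $p=1$, where $\mathcal{E}_{p,\cL}$ carries the singular prefactor $1/(p-1)$. I would handle this by writing $\mathcal{E}_{p,\cL}$ through the expansion \eqref{eq:energyexpansion}. The divided differences $\frac{(\mu_\ell/\mu_k)^{1-p}-1}{1-p}$ and $K_p$ are jointly real-analytic in $(p,\mu)$ near $\mu=1$ and $p\in[\delta,1]$. A Taylor estimate with a remainder bounded by a supremum over a compact neighborhood therefore gives the uniform $O(\|Y\|^3)$ bound.
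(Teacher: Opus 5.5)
Your proposal is correct and is essentially the paper's own argument: the second-order expansions \eqref{eq:energe1} and \eqref{eq:entropyepsilon}, the split of $Y$ into $\bE_{\cN}(Y)$ and $Y-\bE_{\cN}(Y)$, the classical inequality on the diagonal block, Young's inequality for the cross terms, a large $\kappa_0$ to dominate the off-diagonal block, and finally a small $\epsilon$ to absorb the cubic remainder. The differences are minor: you get the diagonal bound by linearizing the classical inequality at $p=1$ and noting that the diagonal quadratic form does not depend on $p$ apart from the factor $p^2/4$, where the paper instead invokes $\alpha_p(\cL_{\cl})\geq\alpha_{\cl}$; you also make explicit the uniformity in $p\in[\delta,1]$ of the remainders and constants, which the paper leaves implicit; and there is one harmless slip, since $r_p(s,t)$ need not lie between $\min\{s,t\}$ and $\max\{s,t\}$ (for $p<1$, $r_p(s,t)/\min\{s,t\}\to 0$ as $s/t\to\infty$), but you only use an upper bound on $r_p$, and $0\leq r_p(s,t)\leq\sqrt{st}$ does hold.
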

\begin{proof}
Suppose that $X=I+Y$ with $Y=Y^*$, $\rho(Y)=0$.
Let $Y=Y_1+Y_2$, where $Y_1=\bE_{\cN}(Y)$, $Y_2=Y-\bE_{\cN}(Y)$.
By Equations \eqref{eq:energe1} and \eqref{eq:entropyepsilon}, we have that 
\begin{align*}
 & \mathcal{E}_{p,\cL}(X)-\eta \Ent_{p, \rho}(X) \\
 =& \frac{p^2}{4}\Re \Tr(Y  (\bfW_{p, \rho} \cL  (Y))- \eta\frac{p^2}{2} \Tr( Y \bfR_{p, \rho} Y) +O\left(\|Y\|_{HS}^3\right).
\end{align*}
By taking 
\begin{align*}
    \mathcal{Q}(X_1, X_2)=\frac{p^2}{4}\Re \Tr(X_1^*  \bfW_{p, \rho} \cL  (X_2))- \eta\frac{p^2}{2} \Tr( X_1^* \bfR_{p, \rho} X_2) , \quad X_1, X_2\in M_n(\bC),
\end{align*}
we have that $|\mathcal{Q}(X_1, X_2)|\leq \|\mathcal{Q}\|\|X_1\|_{HS}\|X_2\|_{HS}$, where $\|\mathcal{Q}\|$ is the norm of $\mathcal{Q}$.
Now we have that 
\begin{align*}
\mathcal{Q}(Y, Y)= & \mathcal{Q}(Y_1, Y_1)+ \mathcal{Q}(Y_1, Y_2)+ \mathcal{Q}(Y_2, Y_1)+ \mathcal{Q}(Y_2, Y_2) \\
\geq & \mathcal{Q}(Y_1, Y_1)-\|\mathcal{Q}\|\|Y_1\|_{HS}\|Y_2\|_{HS} -\|\mathcal{Q}\|\|Y_2\|_{HS} ^2.
\end{align*}
Now we shall estimate $\mathcal{Q}(Y_1, Y_1)$ as follows.
Note that 
\begin{align*}
   \mathcal{E}_{p,\cL}(I+\varepsilon Y_1)-\eta \Ent_{p, \rho}(I+\varepsilon Y_1)  \geq (\alpha_{\cl}-\eta)\Ent_{p, \rho}(I+tY_1).
\end{align*}
By Equations \eqref{eq:energe1} and \eqref{eq:entropyepsilon}, we can divide by $\varepsilon^2$ and let $\varepsilon\to 0$ to obtain that 
\begin{align*}
    \mathcal{Q}(Y_1, Y_1)\geq \frac{(\alpha_{\cl}-\eta)p^2 s_-}{2}\|Y_1\|_{HS}^2
    \geq \frac{(\alpha_{\cl}-\eta)\delta^2 s_-}{2}\|Y_1\|_{HS}^2.
\end{align*}
Let $\displaystyle \eta_0= \frac{(\alpha_{\cl}-\eta)\delta^2 s_-}{2}$.
Applying Young's inequality, we obtain that 
\begin{align*}
   2\|\mathcal{Q}\|\|Y_1\|_{HS}\|Y_2\|_{HS} \leq \frac{\eta_0}{2}\|Y_1\|_{HS}^2 +\frac{2\|\mathcal{Q}\|^2}{\eta_0}\|Y_2\|_{HS}^2 .
\end{align*}
Hence
\begin{align*}
\mathcal{Q}(Y, Y)\geq \frac{\eta_0}{2}\|Y_1\|_{HS}^2-\left(\frac{2\|\mathcal{Q}\|^2}{\eta_0} + \|\mathcal{Q}\|\right)\|Y_2\|_{HS}^2.
\end{align*}
On the other hand, we have that
\begin{align*}
    \frac{p^2}{4}\Re \Tr(Y  \bfW_{p, \rho} \cL_\cN  (Y))\geq \frac{p^2s_-}{4}\|Y_2\|_{HS}^2 \geq \frac{\delta^2s_-}{4}\|Y_2\|_{HS}^2.
\end{align*}
By taking 
\begin{align*}
    \kappa_0\geq \frac{4}{\delta^2 s_-} \left(\frac{2\|\mathcal{Q}\|^2}{\eta_0} + \|\mathcal{Q}\|+\frac{\eta_0}{2}\right),
\end{align*}
we have that 
\begin{align*}
\mathcal{Q}(Y, Y)+\kappa_0    \frac{p^2}{4}\Re \Tr(Y  \bfW_{p, \rho} \cL_\cN  (Y))
 \geq \frac{\eta_0}{2}\|Y\|_{HS}^2.
\end{align*}
By choosing $Y$ with $\|Y\|_{HS}$ small enough, we have that 
\begin{align*}
  \mathcal{E}_{p, \rho}(X) -\eta \Ent_{p, \rho}(X) +\kappa_0 \mathcal{E}_{p, \cL_{\cN}}(X) \geq \frac{\eta_0}{4}\|I-X\|_{HS}^2.
  \end{align*}
  This completes the proof of the lemma.
\end{proof}

\begin{lemma}\label{lem:34}
We have that
\begin{align*}
    \mathcal{E}_{p, \cL_{\cN}}(X)\geq \frac{\delta^2}{4} s_-^{2/\delta}\|X-\bE_{\cN}(X)\|_{HS}^2. 
\end{align*}
\end{lemma}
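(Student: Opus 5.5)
The plan is to compute $\mathcal{E}_{p,\cL_{\cN}}$ in closed form and reduce the statement to a scalar mean-value estimate. Throughout, $\delta\le p<1$ as in the surrounding lemmas; the case $p=1$ then follows by continuity. Since each $P_j$ commutes with $D_\rho$, conjugation by powers of $D_\rho$ commutes with $\bE_{\cN}$. Write $A=X_p=D_\rho^{\frac1{2p}}XD_\rho^{\frac1{2p}}$. From the definition \eqref{eq:pvariance} we have $I_{\frac{p}{p-1},p}(X)=D_\rho^{-\frac{p-1}{2p}}A^{p-1}D_\rho^{-\frac{p-1}{2p}}$. Pairing this with $\cL_{\cN}(X)=X-\bE_{\cN}(X)$ in the KMS inner product produces exactly the factors $D_\rho^{\frac1{2p}}$ on both sides. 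This gives
\begin{align*}
\mathcal{E}_{p,\cL_{\cN}}(X)=\frac{p^2}{4(1-p)}\Big(\Tr\big(A^{p-1}\bE_{\cN}(A)\big)-\Tr(A^p)\Big).
\end{align*}

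Next I expand in an eigenbasis of $A$. Write $A=\sum_k\mu_k Q_k$ with rank-one $Q_k$, and set $c_{kl}=\sum_j\Tr(P_jQ_kP_jQ_l)\ge 0$. This matrix is symmetric and doubly stochastic. Then $\Tr(A^{p-1}\bE_{\cN}(A))-\Tr(A^p)=\sum_{k,l}c_{kl}\mu_l(\mu_k^{p-1}-\mu_l^{p-1})$. Symmetrizing gives $\frac12\sum_{k,l}c_{kl}(\mu_k-\mu_l)(\mu_l^{p-1}-\mu_k^{p-1})$. In the same way,
\begin{align*}
\|A-\bE_{\cN}(A)\|_{HS}^2=\Tr(A^2)-\Tr(A\,\bE_{\cN}(A))=\frac12\sum_{k,l}c_{kl}(\mu_k-\mu_l)^2.
\end{align*}
By the mean value theorem for $t\mapsto t^{p-1}$,
\begin{align*}
(\mu_k-\mu_l)(\mu_l^{p-1}-\mu_k^{p-1})\ge(1-p)\,\xi^{p-2}(\mu_k-\mu_l)^2
\end{align*}
for some $\xi$ between $\mu_k$ and $\mu_l$. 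Provided $\|A\|_\infty\le 1$, we have $\xi^{p-2}\ge 1$. This would give $\mathcal{E}_{p,\cL_{\cN}}(X)\ge\frac{p^2}{4}\|A-\bE_{\cN}(A)\|_{HS}^2$.

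Finally I return to $X$. We have $X-\bE_{\cN}(X)=D_\rho^{-\frac1{2p}}(A-\bE_{\cN}(A))D_\rho^{-\frac1{2p}}$. Hence $\|X-\bE_{\cN}(X)\|_{HS}\le s_-^{-1/p}\|A-\bE_{\cN}(A)\|_{HS}$, that is, $\|A-\bE_{\cN}(A)\|_{HS}^2\ge s_-^{2/p}\|X-\bE_{\cN}(X)\|_{HS}^2$. Since $s_-<1$ and $p\ge\delta$, we get $s_-^{2/p}\ge s_-^{2/\delta}$ and $p^2\ge\delta^2$. Together these give the stated bound $\frac{\delta^2}{4}s_-^{2/\delta}\|X-\bE_{\cN}(X)\|_{HS}^2$.

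The main obstacle is the requirement $\|A\|_\infty\le1$. The left-hand side is homogeneous of degree $p$ in $X$ while the right-hand side has degree $2$, so some normalization is needed. I would work under the normalization used in the surrounding lemmas, namely $\rho(X)=1$ and $X$ in the bounded region considered there. For $p=1$ the bound is automatic, since $A$ is then a density and so $\|A\|_\infty\le1$. For $p<1$, I would try first to bound $\xi^{p-2}$ below by $\|A\|_\infty^{p-2}$ directly. If this is not $\ge1$ on the relevant set, I would absorb the discrepancy into the constant by restricting to the neighbourhood $U_\epsilon$ of $I$, where $A$ is close to $D_\rho^{1/p}$ and so $\|A\|_\infty\le s_+^{1/p}+O(\epsilon)<1$.
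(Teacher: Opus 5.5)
Your argument is the paper's own proof. It uses the same eigen-expansion of $\mathcal{E}_{p,\cL_{\cN}}$ in a basis diagonalizing $X_p$, and the same scalar bound $\frac{(\mu_k-\mu_\ell)(\mu_\ell^{p-1}-\mu_k^{p-1})}{1-p}\ge(\mu_k-\mu_\ell)^2$. It uses the same identity $\sum_j\|[P_j,X_p]\|_{HS}^2=2\|X_p-\bE_{\cN}(X_p)\|_{HS}^2$, which is your $\frac12\sum c_{kl}(\mu_k-\mu_l)^2$. Finally, it passes from $X_p$ to $X$ at the same cost $s_-^{2/p}\ge s_-^{2/\delta}$. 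You are right that a normalization is needed. The left side is homogeneous of degree $p$ in $X$ and the right side of degree $2$, so the statement as written fails for $\lambda X$ with $\lambda$ large. The paper's proof silently uses $\mu_k\le 1$ at the scalar-bound step.

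The one step you leave open for $p<1$ closes at once, so you should not use your fallback. Under $\rho(X)=1$ alone, for every $p\in(0,1]$,
\[
X_p=D_\rho^{\frac1{2p}-\frac12}\bigl(D_\rho^{1/2}XD_\rho^{1/2}\bigr)D_\rho^{\frac1{2p}-\frac12}.
\]
Here $\|D_\rho^{\frac1{2p}-\frac12}\|_\infty=s_+^{\frac1{2p}-\frac12}\le 1$, and $\|D_\rho^{1/2}XD_\rho^{1/2}\|_\infty\le\Tr(D_\rho X)=\rho(X)=1$. Hence $\|X_p\|_\infty\le 1$, so your $\xi\le\max(\mu_k,\mu_l)\le 1$ and $\xi^{p-2}\ge 1$.

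Restricting to $U_\epsilon$ would prove too little. In the proof of Theorem \ref{thm:pertubation}, the lemma is applied to arbitrary $X_n>0$ with $\rho(X_n)=1$, which need not be near $I$. There it is used to conclude $\bE_{\cN}(X_0)=X_0$ from $\mathcal{E}_{p_n,\cL_{\cN}}(X_n)\to 0$. The bound is therefore needed on all of $\{X>0:\rho(X)=1\}$, and the observation above supplies exactly that.
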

\begin{proof}
Suppose that $\mu_1, \ldots, \mu_n$ are eigenvalues of $X_p$.
Then 
\begin{align*}
    \mathcal{E}_{p, \cL_{\cN}}(X)=& \frac{p^2}{8}\sum_{j,k, \ell} |(P_j)_{k, \ell}|^2 \frac{(\mu_k -\mu_\ell)(\mu_\ell^{p-1} -\mu_k^{p-1})}{1-p}\\
   \geq & \frac{p^2}{8}\sum_{j,k, \ell} |(P_j)_{k, \ell}|^2  (\mu_k -\mu_\ell)^2\\
   \geq & \frac{\delta^2}{8}  \sum_{j=1}^n \|[P_j, X_p]\|_{HS}^2\\
   =& \frac{\delta^2}{4} \|X_p-\bE_{\cN}(X_p)\|_{HS}^2\\
   \geq & \frac{\delta^2}{4} s_-^{2/p}\|X-\bE_{\cN}(X)\|_{HS}^2\\
    \geq & \frac{\delta^2}{4} s_-^{2/\delta}\|X-\bE_{\cN}(X)\|_{HS}^2.
\end{align*}
This completes the proof of the lemma.
\end{proof}

\begin{theorem}\label{thm:pertubation}
Suppose that $\{\Phi_t\}_{t\geq 0}$ is a primitive KMS-symmetric quantum Markov semigroup.
 Fix $0< \delta<1$ and $0<\eta< \alpha_{\cl}$.
 There exists $\kappa_{\delta}(\eta)<\infty$ such that
 \begin{align*}
      \alpha_p(\cL+\kappa\cL_{\cN})\geq \eta,
 \end{align*}
 for all $p\in [\delta, 1]$ and $\kappa \geq \kappa_{\delta}(\eta)$.
 Consequently, we have that 
 \begin{align*}
     \left\|e^{-t(\cL+\kappa \cL_{\cN})}(X)\right\|_{p, \rho} \geq \|X\|_{q, \rho}, 
 \end{align*}
 for all $X>0$,  $\delta < p\leq q <1$ and $\displaystyle t\geq \frac{1}{4\eta}\log \frac{1-p}{1-q}$.
\end{theorem}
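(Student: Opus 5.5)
The plan is a compactness argument on the normalized set $\{X>0:\rho(X)=1\}$, split into a neighborhood of $I$, a region near the singular boundary, and a compact middle region. Both $\mathcal{E}_{p,\cdot}$ and $\Ent_{p,\rho}$ are homogeneous of degree $p$ in $X$, so it suffices to treat $\rho(X)=1$. The $p$-Dirichlet form is linear in the generator, so
\begin{align*}
\mathcal{E}_{p,\cL+\kappa\cL_{\cN}}=\mathcal{E}_{p,\cL}+\kappa\,\mathcal{E}_{p,\cL_{\cN}}.
\end{align*}
Hence, for every $p\in[\delta,1]$ and every normalized nonscalar $X$, we must show
\begin{align*}
F_{p,\kappa}(X):=\mathcal{E}_{p,\cL}(X)-\eta\Ent_{p,\rho}(X)+\kappa\,\mathcal{E}_{p,\cL_{\cN}}(X)\geq 0 .
\end{align*}
Since $\mathcal{E}_{p,\cL_{\cN}}\geq 0$ by Lemma \ref{lem:34}, $F_{p,\kappa}$ is nondecreasing in $\kappa$. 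It is therefore enough to produce one $\kappa_\delta(\eta)$ that works uniformly in $p$. The reverse hypercontractivity conclusion then follows from Lemma \ref{lem:integration} (or directly from the proof of Theorem 11 in \cite{BDR20}) with $\alpha_s\geq\eta$, since $\frac14\int_p^q\frac{ds}{(1-s)\eta}=\frac{1}{4\eta}\log\frac{1-p}{1-q}$.

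\textbf{Near $I$.} Lemma \ref{lem:lowerbd} gives $\epsilon>0$ and $\kappa_0$ with $F_{p,\kappa_0}\geq 0$ on $U_\epsilon$ for all $p\in[\delta,1]$. Two points need checking here. First, the $O(\|Y\|_{HS}^3)$ remainders in \eqref{eq:energe1} and \eqref{eq:entropyepsilon} must be uniform in $p\in[\delta,1]$; this holds because $\bfW_{p,\rho}$, $\bfR_{p,\rho}$ and the relevant derivatives are continuous in $p$ on this compact interval. Second, the classical input $\mathcal{Q}(Y_1,Y_1)$ uses that on diagonal $Y_1$ the forms reduce to those of $\cL_{\cl}$, with $\alpha_p(\cL_{\cl})\geq\alpha_{\cl}>\eta$.

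\textbf{Away from $I$, near the boundary.} Set
\begin{align*}
K=\{X\geq 0:\rho(X)=1,\ \|X-I\|_{HS}\geq\epsilon\},
\end{align*}
which is compact. I argue by contradiction: suppose there are $\kappa_r\to\infty$, $p_r\in[\delta,1]$ and $X_r>0$ in $K$ with $F_{p_r,\kappa_r}(X_r)<0$. Pass to a subsequence with $p_r\to p_0$ and $X_r\to X_0$.
\begin{itemize}
\item Since $\Ent_{p,\rho}$ is bounded above on $K\times[\delta,1]$ and $\mathcal{E}_{p,\cL}\geq -C_{\delta,\rho,\cL}$ by Lemma \ref{lem:infty}, we get $\kappa_r\mathcal{E}_{p_r,\cL_{\cN}}(X_r)$ bounded above.
\item Hence $\mathcal{E}_{p_r,\cL_{\cN}}(X_r)\to 0$, and Lemma \ref{lem:34} forces $X_0$ to be diagonal.
\item If $X_0$ is singular, the second part of Lemma \ref{lem:infty} gives $\mathcal{E}_{p_r,\cL}(X_r)\to\infty$, contradicting $F<0$ since the other terms are bounded below.
\end{itemize}

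\textbf{Compact middle region (main obstacle).} It remains to handle $X_0$ diagonal, positive definite, and $\|X_0-I\|_{HS}\geq\epsilon$. On this region $(p,X)\mapsto\mathcal{E}_{p,\cL}(X)-\eta\Ent_{p,\rho}(X)$ is continuous, and $\kappa_r\mathcal{E}_{p_r,\cL_{\cN}}(X_r)\geq 0$. Taking limits gives
\begin{align*}
\mathcal{E}_{p_0,\cL}(X_0)-\eta\Ent_{p_0,\rho}(X_0)\leq 0 .
\end{align*}
For diagonal $X_0$, I need to check that $\mathcal{E}_{p_0,\cL}(X_0)$ equals the classical Dirichlet form $\mathcal{E}_{p_0,\cL_{\cl}}$ of the diagonal vector, and that $\Ent_{p_0,\rho}(X_0)$ is the classical entropy. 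This should follow from $X_0$ commuting with $D_\rho$ together with $\cL_{\cl}=\bE_{\cN}\cL\bE_{\cN}$. The classical LSI then gives
\begin{align*}
\mathcal{E}_{p_0,\cL}(X_0)\geq\alpha_{\cl}\Ent_{p_0,\rho}(X_0)>\eta\Ent_{p_0,\rho}(X_0),
\end{align*}
where the strict inequality uses $\Ent_{p_0,\rho}(X_0)>0$, true because $X_0\neq I$ is normalized. This contradicts the limit inequality and completes the argument.

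The delicate point is precisely this identification of the full quantum form with the classical one at diagonal limits, together with uniform continuity up to $p_0=1$. At $p=1$ the form is defined via the Fisher information \eqref{eq:dirfish}, so I would verify continuity in $p$ there using the continuous extension of $K_p$ to $p=1$ noted in the proof of Lemma \ref{lem:infty}.
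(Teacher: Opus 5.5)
Your proposal is correct and follows essentially the same route as the paper: a compactness/contradiction argument with $\kappa_r\to\infty$, using Lemma \ref{lem:34} to force a diagonal limit, Lemma \ref{lem:infty} to exclude singular limits, Lemma \ref{lem:lowerbd} near $I$, and the classical bound $\alpha_p(\cL_{\cl})\geq\alpha_{\cl}>\eta$ at positive definite diagonal limits. The differences are only organizational. You dispose of $U_\epsilon$ first and run the contradiction on the closed complement $K$, whereas the paper runs it on all normalized $X$ and treats $X_0=I$ as a third case. You also state the needed one-sided bounds (energy bounded below, entropy bounded above) more precisely than the paper's ``bounded''.
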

\begin{proof}
We have to show that there exists $\kappa>0$ such that for all $p\in [\delta, 1]$ and $X>0$ with
\begin{align}\label{eq:temp}
    \mathcal{E}_{p, \cL}(X) -\eta \Ent_{p, \rho}(X)+ \kappa \mathcal{E}_{p, \cL_{\cN}} (X)\geq 0.
\end{align}
Suppose that we can not find $\kappa>0$ such that the above inequality \eqref{eq:temp} is true.
Then there exists a sequence $\kappa_n\to \infty$, $p_n\in [\delta, 1]$ and $X_n>0$ with $\rho(X_n)=1$ such that 
\begin{align*}
     \mathcal{E}_{p_n, \cL}(X_n) -\eta \Ent_{p_n, \rho}(X)+ \kappa_n \mathcal{E}_{p_n, \cL_{\cN}} (X)< 0.   
\end{align*}
Note that $ \mathcal{E}_{p_n, \cL}(X_n) -\eta \Ent_{p_n, \rho}(X)$ is bounded.
We have that $\mathcal{E}_{p_n, \cL_{\cN}} (X)$ goes to $0$ as $n$ goes to infinity.
By the compactness of $[\delta, 1]$ and $\{X>0: \rho(X)=1\}$, we can choose a convergent subsequence of $\{p_n\}_{n}$.
We can assume that
\begin{align*}
    \lim_{n \to\infty} p_n=p_0, \quad \lim_{n \to \infty} X_n=X_0.
\end{align*}
Moreover, by Lemma \ref{lem:34}, we have that $\bE_{\cN}(X_0)=X_0$.

Now if $X_0>0$ and $X_0\neq I$, then 
\begin{align*}
    \lim_{n \to \infty}\mathcal{E}_{p_n, \cL}(X_n) -\eta \Ent_{p_n, \rho}(X)
    =& \mathcal{E}_{p_0, \cL}(X_0) -\eta \Ent_{p_0, \rho}(X_0) \\
    \geq & (\alpha_{\cl}-\eta)\Ent_{p_0, \rho}(X_0)>0.
\end{align*}
This leads contradiction.
If $X_0$ is singular, by Lemma \ref{lem:infty}, we see that the assumption does not hold.
If $X_0=I$, by Lemma \ref{lem:lowerbd} we see that the assumption does not hold.
Finally, we see that there exists $\kappa>0$ such that Equation \eqref{eq:temp} is true.
% For the diagonal part, we have that 
% \begin{align*}
%      \mathcal{E}_{p, \cL}(I+t Y_1) \geq \alpha_{\cl} \Ent_{p, \rho}(I+tY_1).
% \end{align*}

By the proof of Theorem 11 in \cite{BDR20}, we see that the reverse hypercontractivity follows.
\end{proof}

\section{Graph-based KMS-symmetric Quantum Markov Semigroups}

In this section, we shall study the quantum reverse hypercontractivity for graph-based KMS-symmetric quantum Markov semigroups.

Suppose that $\cG=(V(\cG), E(\cG))$ with the vertex set $V(\cG)=\{1,2, \ldots, n\}$ is a finite undirected graph with a positive weight $\gamma_{jk}>0$ on the edge $e_{jk}$ in $E(\cG)$, where $\gamma_{jk}=\gamma_{kj}$. 
	
	Let $\{E_{jk}\}_{j,k=1}^n$ be the system of matrix units of $M_n(\bC)$.
	Let $\rho$ be a faithful state on $M_n(\bC)$ with density $\displaystyle D_\rho=\sum_{j=1}^n s_j E_{jj}$.
We set $\gamma_{jk}=0$ whenever $e_{jk}\notin E(\cG)$.
Throughout this section, $n\geq2$. 
% The quantity $\lambda_{\cG}$ denotes the second eigenvalue of the
% unweighted graph Laplacian, counting multiplicity, and is therefore
% zero for a disconnected graph.
	
	Let $d_j$ be the weighted degree with respect to $\rho$ at vertex $j$, which is defined as follows:
	\begin{align*}
		d_j=\sum_{k: e_{jk}\in E(\cG)}\gamma_{jk} s_k, \quad d_{\max}=\max_j\{d_j\}.
	\end{align*}
% 	Let $\vec{s}=(s_1, \ldots, s_n)^{\mathsf{T}}$ and
% 	\begin{align}\label{eq:spectralgap}
% 		\varphi_{\cG}=\inf_{\vec{s}\perp \vec{t},\  \vec{t}\neq 0}\frac{\displaystyle \sum_{e_{jk}\in E(\cG)}\gamma_{jk}s_js_k |t_j-t_k|^2}{\displaystyle \sum_{j=1}^n s_j|t_j|^2},
% 	\end{align}
% 	where $\vec{t}=(t_1, \ldots, t_n)^{\mathsf{T}}$.
% 	If $E(\cG)\neq\varnothing$, we have 
% 	\begin{align*}
% 		\frac{(\min_{e_{jk}}\gamma_{jk})\min_{e_{jk}}s_js_k}{\max_j s_j}   \lambda_{\cG}\leq \varphi_{\cG}\leq 2d_{\max},
% 	\end{align*}
% 	For an edgeless graph, $\varphi_{\cG}=d_{\max}=\lambda_{\cG}=0$,
% and the generator defined below is zero.

	For each edge $e_{jk}\in E(\cG)$, we set
	\begin{align}
		V_{jk}=\sqrt{s_j}E_{jk}+\sqrt{s_k} E_{kj}\in M_n(\bC).
	\end{align}
	The Lindbladian associated to the graph $\cG$ is given by
	\begin{align}\label{eq:mnlind1}
		\cL(X) =\sum_{e_{jk}\in E(\cG)} \gamma_{jk} \left(\frac{1}{2}\{V_{jk}^*V_{jk}, X\}- V_{jk}^* X V_{jk}\right), \quad X\in M_n(\bC).
	\end{align}
In \cite{LWW26}, Liu, Wan and the author show that the corresponding semigroup is a primitive KMS-symmetric quantum Markov semigroup, denoted by $\{\Phi_t\}_{t\geq 0}$ whenever $s_1\neq s_2$ for $n=2$ and $\cG$ is connected for $n \geq 3$.
By a direct computation, we have that 
	\begin{align}
		\cL(E_{jj}) =& \sum_{e_{jk}\in E(\cG)} \gamma_{jk} (s_k E_{jj}-s_jE_{kk}), \label{eq:lindbase1}\\
		\cL(E_{jk})= & \frac{d_j+d_k}{2} E_{jk}-\mathbbm{1}_{e_{jk}\in E(\cG)} \gamma_{jk}\sqrt{s_js_k}E_{kj}, \quad j\neq k. \label{eq:lindbase2}
	\end{align}

\begin{lemma}\label{lem:im}
 Suppose that $X>0$ in $M_2(\bC)$, $0<p<1$ and $0<s_2\leq s_1$.
 We have that 
 \begin{align*}
     \Tr(X^{p-1} B X B^*)\geq \Tr(C X^p),
 \end{align*}
 where 
 \begin{align*}
     B=\begin{pmatrix}
         0 & \sqrt{s_1}(s_2/s_1)^{\frac{1}{p}}\\
       \sqrt{s_2}(s_1/s_2)^{\frac{1}{p}}  & 0   
     \end{pmatrix}, \quad C=\begin{pmatrix}
         s_1 & 0\\
      0 & s_2  
     \end{pmatrix},.
 \end{align*}
\end{lemma}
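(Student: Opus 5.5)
The plan is to reduce the matrix inequality to a scalar inequality in two real parameters: the eigenvalue ratio of $X$ and the angle of its eigenbasis. Write
\[
a=\sqrt{s_1}(s_2/s_1)^{1/p},\qquad b=\sqrt{s_2}(s_1/s_2)^{1/p},
\]
so that $ab=\sqrt{s_1s_2}$, $BB^*=\diag(a^2,b^2)$ and $B^*B=\diag(b^2,a^2)$. Put $r=s_1/s_2\geq 1$ and $\theta=2/p>2$. Then $a^2=s_2r^{1-\theta}$ and $b^2=s_2r^{\theta}$. Since $p\mapsto r^{\theta}+r^{1-\theta}$ is symmetric about $\theta=1/2$ and increasing for $\theta>1/2$, and equals $r+1$ at $\theta=1$, we get
\[
a^2+b^2\geq s_1+s_2 .
\]
This settles the scalar case $X=xI$, where both sides reduce to $x^p\Tr(BB^*)$ and $x^p\Tr C$. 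It also shows where the strict margin comes from.

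For general $X$, diagonalize $X=U\diag(x,y)U^*$ and set $M=U^*BU=(m_{ik})$ and $c_i=(U^*CU)_{ii}$. The inequality becomes
\[
x^p\bigl(|m_{11}|^2-c_1\bigr)+y^p\bigl(|m_{22}|^2-c_2\bigr)+x^{p-1}y\,|m_{12}|^2+xy^{p-1}|m_{21}|^2\geq 0 .
\]
Dividing by $y^p$ and writing $u=x/y>0$, it suffices to show
\[
F(u)=u^p A_1+A_2+u^{p-1}|m_{12}|^2+u|m_{21}|^2\geq 0,
\]
where $A_i=|m_{ii}|^2-c_i$. Taking $U$ to be the real rotation by angle $\phi$ (a diagonal phase can be absorbed, since it only rotates the off-diagonal phases of $B$ and does not change the moduli relevant here), all of $m_{ik}$ and $c_i$ are explicit trigonometric polynomials in $\phi$ with coefficients $a,b,\sqrt{s_1},\sqrt{s_2}$.

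Two checks come first. At $u=1$, $F(1)=\Tr(BB^*)-\Tr C\geq 0$ by the computation above. As $u\to 0$ or $u\to\infty$, $F(u)\to+\infty$ unless $m_{12}=0$ or $m_{21}=0$ respectively, and those degenerate angles can be checked by hand.

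For the general case, I plan to combine the two off-diagonal terms using weighted AM--GM. One way is the bound
\[
u^{p-1}|m_{12}|^2+u|m_{21}|^2\geq c\,u^{p}\bigl(\ldots\bigr)+\ldots ,
\]
which uses $|m_{12}||m_{21}|$ and $\det M=-ab=-\sqrt{s_1s_2}$. Another is to split $u|m_{21}|^2$ by concavity, $u\geq pu^{p}\cdot$(const) $+(1-p)\cdot$(const). The aim is to dominate the possibly negative diagonal defect $u^pA_1+A_2$. Since $u\mapsto u^p$ is concave, $F$ is convex in $u$ whenever $A_1\leq 0$. In that case it suffices to check nonnegativity at the critical point, or to use the tangent line at $u=1$ together with $F(1)\geq 0$ and $F'(1)$.

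The main obstacle is the case in which $A_1$ and $A_2$ have opposite signs, together with tracking how the identity $a^2+b^2\geq s_1+s_2$ (the $u=1$ information) propagates to all $u$. There I will attempt to exploit the symmetry between the pairs $(x,m_{12})$ and $(y,m_{21})$, which corresponds to swapping $\phi\mapsto\phi+\pi/2$, to reduce to $u\geq 1$. On $u\geq 1$, the term $u|m_{21}|^2$ grows faster than $u^pA_1$, and I expect this to give the bound after a direct estimate.
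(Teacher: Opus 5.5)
\textbf{There is a genuine gap, and it cannot be closed: the inequality as printed is false.} It already fails in the configuration your plan treats as the easy one. Take $\phi=0$, i.e.\ $X=\diag(x,y)$. Then $M=B$, $A_1=-s_1$, $A_2=-s_2$, $|m_{12}|^2=a^2$, $|m_{21}|^2=b^2$, and
\[
F(u)=a^2u^{p-1}+b^2u-s_1u^p-s_2 .
\]
As you say, this is convex, tends to $+\infty$ at both ends, and satisfies $F(1)\geq 0$. Yet its minimum is negative. For $p=\tfrac{1}{2}$, $s_1=\tfrac{4}{5}$, $s_2=\tfrac{1}{5}$ one has $a^2=\tfrac{1}{320}$ and $b^2=\tfrac{256}{5}$, so $F(2^{-10})=\tfrac{1}{10}+\tfrac{1}{20}-\tfrac{1}{40}-\tfrac{1}{5}=-\tfrac{3}{40}$. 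Equivalently, $X=\diag(2^{-10},1)$ gives $\Tr(X^{-1/2}BXB^*)=\tfrac{3}{20}<\tfrac{9}{40}=\Tr(CX^{1/2})$.

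This is not an accident of the numbers. The coefficient $a^2=s_1(s_2/s_1)^{2/p}$ is tiny while $b^2$ is huge, and $\min_{u>0}(a^2u^{p-1}+b^2u)=(2-p)(1-p)^{-\frac{1-p}{2-p}}\,s_2\,(s_1/s_2)^{-\frac{1}{2-p}}$. This is $\leq s_2$, forcing $F<0$ somewhere, as soon as $s_1/s_2\geq (2-p)^{2-p}(1-p)^{p-1}$, a threshold that tends to $1$ as $p\to 1^-$. So the $u=1$ information $a^2+b^2\geq s_1+s_2$ does not propagate to other $u$. The tangent line at $u=1$ certifies nothing, since here $F'(1)=b^2-(1-p)a^2-ps_1>F(1)$, and no AM--GM splitting can help.

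A separate, smaller error: a left phase $\diag(1,e^{i\psi})$ in $U$ does change the moduli, e.g.\ $|m_{11}|^2=\cos^2\phi\sin^2\phi\,(a^2+b^2+2ab\cos 2\psi)$. The reduction to real $X$ is still valid, but for the paper's reason. Writing $X^{p-1}=x_1X+x_2I$ with $x_1<0$, the phase of $z=X_{12}$ enters only through $2x_1ab\Re(z^2)$.

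\textbf{What the paper's proof actually proves.} It is carried out for the operators occurring in Lemma~\ref{lem:energypositive}. These are $B=D_\rho^{1/(2p)}V^*D_\rho^{-1/(2p)}$, with $B_{12}=\sqrt{s_2}(s_1/s_2)^{1/(2p)}$ and $B_{21}=\sqrt{s_1}(s_2/s_1)^{1/(2p)}$, together with $C=V^*V=\diag(s_2,s_1)$. After conjugating by the flip, this is the printed lemma with $\frac{1}{p}$ replaced by $\frac{1}{2p}$. For these operators, with $b_1,b_2$ likewise carrying the exponent $\frac{1}{2p}$ and $t=\mu$, the paper's identity
\[
f_p=(1-\delta)s_1g_p(r\mu)+\delta s_2g_p(\mu/r)-\delta(1-\delta)(b_1+b_2)^2g_p(\mu)
\]
is correct. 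For the printed $B,C$ it is not: at $\delta=0$ it would give $f_p(\diag(\mu,1))\geq 0$, contradicting the value above.

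\textbf{The idea your plan is missing (for the corrected statement).} $f_p$ is a quadratic in the eigenvector weight $\delta$. So nonnegativity for all $\delta\in[0,1]$ is equivalent to the single discriminant-type inequality \eqref{eq:important}, after which $f_p\geq\bigl((1-\delta)\sqrt{s_1g_p(r\mu)}-\delta\sqrt{s_2g_p(\mu/r)}\bigr)^2$. The paper then proves \eqref{eq:important} from $\widetilde{g}_p''\geq\frac{(2-p)^2}{4}\widetilde{g}_p$ by a $\cosh$ comparison. Case-splitting on the signs of $A_1,A_2$ and estimating in $u$ does not see this structure. In any case the statement must first be corrected, since as printed no argument can succeed.
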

\begin{proof}
Let
\begin{align}
f_p(X)= \Tr(X^{p-1} B X B^*)-\Tr(D_\rho X^p).   
\end{align}
Note that $X$ is a positive definite in $M_2(\bC)$.
There exists $x_1, x_2\in \bR$ such that $X^{p-1}=x_1 X+x_2 I$, where $x_1<0$.
By homogeneity, we assume the eigenvalues of $X$ are $1$ and $\mu$ with $0< \mu <1$.
Let 
\begin{align*}
    b_1= \sqrt{s_2}(s_1/s_2)^{\frac{1}{p}} , \quad b_2=  \sqrt{s_1}(s_2/s_1)^{\frac{1}{p}}, \quad r= (s_2/s_1)^{\frac{1}{p}}.
\end{align*}
If the off-diagonal entry of $X$ is $z$, then the term depending on the phase of $z$ is $2x_1 b_1b_2 \Re (z^2)$.
Note that $x_1<0$.
We can replace $z$ by $|z|$ which only decrease $f_p(X)$.
Hence we assume that the entries of $X$ are real.
Suppose that $g_p(x)=(x-1)(1-x^{p-1})\geq 0$.
Let $\delta\in [0,1]$ be the square first coordinate of a unit eigenvector of $X$ for eigenvalue $1$, i.e.
\begin{align*}
    X=\mu I+(1-\mu)\begin{pmatrix}
    \delta & \sqrt{\delta(1-\delta)}\\ \sqrt{\delta(1-\delta)} & 1-\delta
    \end{pmatrix}
\end{align*}
We have that 
\begin{align*}
    f_p(X)=(1-\delta)s_1g_p(r\mu)+\delta s_2 g_p(\mu/r)-\delta(1-\delta)(b_1+b_2)^2g_p(t).
\end{align*}
We shall show that 
\begin{align}\label{eq:important}
    (b_1+b_2)^2g_p(t)\leq \left(\sqrt{s_1g_p(r\mu)}+\sqrt{s_2 g_p(\mu/r)}\right)^2.
\end{align}
If Equation \eqref{eq:important} is true, then
\begin{align}
    f_p(X) \geq \left((1-\delta)\sqrt{s_1g_p(r\mu)}-\delta \sqrt{s_2 g_p(\mu/r)}\right)^2\geq 0.
\end{align}
We write $\mu=e^{-2x}$ and $r=e^{-2y}$ with $x, y\geq 0$ and 
\begin{align*}
   \widetilde{g}_p(x)=\sqrt{\sinh{x}\sinh{(1-p)x}} .
\end{align*}
Then $\sqrt{g_p(e^{2x})}=2e^{px/2}\widetilde{g}_p(x)$.
Rewrite Equation \eqref{eq:important}, we have that
\begin{align}\label{eq:important1}
    \widetilde{g}_p(x+y)+\widetilde{g}_p(x-y)\geq 2\cosh{\left(\frac{2-p}{2}\right)} \widetilde{g}_p(x).
\end{align}
By taking the second derivative of $\widetilde{g}_p$, we obtain that 
\begin{align*}
    \frac{\widetilde{g}_p''(x)}{\widetilde{g}_p(x)}-\frac{(2-p)^2}{4}=\frac{1}{4}\left( p^2-(\coth{x}-(1-p)\coth((1-p)x))^2\right).
\end{align*}
By the mean-value theorem, we see that $0\leq \coth{x}-(1-p)\coth((1-p)x)\leq p$.
Hence $\displaystyle \widetilde{g}_p''(x)\geq \frac{(2-p)^2}{4}\widetilde{g}_p(x)$.
Now we see that 
\begin{align*}
   \widetilde{g}_p''(x) \geq \widetilde{g}_p(0)\cosh\left(\frac{2-p}{2}x\right)+ \frac{2\widetilde{g}_p'(0)}{2-p}\sinh\left(\frac{2-p}{2}x\right).
\end{align*}
By considering the function $y\mapsto \widetilde{g}_p(x+y)+ \widetilde{g}_p(x-y)$ and noting that its initial value is $2\widetilde{g}_p(x)$ and its initial derivative is $0$, we see that Inequality \eqref{eq:important1} is true.
\end{proof}

\begin{lemma}\label{lem:energypositive}
Suppose that $\{\Phi_t\}_{t\geq 0}$ is a primitive KMS-symmetric quantum Markov semigroup with generator \eqref{eq:mnlind1} and $n \geq 3$.
We have that 
\begin{align*}
    \mathcal{E}_{p, \cL}(X)>0
\end{align*}
for all $X\notin \bC I$.
\end{lemma}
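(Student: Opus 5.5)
Throughout, $0<p<1$ is fixed, and the aim is to show $\mathcal{E}_{p,\cL}(X)>0$ for every $X>0$ with $X\notin\bC I$.

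The plan is to decompose the $p$-Dirichlet form edge by edge and reduce each edge contribution to the two-dimensional inequality of Lemma \ref{lem:im}. By the expansion \eqref{eq:energyexpansion}, $\frac{4}{p^2}\mathcal{E}_{p,\cL}(X)$ is a sum over the jump operators $\sqrt{\gamma_{jk}}V_{jk}$. The term for $V_{jk}$ is
\begin{align*}
\frac{\gamma_{jk}}{1-p}\left(\Tr\left(X_p^{p-1}B_{p,jk}X_pB_{p,jk}^*\right)-\Tr\left(X_p^pC_{jk}\right)\right),
\end{align*}
after rewriting the $G_p$-part using $\cL^*(D_\rho)=0$ edge by edge. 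Here $B_{p,jk}=D_\rho^{\frac{1}{2p}}V_{jk}^*D_\rho^{-\frac{1}{2p}}$ is supported on the $2\times 2$ block spanned by $e_j,e_k$. Its entries are exactly $\sqrt{s_j}(s_k/s_j)^{1/(2p)}$-type weights, the analogue of $B$ in Lemma \ref{lem:im} after the correct normalisation. The matrix $C_{jk}=s_jE_{jj}+s_kE_{kk}$ comes from the anticommutator $\frac12\{V_{jk}^*V_{jk},\cdot\}$, whose $V_{jk}^*V_{jk}=s_kE_{jj}+s_jE_{kk}$ becomes $C_{jk}$ after conjugation by $D_\rho^{\pm1/(2p)}$. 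The first step is to verify this bookkeeping, so that each edge term is $\gamma_{jk}(1-p)^{-1}f_{p,jk}(X_p)$ with $f$ of the form treated in Lemma \ref{lem:im}.

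The main obstacle is that $X_p$ is an $n\times n$ matrix, whereas Lemma \ref{lem:im} is stated for $X\in M_2(\bC)$. Compressing $X_p$ to the $\{j,k\}$ block does not commute with the powers $X_p^{p-1}$ and $X_p^p$. To handle this, I would use the spectral decomposition $X_p=\sum_\ell\mu_\ell Q_\ell$. In the form of \eqref{eq:energyexpansion}, each edge term then becomes
\begin{align*}
\sum_{k,\ell}|(B)_{k\ell}|^2\mu_\ell^p\,\frac{(\mu_\ell/\mu_k)^{1-p}-1}{1-p}
\end{align*}
minus a diagonal correction. Lemma \ref{lem:im}, being homogeneous and reduced to eigenvalues $1,\mu$ and a mixing angle, is really a pairwise inequality between two eigenvalues. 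I would therefore try to write the edge term as a convex combination, over pairs $(\mu_a,\mu_b)$ with weights given by overlaps of the eigenvectors with $e_j,e_k$, of instances of the $2\times 2$ quantity $f_p$. Each instance is then nonnegative by \eqref{eq:important}. If this pairing fails, the fallback is to apply Lemma \ref{lem:im} only to show nonnegativity via a joint-convexity (Lieb-type) argument for $(X,Y)\mapsto\Tr(X^{p-1}BYB^*)$. This gives $\mathcal{E}_{p,\cL}(X)\ge0$ for every $X>0$.

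For strict positivity I would examine the equality case. The squared expression in the proof of Lemma \ref{lem:im} must vanish, and \eqref{eq:important1} is strict unless $y=0$ or $x=0$, meaning the relevant eigenvalue ratio is $1$. Hence, if $\mathcal{E}_{p,\cL}(X)=0$, every edge $e_{jk}$ forces the compression of $X_p$ (equivalently, of $X$) to the $\{j,k\}$ block to be scalar-compatible in the sense $X_{jj}=X_{kk}$ and $X_{jk}=0$.

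Connectedness of $\cG$ for $n\geq3$ then propagates equal diagonal entries over all vertices. The off-diagonal entries are killed on edges directly. For non-edges, the diagonal-block equality together with the $E_{jk}$-mixing in \eqref{eq:lindbase2} should give vanishing: I would show that a zero of $\mathcal{E}_{p,\cL}$ forces $[V_{jk},X]=0$ for all edges, as in the argument in the proof of Proposition \ref{prop:halfzero}, and then use that the $V_{jk}$ generate $M_n(\bC)$ for connected $\cG$ with $n\ge3$. That would force $X\in\bC I$, a contradiction. I expect the commutant step to need care precisely when some $s_j=s_k$, which is why $n\ge3$ and connectedness are required.
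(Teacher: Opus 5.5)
There is a genuine gap in both halves of your argument. First, you correctly flag the reduction from $X_p\in M_n(\bC)$ to the $2\times 2$ block as the main obstacle, but you never achieve it. The eigenvalue-pairing idea is not substantiated. With $Q=E_{jj}+E_{kk}$, the block components $Qu_\ell$ of the eigenvectors of $X_p$ are $n$ non-orthogonal vectors in $\bC^2$, so the edge term is not visibly a mixture of instances of Lemma \ref{lem:im}. The fallback rests on a false premise: $(X,Y)\mapsto\Tr(X^{p-1}BYB^*)$ is neither jointly convex nor jointly concave. Already for $1\times 1$ matrices, $(x,y)\mapsto x^{p-1}y$ has an indefinite Hessian when $0<p<1$.

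The missing idea is the one the paper uses. Because $B_{p,e}=QB_{p,e}Q$, you have $B_{p,e}X_pB_{p,e}^*=B_{p,e}\widetilde{X}B_{p,e}^*$ with $\widetilde{X}=QX_pQ$, so only the powers of $X_p$ need compressing. The operator Jensen inequality (operator convexity of $t^{p-1}$, operator concavity of $t^{p}$) then gives $QX_p^{p-1}Q\geq\widetilde{X}^{p-1}$ and $QX_p^{p}Q\leq\widetilde{X}^{p}$ on $\mathrm{Ran}\,Q$. Hence each edge term is at least its value at $\widetilde{X}$, where the $2\times 2$ lemma applies. Your bookkeeping also needs correcting. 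The weight paired with $X_p^p$ is $V_{jk}^*V_{jk}=s_kE_{jj}+s_jE_{kk}$, which is diagonal and therefore unchanged by conjugation with $D_\rho^{\pm 1/(2p)}$. With your $C_{jk}=s_jE_{jj}+s_kE_{kk}$, the edge term at $X=I$ would equal $-(s_j-s_k)^2<0$.

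Second, for strict positivity, your per-edge equality claim is false. If $s_j=s_k$, then $B_{p,e}=V_{jk}$ and $V_{jk}^2=V_{jk}^*V_{jk}=s_jQ$. So any block matrix commuting with $V_{jk}$, e.g.\ $Q+t(E_{jk}+E_{kj})$ with $|t|<1$, makes the edge term vanish although its off-diagonal entry is nonzero. Moreover, "$X_p$ scalar on a block" and "$X$ scalar on a block" are different conditions when $s_j\neq s_k$. The implication $\mathcal{E}_{p,\cL}(X)=0\Rightarrow[V_{jk},X]=0$ also cannot be borrowed from Proposition \ref{prop:halfzero}. Identity \eqref{eq:halfe} is specific to $p=1/2$ and to the substitution $X\mapsto|D_\rho^{1/2}XD_\rho^{-1/2}|^2$.

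The paper never uses the equality case of Lemma \ref{lem:im}. If $\mathcal{E}_{p,\cL}(X)=0$, every Jensen step above is an equality. Since $V_{jk}^*V_{jk}$ is positive definite on $\mathrm{Ran}\,Q$, this gives $QX_p^pQ=(QX_pQ)^p$. For the non-affine operator concave function $t^p$, that forces $[X_p,E_{jj}+E_{kk}]=0$ for every edge. For a connected graph with $n\geq 3$, these projections generate the diagonal algebra (e.g.\ $(E_{jj}+E_{kk})(E_{kk}+E_{ll})=E_{kk}$), so $X$ is diagonal. The explicit formula $\frac{p^2}{4(1-p)}\sum\gamma_{jk}s_js_k(x_j-x_k)(x_k^{p-1}-x_j^{p-1})$ together with connectedness then yields $X\in\bC I$. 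This is where $n\geq 3$ enters (for $n=2$ the only block projection is $I$), not in any difficulty with $s_j=s_k$.
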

\begin{proof}
Suppose that $X_p= D_\rho^{\frac{1}{2p}} X D_\rho^{\frac{1}{2p}}$,  $B_{p, e}=D_\rho^{\frac{1}{2p}} V_e^* D_\rho^{-\frac{1}{2p}}$ and 
\begin{align*}
\widetilde{X}_{p, e}=(E_{jj}+E_{kk}) X_{p}(E_{jj}+E_{kk}) .
\end{align*}
We have that
\begin{align*}
\frac{4(1-p)}{p^2}\mathcal{E}_{p, \cL}(X)
=& \sum_{e\in E(\cG)}\gamma_e\Tr ( X_p^{p-1} B_{p, e} X_p B_{p, e}^*)-\gamma_e\Tr(V_e^*V_e X_p^p) \\
\geq & \sum_{e\in E(\cG)}\gamma_e\Tr ( \widetilde{X}_{p, e}^{p-1} B_{p, e} \widetilde{X}_{p, e} B_{p, e}^*)-\gamma_e\Tr(V_e^*V_e \widetilde{X}_{p,e}^p)\\
\geq & 0,
\end{align*}
where the first inequality follows from the operator concavity of $t^p$, operator convexity of $t^{1-p}$ and the noncommutative Jensen inequality; the last inequality follows from Lemma \ref{lem:im}.
Hence $\mathcal{E}_{p, \cL}(X)=0$ implies that $X_p$ commutes with $E_{jj}+E_{kk}$ for all edge $\{j,k\}$.
This shows that $X$ is a diagonal matrix.
By taking $X=\diag(x_1, \ldots, x_n)$, we have that 
\begin{align*}
  \mathcal{E}_{p, \cL}(X)=\frac{p^2}{4(1-p)} \sum_{\{j,k\}\in E(\cG)} \gamma_{jk} s_js_k (x_j-x_k)(x_k^{p-1}-x_j^{p-1}).  
\end{align*}
The connectedness implies that $\mathcal{E}_{p, \cL}(X)=0$ if and only if $X$ is a multiple of the identity.
\end{proof}

\begin{lemma}\label{lem:us}
 For a connected graph $\cG$ with $n \geq 3$, the faimily $\{B_{p, e}: e\in E(\cG)\}$ has no common nonzero proper invariant subspace for every fixed $p>0$.   
\end{lemma}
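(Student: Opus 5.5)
The plan is to compute the matrices $B_{p,e}$ explicitly, show that the unital algebra they generate contains every diagonal matrix unit $E_{ii}$, and then use connectedness of $\cG$ to finish.

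\textbf{Step 1: explicit form of $B_{p,e}$.} Fix $p>0$ and an edge $e=e_{jk}$. Since $V_{jk}^*=\sqrt{s_j}E_{kj}+\sqrt{s_k}E_{jk}$ and $D_\rho$ is diagonal, conjugation gives
\begin{align*}
B_{p,e}=a_e E_{kj}+b_e E_{jk},\qquad a_e=\sqrt{s_j}\,(s_k/s_j)^{\frac{1}{2p}}>0,\quad b_e=\sqrt{s_k}\,(s_j/s_k)^{\frac{1}{2p}}>0.
\end{align*}
Thus $B_{p,e}$ sends $e_j\mapsto a_e e_k$ and $e_k\mapsto b_e e_j$, and kills every other basis vector. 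Squaring gives
\begin{align*}
B_{p,e}^2=a_eb_e(E_{jj}+E_{kk}).
\end{align*}
Let $W$ be a nonzero subspace invariant under all $B_{p,e}$. Since $a_eb_e>0$, $W$ is invariant under each projection $P_e:=E_{jj}+E_{kk}$ with $e=e_{jk}\in E(\cG)$.

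\textbf{Step 2: every $E_{ii}$ leaves $W$ invariant.} This is the only place where $n\geq 3$ is used, and it is the main point.
\begin{itemize}
\item If a vertex $i$ has two distinct neighbours $k$ and $l$, then $P_{e_{ik}}P_{e_{il}}=E_{ii}$, so $E_{ii}W\subseteq W$.
\item If $i$ has degree one with unique neighbour $k$, then $k$ must have another neighbour. Otherwise $\{i,k\}$ would be a connected component, contradicting connectedness with $n\geq 3$. By the first case $E_{kk}$ preserves $W$, and hence so does $E_{ii}=P_{e_{ik}}-E_{kk}$.
\item Connectedness with $n\geq 2$ rules out isolated vertices, so these two cases cover every vertex.
\end{itemize}

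\textbf{Step 3: conclusion.} Because $W$ is invariant under every $E_{ii}$, it splits as $W=\bigoplus_i E_{ii}W$. Hence $W=\mathrm{span}\{e_i: i\in S\}$ for some set $S$ of vertices, and $S$ is nonempty because $W\neq 0$. If $i\in S$ and $e_{ik}\in E(\cG)$, then $B_{p,e_{ik}}e_i$ is a positive multiple of $e_k$ and lies in $W$, so $k\in S$. Therefore $S$ is closed under adjacency. Connectedness of $\cG$ forces $S=\{1,\ldots,n\}$, so $W=\bC^n$ and there is no nonzero proper common invariant subspace.

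The construction also explains the hypothesis $n\geq 3$. For $n=2$ there is only one edge and $B^2$ is a scalar. An eigenvector of that single $B$ then spans a one-dimensional common invariant subspace, so the lemma fails.
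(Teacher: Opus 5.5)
Your proof is correct, but it takes a somewhat different route from the paper's. The paper writes $B_{p,e}=D_\rho^{\frac{1}{2p}-\frac14}S_eD_\rho^{-\frac{1}{2p}+\frac14}$ with $S_e=(s_js_k)^{1/4}(E_{jk}+E_{kj})$ self-adjoint. After transporting the subspace through this common diagonal similarity, a common invariant subspace becomes a reducing subspace for the self-adjoint family $\{S_e\}$, so its projection $Q$ commutes with every $S_e$. The paper then asserts in one line that connectedness forces $Q\in\{0,I\}$.

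You avoid the similarity and self-adjointness entirely. The identity $B_{p,e}^2=\sqrt{s_js_k}\,(E_{jj}+E_{kk})$ already gives invariance under the edge projections. Products of these projections give every $E_{ii}$, so $W$ is a coordinate subspace closed under adjacency.

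Your route actually supplies the argument behind the paper's ``by connectedness.'' Computing the commutant of $\{S_e\}$ needs essentially your Steps 2--3, via $S_e^2\propto E_{jj}+E_{kk}$. Your route also shows exactly where $n\geq 3$ enters; the paper's proof never invokes it. As your closing remark notes, the lemma genuinely fails for $n=2$, which matches the paper's final proposition built on an eigenvector of $V^*$. The paper's reduction to a commutant problem for a self-adjoint family is conceptually tidy, but it does not remove the combinatorial step you carry out.
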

\begin{proof}
 Let $S_e=(s_js_k)^{1/4}(E_{jk}+E_{kj})$.
 Then $B_{p,e}=D_\rho^{\frac{1}{2p}-1/4} S_e D_\rho^{-\frac{1}{2p}+1/4}$ and $S_e$ is self-adjoint.
 Suppose that there is a common nonzero proper invariant subspace whose orthogonal projection is $Q$.
 Then $Q$ commutes with $S_e$ for all $e\in E(\cG)$.
 By the connectedness of $\cG$, we see that $Q$ is $0, I$.
 Hence there is no common nonzero proper invariant subspace for the family.
\end{proof}

\begin{theorem}\label{thm:graphlsi}
Suppose that $n \geq 3$ and the graph $\cG$ is connected.
Suppose that $\{\Phi_t\}_{t \geq 0}$ is a graph-based KMS symmetric quantum Markov semigroup and $0<\delta<1$ is fixed.
Then 
\begin{align*}
    \beta_{\rev, \varepsilon}=\inf_{\delta< p< 1}\alpha_p(\cL)>0.
\end{align*}
\end{theorem}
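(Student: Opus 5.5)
We argue by contradiction and compactness, following the proof of Theorem \ref{thm:pertubation}. Instead of the dephasing perturbation, we use the strict positivity of Lemma \ref{lem:energypositive} and the irreducibility of Lemma \ref{lem:us}. Suppose $\inf_{\delta<p<1}\alpha_p(\cL)=0$. Both $\mathcal{E}_{p,\cL}$ and $\Ent_{p,\rho}$ are homogeneous of degree $p$, so we may take sequences $p_r\in(\delta,1)$ and $X_r>0$ with $\rho(X_r)=1$, $X_r\notin\bC I$, such that
\begin{align*}
\mathcal{E}_{p_r,\cL}(X_r)\leq \frac{1}{r}\Ent_{p_r,\rho}(X_r).
\end{align*}
For $p$ in a fixed neighbourhood of $1$ we use Remark after Proposition \ref{prop:lsiconstant2} (or Theorem 7.3 of \cite{WanWu26} together with the bound $\alpha_p\geq \frac{(2p-1)\lambda_{\KMS}}{2(2+\log s_-^{-1})}$ from Proposition \ref{prop:lsiconstant2}). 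This gives a uniform positive bound on $[\frac34,1]$, so after passing to a subsequence we may assume $p_r\to p_0\in[\delta,\frac34]$ and $X_r\to X_0\geq 0$ with $\rho(X_0)=1$. The set $\{X\geq0:\rho(X)=1\}$ is compact, and $\Ent_{p,\rho}$ is bounded on it uniformly in $p\in[\delta,1]$.

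We then split into three cases according to $X_0$.

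\emph{(a) $X_0>0$ and $X_0\neq I$.} The functionals $\mathcal{E}_{p,\cL}$ and $\Ent_{p,\rho}$ are jointly continuous in $(p,X)$ on the positive definite cone, so $\mathcal{E}_{p_0,\cL}(X_0)\leq 0$. This contradicts Lemma \ref{lem:energypositive}.

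\emph{(b) $X_0$ singular.} We adapt the divergence part of Lemma \ref{lem:infty}. Let $P$ be the range projection of the limit of $X_{p_r,r}=D_\rho^{1/(2p_r)}X_rD_\rho^{1/(2p_r)}$, so $P\neq 0,I$. By Lemma \ref{lem:us} at $p=p_0$, some $B_{p_0,e}$ does not leave the range of $P$ invariant. Along convergent spectral bases this produces a term in \eqref{eq:energyexpansion} of the form $|(B_{p,e})_{k\ell}|^2\mu_\ell^p\log(\mu_\ell/\mu_k)$, with $\mu_\ell$ bounded below and $\mu_k\to0$. That term tends to $+\infty$. The remaining terms are bounded below uniformly by the first part of Lemma \ref{lem:infty}, while $\Ent$ stays bounded, which is a contradiction. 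One has to check that the invariant-subspace statement is used with the right orientation, $(I-P)B_{p_0,e}P\neq0$ versus its adjoint. Lemma \ref{lem:us} gives both, because the family $\{B_{p,e}\}$ is similar to a self-adjoint family.

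\emph{(c) $X_0=I$.} This is the main obstacle. We write $X_r=I+\varepsilon_rY_r$ with $\|Y_r\|_{HS}=1$, $\rho(Y_r)=0$ and $Y_r\to Y$. Using \eqref{eq:energe1} and \eqref{eq:entropyepsilon}, uniformly in $p\in[\delta,1]$ (the $O(\varepsilon^3)$ constants being uniform by compactness), we divide by $\varepsilon_r^2$ and obtain
\begin{align*}
\Re\Tr(Y\bfW_{p_0,\rho}\cL(Y))\leq 0 .
\end{align*}
It therefore suffices to show that the quadratic form $Y\mapsto\Re\Tr(Y\bfW_{p,\rho}\cL(Y))$ is strictly positive on $\{Y=Y^*\neq0,\rho(Y)=0\}$ for each $p\in[\delta,1]$. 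Note that $\Tr(Y\bfR_{p,\rho}Y)>0$ stays bounded away from zero for $p\geq\delta$.

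To get this positivity, first observe that it is the second variation of $\mathcal{E}_{p,\cL}$ at $I$, and $\mathcal{E}_{p,\cL}\geq0$ by the Jensen-plus-Lemma \ref{lem:im} argument of Lemma \ref{lem:energypositive}, so the form is nonnegative. For strictness we compute it edge by edge using \eqref{eq:lindbase1} and \eqref{eq:lindbase2}. The diagonal part gives the classical Dirichlet form $\sum\gamma_{jk}s_js_k w_p(\cdot)(y_j-y_k)^2$, which is positive by connectedness. The off-diagonal part decouples into $2\times2$ blocks $\{E_{jk},E_{kj}\}$ with coefficient $\frac{d_j+d_k}{2}$ minus the coupling $\gamma_{jk}\sqrt{s_js_k}$, weighted by $w_p(s_j,s_k)$. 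We must show each block is positive definite. When $n\geq3$ and $\cG$ is connected, $\frac{d_j+d_k}{2}$ contains contributions from a third vertex, which should give strict dominance over the coupling. If this direct check fails, the fallback is a kernel argument: a zero $Y$ of the form forces equality in the Jensen step and in Lemma \ref{lem:im} to second order, and hence $[S_e,Y]$-type constraints for all $e$. Lemma \ref{lem:us} then forces $Y$ to be scalar, hence zero. In either version, all three cases lead to a contradiction, so $\beta_{\rev,\varepsilon}>0$.
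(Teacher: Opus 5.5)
Your proposal is correct and follows essentially the same compactness argument as the paper: the singular boundary is handled by Lemma \ref{lem:us} together with the divergence mechanism of Lemma \ref{lem:infty}, and the neighbourhood of $I$ by the second-order expansions \eqref{eq:energe1} and \eqref{eq:entropyepsilon} plus positivity of the block-decomposed quadratic form (the paper phrases this as a comparison with $\lambda_{\KMS}$), while the remaining compact region is handled by Lemma \ref{lem:energypositive}. Your hedged block check closes without the fallback, because $\frac{d_j+d_k}{2}\geq\gamma_{jk}\frac{s_j+s_k}{2}\geq\gamma_{jk}\sqrt{s_js_k}$, strictly since connectedness with $n\geq 3$ gives $j$ or $k$ a third neighbour (and the coupling vanishes for non-edges); your use of Proposition \ref{prop:lsiconstant2} to keep $p_0$ away from $1$ is a small tidy-up that the paper omits.
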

\begin{proof}
We shall follow the proof from Lemma \ref{lem:infty}.
Suppose that $X_r\to X_0\geq 0$, $\rho(X_r)=1$ with $X_0$ singular, not necessary diagonal.
Let $P$ be the range projection of $D_\rho^{\frac{1}{2p_0}}X_0 D_\rho^{\frac{1}{2p_0}}$.
By Lemma \ref{lem:us}, we have that there exists $e\in E(\cG)$ such that $(I-P)B_{p_0, e}P\neq 0$.
Now by the argument in Lemma \ref{lem:infty}, we see $\mathcal{E}_{p_r,\cL}(X_r)\to \infty$.
Note that $\Ent_{p, \rho}(X)$ is always bounded by $\log s_-^{-1}$ for $\|X\|=1$.
In this case, the quotient $\displaystyle \frac{\mathcal{E}_{p,\cL}(X)}{\Ent_{p, \rho}(X)}$ is bounded arway from zero.

If $X$ is close to $I$, by Equation \eqref{eq:energe1}, we have that 
\begin{align*}
    \mathcal{E}_{p, \cL}(I+tY) 
    = & \frac{p^2 t^2}{4}\Re \Tr(Y\bfW_{p, \rho} \cL(Y))+O(t^3) \\
    \geq &  \lambda_{\KMS} \frac{p^2 t^2}{4}\Re \Tr(Y\bfW_{p, \rho} (Y))+O(t^3),
\end{align*}
where $\lambda_{\KMS}$ is the spectral gap of $\cL$ and the inequality is obtained by the block decomposition of $\cL$ .
This indicates that there exists constant $\beta_\varepsilon$ such that $\mathcal{E}_{p, \cL}(X) \geq \beta_{\varepsilon}\|X-I\|_{HS}^2$.
By Equation \eqref{eq:entropyepsilon}, we see that there exists constant $\widetilde{\beta}_\varepsilon$ such that 
\begin{align*}
    \Ent_{p, \rho}(X)\leq \widetilde{\beta}_{\varepsilon} \|X-I\|_{HS}^2.
\end{align*}
In this case, the quotient $\displaystyle \frac{\mathcal{E}_{p,\cL}(X)}{\Ent_{p, \rho}(X)}$ does not go to $0$ neither.
The remaining part by removing the neighborhood of $I$ and the singular boundary from $\{X>0:\rho(X)=1\}$ is compact.
By the continuity and strict positivity, we see that $\displaystyle\inf_{\varepsilon < p< 1} \alpha_p(\cL) >0$.
This completes the proof of the theorem.
% Suppose that $X>0$ is closed to a singular matrix $X_0$. 
% In the following, we shall show that $\mathcal{E}_{p, \cL}(X)$ is large enough.
% By the fact that $\Ent_{p, \rho}(X)\leq \log s_-^{-1}$, we shall obtain that the infima can not go to zero.

% Let $W_e=\sqrt{\gamma_e}V_e$
% $X_p= D_\rho^{\frac{1}{2p}} X D_\rho^{\frac{1}{2p}}$  $B_{p, e}=D_\rho^{\frac{1}{2p}} W_e^* D_\rho^{-\frac{1}{2p}}$.
% Suppose that $X_p$ is in the diagonal form with eigenvalues $\mu_k$.
% We have that 
% \begin{align*}
%     \frac{4}{p^2} \mathcal{E}_{p, \cL}(X)
% =\sum_{e, k, \ell} |(B_{p, e})_{k\ell}|^2 \mu_\ell^p \frac{(\mu_\ell/\mu_k)^{1-p}-1}{1-p}
% +\frac{\Tr((\sum_e B_{p, e}^* B_{p,e}-\sum_{e}W_e^*W_e)X_p^p)}{1-p}.
% \end{align*}
% By the fact that $X$ is closed to a singular matrix $X_0$, we see that there exists $j_0$ such that $\mu_{j_0}$ close to $0$.
% Note that the norm of $X_p$ is nonzero. 
% There exists $j_1$ such that $\mu_{j_1}$ is its norm.
% Hence 
% \begin{align*}
% \frac{(\mu_{j_1}/\mu_{j_0})^{1-p}-1}{1-p} \geq \log \frac{\mu_{j_1}}{\mu_{j_0}}
% \end{align*}
% is large enough.

\end{proof}

\begin{remark}
Note that for a graph-based primitive KMS-symmetric quantum Markov semigroup without Hamiltonian, the $\displaystyle \inf_{0< p<1}\alpha_p(\cL)\geq 0$. 
However, there exists a graph-based primitive KMS-symmetric quantum Markov semigroup with nonzero Hamiltonian (Example \ref{example:infty}) such that $\displaystyle \inf_{0< p<1}\alpha_p(\cL)=-\infty$.
\end{remark}

\begin{theorem}
Suppose that $\{\Phi_t\}_{t\geq 0}$ is a graph-based KMS-symmetric quantum Markov semigroup with generator \eqref{eq:mnlind1}.
Suppose that $n \geq 3$ and the graph $\cG$ is connected.
    We have that the semigroup has the reverse hypercontractivity \eqref{eq:revhyper} holds for $\varepsilon<p\leq q <1$ whenever
    \begin{align*}
        t\geq \frac{1}{4\beta_{\rev, \varepsilon}}\log \frac{1-p}{1-q}.
    \end{align*}
    Moreover the reverse hypercontractivity holds for $0< p \leq q <1$ whenever
    \begin{align*}
        t\geq \frac{1}{4}\int_p^q \frac{ds}{(1-s)\alpha_s(\cL)}.
    \end{align*}
\end{theorem}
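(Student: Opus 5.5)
The plan is to derive both statements from Theorem \ref{thm:graphlsi} combined with the integrated version of the Gross-type argument, Lemma \ref{lem:integration}, which is itself taken from the proof of Theorem 11 in \cite{BDR20}. Here $\varepsilon$ plays the role of the fixed $\delta\in(0,1)$ in Theorem \ref{thm:graphlsi}. The needed facts are that the semigroup is primitive KMS-symmetric for connected $\cG$ with $n\geq 3$ \cite{LWW26}, and that $\beta_{\rev,\varepsilon}=\inf_{\varepsilon<p<1}\alpha_p(\cL)>0$.

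\textbf{First claim.} Fix $X>0$ and $\varepsilon<p\le q<1$. Define $q(t)$ by $\frac{1-q(t)}{1-q}=e^{-4\beta_{\rev,\varepsilon}t}$, run backward from $q(0)=q$ down to $p$. Along this path, study
\begin{align*}
F(t)=\log\|\Phi_t(X)\|_{q(t),\rho}.
\end{align*}
Differentiating uses two facts recalled in Section 2:
\begin{align*}
\frac{d}{dp}\log\|X\|_{p,\rho}=\frac{\Ent_{p,\rho}(X)}{p^2\|X\|_{p,\rho}^p},\qquad
\left.\frac{d}{dt}\log\|\Phi_t(X)\|_{p,\rho}\right|_{t=0}=\frac{4(1-p)}{p^2\|X\|_{p,\rho}^p}\mathcal{E}_{p,\cL}(X).
\end{align*}
Applied at time $t$ to $\Phi_t(X)$ via the semigroup property, they give
\begin{align*}
F'(t)=\frac{1}{q^2\|\Phi_tX\|^q}\left(4(1-q)\,\mathcal{E}_{q,\cL}(\Phi_tX)+q'(t)\,\Ent_{q,\rho}(\Phi_tX)\right).
\end{align*}
Since $q'(t)=-4\beta_{\rev,\varepsilon}(1-q(t))$ and $q(t)\in(\varepsilon,1)$, the bound $\mathcal{E}_{q,\cL}\ge \beta_{\rev,\varepsilon}\Ent_{q,\rho}$ gives $F'\ge0$. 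Positivity of $\Phi_t$ keeps $\Phi_t(X)>0$, so the functional stays well defined. Integrating up to the time at which $q(t)=p$, namely $t_0=\frac{1}{4\beta_{\rev,\varepsilon}}\log\frac{1-p}{1-q}$, yields $\|\Phi_{t_0}X\|_{p,\rho}\ge\|X\|_{q,\rho}$.

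For larger $t$, I would use reverse $p$-contractivity. It follows from $\alpha_p(\cL)\ge0$, because Lemma \ref{lem:energypositive} gives $\mathcal{E}_{p,\cL}>0$, so $\frac{d}{dt}\log\|\Phi_t X\|_{p,\rho}\ge0$. Hence $\|\Phi_tX\|_{p,\rho}\ge\|\Phi_{t_0}X\|_{p,\rho}$ for $t\ge t_0$.

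\textbf{Second claim.} For $0<p\le q<1$, I would invoke Lemma \ref{lem:integration} with $\delta=0$. Its hypothesis is $\alpha_s(\cL)>0$ for every $s\in(0,1)$. For each such $s$, choose $\varepsilon'<s$ and apply Theorem \ref{thm:graphlsi} with $\delta=\varepsilon'$, which gives $\alpha_s(\cL)\ge\beta_{\rev,\varepsilon'}>0$. The ODE argument is then the same as above, with the variable rate $\frac{dq}{dt}=-4(1-q)\alpha_q(\cL)$. Separating variables shows that $q(t)$ reaches $p$ at time $\frac14\int_p^q\frac{ds}{(1-s)\alpha_s(\cL)}$.

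\textbf{Main obstacle.} The delicate point is regularity of $s\mapsto\alpha_s(\cL)$, which is needed to solve the ODE and to justify the integral. I would avoid this by replacing $\alpha_s$ with its lower envelope on compact subintervals $[p,q]\subset(0,1)$. That envelope is bounded below by $\beta_{\rev,\varepsilon'}$ with $\varepsilon'<p$, hence positive. One may further replace it by any continuous positive minorant, obtain the conclusion for that minorant, and then take a monotone limit of the time bound. If $\int_p^q\frac{ds}{(1-s)\alpha_s}$ is understood with the infimum-type regularization, the stated time follows.
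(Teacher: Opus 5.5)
Your proposal is correct and follows the paper's route. Theorem~\ref{thm:graphlsi} supplies $\beta_{\rev,\varepsilon}>0$, and, by shrinking $\delta$, $\alpha_s(\cL)>0$ for every $s\in(0,1)$; the Gross-type argument of Theorem~11 in \cite{BDR20} and Lemma~\ref{lem:integration} with $\delta=0$ then give the two time bounds.

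Two minor points. First, your path should satisfy $\frac{1-q(t)}{1-q}=e^{4\beta_{\rev,\varepsilon}t}$, not $e^{-4\beta_{\rev,\varepsilon}t}$; your formulas for $q'(t)$ and $t_0$ already use this sign. Second, the regularization hedge is unnecessary, and your continuous-minorant limit would only recover $\alpha_s$'s lower semicontinuous envelope. Even without citing Lemma~\ref{lem:integration}, the map $r\mapsto\frac14\int_r^q\frac{ds}{(1-s)\alpha_s(\cL)}$ is bi-Lipschitz on $[p,q]$, because $\alpha_s$ is upper semicontinuous and bounded between positive constants there. Its inverse is therefore a Lipschitz path along which $\frac{d}{dt}\log\|\Phi_tX\|_{q(t),\rho}\geq 0$ holds almost everywhere, which gives exactly the stated time.
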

\begin{proof}
    It follows from Theorem 11 in \cite{BDR20} and Theorem \ref{thm:graphlsi}.
\end{proof}

\begin{proposition}
Suppose $n=2$ and the generator $\cL$ is given by
\begin{align*}
    \cL(X)=\gamma\left(\frac{1}{2}( V^*V X+X V^*V)-V^*XV\right),
\end{align*}
where $V=\sqrt{s_1}E_{12}+\sqrt{s_2}E_{21}$, $0<s_2< s_1$.
We have that 
\begin{align*}
  0\leq \alpha_p(\cL) \leq \frac{\gamma p^2}{4(1-p)} \, \frac{(\sqrt{s_1}-\sqrt{s_2})(s_2 (s_1/s_2)^{1/p}-s_1)}{(-\sqrt{s_1}\log s_2 -\sqrt{s_2} (s_1/s_2)^{1/p}\log s_1 )}.
\end{align*} 
Consequently, $\displaystyle \inf_{0<p<1} \alpha_p(\cL) =0$.
\end{proposition}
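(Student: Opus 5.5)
The plan has two parts: nonnegativity of $\alpha_p(\cL)$, then an explicit test family giving the upper bound, after which letting $p\to0^+$ yields $\inf_{0<p<1}\alpha_p(\cL)=0$.

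\emph{Lower bound.} For $n=2$ the graph has a single edge $e=\{1,2\}$ and $E_{11}+E_{22}=I$. So the compression step in the proof of Lemma \ref{lem:energypositive} is trivial: $\widetilde{X}_{p,e}=X_p$, and no Jensen argument is needed. Writing
\begin{align*}
\frac{4(1-p)}{p^2}\mathcal{E}_{p,\cL}(X)=\gamma\left(\Tr(X_p^{p-1}B_{p,e}X_pB_{p,e}^*)-\Tr(V^*VX_p^p)\right),
\end{align*}
and noting $V^*V=\diag(s_2,s_1)$, I will check that $B_{p,e}=D_\rho^{\frac{1}{2p}}V^*D_\rho^{-\frac{1}{2p}}$ coincides with the matrix $B$ of Lemma \ref{lem:im}, after possibly relabelling the basis so that $C$ matches $V^*V$. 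Lemma \ref{lem:im} then gives $\mathcal{E}_{p,\cL}(X)\ge0$ for every $X>0$. Since $\Ent_{p,\rho}\ge0$, this yields $\alpha_p(\cL)\ge0$. The only care needed is the bookkeeping of which diagonal entry carries $s_1$ versus $s_2$.

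\emph{Upper bound.} I will use a test family degenerating to a rank-one matrix, as in the remark after Proposition \ref{prop:rev1}. Diagonal degenerations make the energy blow up (see the diagonal formula in Lemma \ref{lem:energypositive}), whereas a rank-one $X_p$ with suitably chosen off-diagonal part keeps $\mathcal{E}_{p,\cL}$ finite while $\Ent_{p,\rho}$ stays bounded below. Concretely, I take
\begin{align*}
X_p=|v\rangle\langle v|+\epsilon I,\qquad v=(a,b)^{\mathsf T},
\end{align*}
and compute the expansion \eqref{eq:energyexpansion} in the eigenbasis of $X_p$. The term with eigenvalue pair $(\mu_\ell,\mu_k)=(\|v\|^2,\epsilon)$ carries the coefficient $|\langle v^\perp,B_{p,e}v\rangle|^2$, whose factor $\log(\mu_\ell/\mu_k)\sim\log\epsilon^{-1}$ would otherwise diverge. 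I will therefore choose $v$ so that this coefficient vanishes, i.e.\ $v$ is an eigenvector of $B_{p,e}$. Since $B_{p,e}$ is antidiagonal with positive entries $b_1,b_2$, this gives $v\propto(\sqrt{b_2},\sqrt{b_1})$.

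With this choice, I let $\epsilon\to0$ and compute both quantities explicitly. The energy becomes $\frac{\gamma p^2}{4(1-p)}$ times an expression in $\mu_\ell^p$, $b_1b_2$ and $\Tr(V^*V|v\rangle\langle v|^p)$. The entropy becomes the relative-entropy formula, where for the rank-one limit the $\log D_\rho$ term produces the $-\log s_1$, $-\log s_2$ weights. Normalizing $\rho(X)=1$ and simplifying should give exactly the displayed ratio, with the factors $(\sqrt{s_1}-\sqrt{s_2})$ and $s_2(s_1/s_2)^{1/p}-s_1$ arising from $b_1,b_2$. I expect this simplification to be the main obstacle: the precise normalization of $v$ and the matching with the stated closed form. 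If the eigenvector choice does not match, I will instead optimize the direction of $v$ numerically-in-form over real unit vectors.

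\emph{Conclusion.} As $p\to0^+$, with $K=(s_1/s_2)^{1/p}\to\infty$, the numerator of the bound is asymptotic to $(\sqrt{s_1}-\sqrt{s_2})s_2K$ and the denominator to $\sqrt{s_2}K\log s_1^{-1}$. The fraction therefore stays bounded, while the prefactor $\frac{p^2}{1-p}\to0$. Hence the upper bound tends to $0$, and combined with $\alpha_p(\cL)\ge0$ this gives $\inf_{0<p<1}\alpha_p(\cL)=0$.
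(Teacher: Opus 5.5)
Your proposal is correct and follows the paper's proof. The paper uses $X_\varepsilon=P+\varepsilon I$ with $P$ the projection onto the eigenvector $\vec{u}\propto(s_2^{1/4},s_1^{1/4})^{\mathsf{T}}$ of $V^*$, and $D_\rho^{\frac{1}{2p}}\vec{u}$ is exactly your eigenvector $v$ of $B_{p,e}$, so regularizing in $X$ rather than in $X_p$ does not change the limit, and your choice yields precisely the displayed ratio with no fallback optimization needed. The one point to watch is the nonnegativity step, which the paper leaves implicit: $B_{p,e}$ has entries $\sqrt{s_2}(s_1/s_2)^{\frac{1}{2p}}$ and $\sqrt{s_1}(s_2/s_1)^{\frac{1}{2p}}$, so after swapping the basis vectors it matches Lemma \ref{lem:im} only if the exponent $\frac{1}{p}$ printed there is read as $\frac{1}{2p}$ (with $\frac{1}{p}$ the inequality already fails for diagonal $X$, e.g.\ $s_1=0.8$, $s_2=0.2$, $p=\frac12$, $X=\diag(1,1000)$).
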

\begin{proof}
Let $\displaystyle \vec{u}=\frac{1}{\sqrt{\sqrt{s_1}+\sqrt{s_2}}} \begin{pmatrix}
    s_2^{1/4} \\ s_1^{1/4}
\end{pmatrix}$ be an eigenvector of $V^*$.
Then $V^*\vec{u}=(s_1s_2)^{1/4}\vec{u}$.
Let $P$ be the rank-one projection corresponding to $\vec{u}$.
Suppose that $X_\varepsilon=P+\varepsilon I$ with $\varepsilon>0$.
We have that
\begin{align*}
  \Ent_{p, \rho}(X_\varepsilon)
   \to & -\|D_\rho^{\frac{1}{2p}} P D_\rho^{\frac{1}{2p}}\|_{HS}^p\Tr\left(\frac{D_\rho^{\frac{1}{2p}} P D_\rho^{\frac{1}{2p}}}{\|D_\rho^{\frac{1}{2p}} P D_\rho^{\frac{1}{2p}}\|_{HS}^2}\log D_\rho\right), \\
   \mathcal{E}_{p, \cL}(X_\varepsilon) \to & \frac{\gamma p^2 \|D_\rho^{\frac{1}{2p}} P D_\rho^{\frac{1}{2p}}\|_{HS}^p }{4(1-p)}\left(\sqrt{s_1s_2}-\Tr\left(\frac{D_\rho^{\frac{1}{2p}} P D_\rho^{\frac{1}{2p}}}{\|D_\rho^{\frac{1}{2p}} P D_\rho^{\frac{1}{2p}}\|_{HS}^2}V^*V\right)\right)
\end{align*}
as $\varepsilon\to 0$, we see that the proposition is true.
\end{proof}

\bibliographystyle{abbrv}
\bibliography{kms}
\end{document}